\theoremstyle{plain}
\newtheorem{theorem}{Theorem}
\theoremstyle{definition}
\newtheorem{definition}[theorem]{Definition}
\theoremstyle{plain}
\newtheorem{lemma}[theorem]{Lemma}
\newtheorem{obs}[theorem]{Observation}
\newtheorem{claim}[theorem]{Claim}
\newtheorem{corollary}[theorem]{Corollary}
\newcommand{\OPT}{\mathit{OPT}}
\newcommand{\odd}{\mathrm{odd}}
\newcommand{\cost}{\mathrm{cost}}
\newcommand{\weight}{\mathrm{weight}}
\newcommand{\sav}{\mathrm{sav}}
\title{Partitioning Vectors into Quadruples:\\
Worst-Case Analysis of a Matching-Based Algorithm}
\author{Annette M. C. Ficker\thanks{Faculty of Economics and Business, KU Leuven, Leuven, Belgium. E-mail: annette.ficker@kuleuven.be.} \and
Thomas Erlebach\thanks{Department of Informatics, University of Leicester, Leicester, UK. E-mail: t.erlebach@leicester.ac.uk. Supported by a study leave granted by University of Leicester.} \and
Mat\'{u}\v{s} Mihal\'{a}k\thanks{Department of Data Science and Knowledge Engineering, Maastricht University, Maastricht, The Netherlands. E-mail: matus.mihalak@maastrichtuniversity.nl.} \and
Frits C.R. Spieksma\thanks{Department of Mathematics and Computer Science, Eindhoven University of Technology, Eindhoven, The Netherlands. E-mail: f.c.r.spieksma@tue.nl.}}
\begin{document}
\maketitle

\begin{abstract}
Consider a problem where $4k$ given vectors need to be partitioned into $k$ clusters of four vectors each. A cluster of four vectors is called a {\em quad}, and the cost of a quad is the sum of the component-wise maxima of the four vectors in the quad. The problem is to partition the given $4k$ vectors into $k$ quads with minimum total cost. We analyze a straightforward matching-based algorithm, and prove that this algorithm is a $\frac32$-approximation algorithm for this problem. We further analyze the performance of this algorithm on a hierarchy of special cases of the problem, and prove that, in one particular case, the algorithm is a $\frac54$-approximation algorithm. Our analysis is tight in all cases except one.
\end{abstract}

\section{Introduction}

\emph{Partitioning Vectors into Quadruples} (PQ) is the problem of partitioning $4k$ given nonnegative vectors $v_1, \ldots, v_{4k}$, each consisting of $n$ components, into $k$ clusters, each containing exactly four vectors.
We refer to such a cluster of four vectors as a \emph{quadruple} or a \emph{quad} for short. The cost of a quad $Q=\{v_{i_1}, v_{i_2}, v_{i_3}, v_{i_4}\}$ is the sum of the component-wise maxima of the four vectors in the quad. The goal of the problem is to find a partition of the $4k$ vectors into $k$ quads such that the total cost of all quads is minimum.

We will analyze the following matching-based algorithm, called algorithm $A$, that finds a solution to problem PQ by proceeding in two phases. In the first phase, algorithm $A$ builds a complete, edge-weighted graph $G=(V,E)$ that has a node in $V$ for each vector in the instance (hence $|V|=4k$). The weight of an edge equals the sum of the component-wise maxima of the two vectors whose corresponding nodes span the edge. Now, algorithm $A$ computes a minimum-cost perfect matching $M$ in the complete graph $G$, yielding $2k$ vector pairs.
Let $p_1,\ldots,p_{2k}$ be the $2k$ matched vector pairs corresponding
to the computed matching~$M$.

In the second phase, algorithm $A$ builds a complete, edge-weighted graph $G'=(V',E')$ that has a node in $V'$ for each vector pair $p_i$ found in the first phase ($i=1, \ldots, 2k$; $|V'|=2k$). The weight of an edge equals the sum of the component-wise maxima of the two vector pairs whose corresponding nodes span the edge. Now, algorithm $A$ computes a minimum-cost perfect matching $M'$ in the complete graph $G'$. Each of the $k$ edges of $M'$ matches two vector pairs, which naturally induces a quad. The $k$ quads induced by the edges of $M'$ constitute a solution to the problem.
Clearly, $A$ is a polynomial-time algorithm. A rigorous description can be found in Section~\ref{sec:prelim}. It is not hard to see that algorithm $A$ may fail to find an optimum solution for an instance of the problem, i.e., $A$ is not exact, and we are interested in analyzing how far off algorithm $A$'s output can be from an optimum solution.

In this paper we show that $A$ is a $\frac32$-approximation algorithm for problem PQ, and that this bound is tight. We also show that algorithm $A$ has better approximation guarantees for various special cases of problem PQ. In particular, we show that $A$ is a $\frac54$-approximation algorithm for the special case of PQ where each vector is a $\{0,1\}$-vector containing exactly two ones, and the vectors, when seen as expressing the vertex/edge incidence matrix of a graph, correspond to edges of a simple, connected graph. We give a precise overview of our results in Section~\ref{sec:results}.

The remainder of this section introduces some terminology and discusses related work that motivates our research. Section~\ref{sec:prelim} discusses preliminaries and also states our results. The proofs of the upper bounds on the worst-case ratio of algorithm~$A$ for the problem PQ and its special cases can be found in Section~\ref{sec:upperb}, while Section~\ref{sec:lowerb} contains the lower bound results. We conclude in Section~\ref{sec:conclusion}.

\subsection{Terminology and related literature}

Worst-case analysis is a well-established tool to analyze the quality of solutions found by heuristics. We refer to books by Vazirani~\cite{vazirani} and Williamson and Shmoys~\cite{wilshm} for a thorough introduction to the field. We use the following, standard terminology that applies to minimization problems. In the next definition, $A(I)$ stands for the value of the solution to instance $I$ found by algorithm $A$, while $OPT(I)$ stands for the value of an optimum solution to instance $I$.

\begin{definition}
Algorithm $A$ is an $\alpha$-approximation algorithm for a minimization problem P if for every instance $I$ of problem P: (i) algorithm $A$ runs in polynomial-time, and (ii) $A(I) \leq \alpha \cdot  OPT(I)$. We refer to $\alpha$ as an upper bound on the worst-case ratio of algorithm $A$.
\end{definition}

Different problems in various fields are related to problem PQ, and share some of its characteristics. In addition, algorithm $A$ can often be adjusted to work in a particular setting. We now review related literature and provide a number of such examples.

Onn and Schulman~\cite{onnschulman2001} consider a problem where a given set of vectors in $n$-dimensional space needs to be partitioned in a given number of clusters. The number of vectors in a cluster (its {\em size}) is not specified, and in addition, they assume that the objective function, which is to be maximized, is convex in the sum of the vectors in the same cluster. Their framework contains many different problems with diverse applications, and they show, for their setting, strongly-polynomial time, exact algorithms. This is in contrast to our problem which is NP-hard (cf.\ Section~\ref{sec:complexity}).

Another problem, distinct from, yet related to, our problem, comes from computational biology, and is described in Figuero et al.~\cite{figueroaetal2005}. Here, a component of a vector is a 0 or a 1 or an ``N''. In this setting neither the size of a cluster, nor the number of clusters is fixed; the goal is to find a partition of the set of vectors into a minimum number of clusters while satisfying the condition that a pair of vectors that is in the same cluster can only differ at a component where at least one of them has the value N. They prove hardness of this problem, and analyze the approximation behavior of heuristics for this problem.

Hochbaum and Levin \cite{hoclev/2010} describe a problem in the design of optical networks that is related to our special case where each vector is a $\{0,1\}$-vector containing two ones. In essence, their problem is to cover the edges of a given bipartite graph by a minimum number of 4-cycles. They observe that this problem is a special case of unweighted 4-set cover; they give a $(\frac{13}{10}+\epsilon)$-approximation algorithm (using local search), and analyze the performance of a greedy algorithm for a more general version of the problem. Our problem differs from theirs in the sense that we deal with a partitioning problem, where there is a weight for each set; in addition, our problem does not necessarily have a bipartite structure, nor do our quads need to correspond to 4-cycles.

Our problem is also intimately related to a problem occurring in wafer-to-wafer yield optimization (see, e.g., Reda et al.~\cite{redaetal} for a description). Central in this application is the production of so-called {\em waferstacks}, which can be seen as a set of superimposed wafers. In our context, a wafer can be represented by a vector. A wafer consists of many dies, each of which can be in two states: either functioning, i.e., good (which corresponds to a component in the vector with value `0'), or malfunctioning, i.e., bad (which corresponds to a component in the vector with value '1'). The quality of a waferstack is measured by simply counting the number of components that have only 0's in the wafers contained in the waferstack. The goal is to partition the set of wafers into waferstacks (clusters) such that total quality is as high as possible. In this application, however, there are different types of wafers, and a waferstack needs to consist of one wafer of each type. This would correspond to an a priori given partition of the vectors. In addition, a typical waferstack consists in practice of many, i.e., more than 4, wafers. Dokka et al.~\cite{dokkaetal2014} analyze the worst-case behavior of different algorithms that have as a common feature solving assignment problems repeatedly. The case where there are three types of wafers, and the problem is to find waferstacks that are triples containing one wafer of each type is investigated in Dokka et al.~\cite{dokkaetal2013}; for a particular objective function, they describe a $\frac43$-approximation algorithm.

A special, yet very relevant special case of our problem is one where the edges of a given graph need to be partitioned into subsets each containing four edges (see Section~\ref{sec:prelim} for a precise description). Indeed, from a graph-theoretical perspective, there is quite some interest and literature in partitioning the edge-set of a graph, i.e., to find an edge-decomposition. In fact, edge-decompositions where each cluster has prescribed size have already been studied in e.g. J\"unger et al.~\cite{juengeretal1985}. Thomassen~\cite{thomassen2008} studies the existence of edge-decompositions into paths of length 4, and Barat and Gerbner~\cite{barat+gerbner2014} even study edge-decompositions where each cluster is isomorphic to a tree consisting of 4 edges.

\section{Preliminaries}
\label{sec:prelim}

\subsection{About problem PQ: special cases and complexity}
\label{sec:complexity}

We first observe that,
for the analysis of algorithm $A$, we can restrict ourselves to instances of problem PQ where the $4k$ vectors are $\{0,1\}$ vectors. Notice that we call a vector {\em nonnegative} when each of its entries is nonnegative.

\begin{lemma}
\label{lem:reduction01}
Each instance of problem PQ with arbitrary (rational) nonnegative vectors can be
reduced to an instance of problem PQ with $\{0,1\}$ vectors.
\end{lemma}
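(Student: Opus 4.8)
The plan is to turn an arbitrary instance into a $\{0,1\}$-instance in two steps — a global scaling to clear denominators, followed by a per‑component ``layering'' — in such a way that the cost of \emph{every} subset of vectors (hence in particular the cost of every quad, $OPT$ itself, and the weight of every edge in the graphs $G$ and $G'$ built by algorithm $A$) is preserved up to one common positive factor. Since that factor is the same for $OPT(I)$ and for the value of the solution returned by $A$, the ratio $A(I)/OPT(I)$ is unchanged, which is exactly what is needed to restrict the worst‑case analysis to $\{0,1\}$-instances.

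First I would observe that multiplying all $4k$ vectors by a common positive integer $c$ multiplies the cost of every set of vectors by $c$; in particular it multiplies $OPT$ by $c$, and it multiplies every edge weight in $G$ and in $G'$ by $c$, so the families of minimum-cost perfect matchings in $G$ and $G'$ are unchanged, algorithm $A$ can make exactly the same choices, and the partition it returns is identical while its value is multiplied by $c$. Taking $c$ to be (a multiple of) the least common multiple of the denominators of all entries, we may henceforth assume that all entries are nonnegative integers.

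Next, for an integer instance I would replace each component $j$ by $M_j := \max_i v_{i,j}$ new components, indexed by $s \in \{1,\dots,M_j\}$, where the entry of vector $v_i$ in the new component $(j,s)$ is $1$ if $v_{i,j}\ge s$ and $0$ otherwise. For any set $S$ of vectors and any $s$, the maximum over $S$ of the new component $(j,s)$ equals $1$ iff $\max_{v\in S} v_j \ge s$, so $\sum_{s=1}^{M_j}[\,\max_{v\in S}v_j\ge s\,]=\max_{v\in S}v_j$; summing also over $j$ shows that the cost of every set $S$ of vectors is identical in the integer instance and in the new $\{0,1\}$-instance. Consequently $OPT$ is unchanged, every edge weight in $G$ and $G'$ is unchanged, so algorithm $A$ again makes the same choices and outputs a partition of the same cost, completing the reduction.

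All steps here are routine; the only points that need a little care are making precise the claim about algorithm $A$ — namely that, because the cost/weight functions are preserved exactly after scaling, the set of minimum-cost perfect matchings in $G$ and in $G'$ is unchanged, so whichever tie-breaking rule $A$ uses it can reproduce the same execution on the transformed instance — and noting that the number of components of the new instance need not be polynomially bounded in the size of the original. The latter is not an obstacle for us, since the lemma is used only to restrict the worst-case analysis to $\{0,1\}$-instances; if a polynomial-size reduction were wanted, one could instead use only the distinct values $a_1<\dots<a_t$ occurring in component $j$ and create $a_s-a_{s-1}$ copies of the $s$-th threshold component (with $a_0:=0$), but this refinement is not needed here.
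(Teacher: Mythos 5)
Your proof is correct and takes essentially the same route as the paper: scale by the least common denominator to reduce to integers, then replace each component $j$ by $M_j$ threshold components (your indicator $[v_{i,j}\ge s]$ yields exactly the paper's ``$x$ ones followed by $M_j-x$ zeros''), and observe that set costs are preserved. The extra remarks you add — that edge weights in $G$ and $G'$ are preserved so $A$ can replicate its execution, and that the blow-up need not be polynomial — are correct and in fact appear (in condensed form) in the paper's surrounding discussion.
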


\begin{proof}
If the components of the input vectors are non-negative integers, we can
reduce the problem to the problem with $\{0,1\}$-vectors as follows: Consider any
component $i$, $1 \leq i \leq n$, and let $M_i$ be the largest value in the
$i$-th component of any input vector. Then replace in each vector component $i$ by $M_i$ components, and a vector with value $x$ in
the $i$-th component is replaced by a vector that has
$x$ ones followed by $M_i-x$ zeros in the $M_i$ components that replace component~$i$. The resulting vectors are
$\{0,1\}$-vectors, and the cost of any set of
original vectors is the same as the cost of the corresponding
set of modified vectors.
If the input vectors have non-negative rational values, we
first multiply all vectors by the lowest common denominator
of all the rational numbers to make all vector components
integers, and then use the reduction described above.
\end{proof}

This shows that for a worst-case analysis of algorithm $A$, it is sufficient to consider $\{0,1\}$-vectors only. Indeed, any worst-case ratio of $A$ shown to hold for $\{0,1\}$-vectors holds, using the argument of Lemma~\ref{lem:reduction01}, for arbitrary rational nonnegative vectors.
However, the reduction described in the proof of Lemma~\ref{lem:reduction01} is not polynomial. We only need the lemma for the purpose of the analysis; and of course, algorithm $A$ can work directly with the original input vectors.

Thus, from hereon we restrict ourselves, without loss of generality, to the case of binary vectors. There are various special cases of PQ that are of independent interest. We will describe the particular special case in brackets following `PQ'; we distinguish the following special cases.
\begin{itemize}
    \item Problem PQ$(\#1 \in \{1,2\})$. The case where each vector contains either one or two 1's; all other components have value 0. It will turn out that, at least in terms of the worst-case behavior of algorithm $A$, this special case displays the same behavior as the general problem PQ.
    \item Problem PQ$(\#1=2)$. The case where each binary vector contains exactly two 1's. Instances of this type can be represented by a multi-graph $F$ with $n$ nodes, each node corresponding to a component of a vector. Each vector is then represented by an edge spanning the two nodes that correspond to components with value 1. Of course, now a quad can be seen as a set of four edges, and its cost equals the number of nodes in the subgraph induced by these four edges.
    \item Problem PQ$(\#1=2, \mbox{distinct})$. The case where the graph $F$ is a simple graph. Equivalently, this means that each vector contains exactly two 1's and the vectors are pairwise distinct.
    \item Problem PQ$(\#1=2, \mbox{distinct, connected})$. We distinguish a further special case by demanding that the graph $F$ is also connected.

\end{itemize}

Clearly, the special cases are ordered, in the sense that each next one is a special case of its predecessor.

Although our interest is on the worst-case behavior of algorithm $A$, it is relevant to establish the computational complexity of problem PQ. We prove that even its special case PQ$(\#1=2, \mbox{distinct, connected})$ is NP-hard. This fact shows that no polynomial-time algorithm for problem PQ can be exact, unless P=NP.

\begin{theorem}
\label{complexity}
PQ$(\#1=2, \mbox{distinct, connected})$ is NP-hard.
\end{theorem}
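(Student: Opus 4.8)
The plan is to reduce an NP-complete partitioning problem to the question of whether $\OPT$ attains its trivial lower bound. First I would record the relevant structure of the graph version. A quad consists of four \emph{distinct} edges of the simple graph $F$, and four distinct edges of a simple graph always span at least four vertices; since the cost of a quad equals the number of vertices it spans, every quad costs at least $4$, so $\OPT(F)\ge 4k$ when $F$ has $4k$ edges. Equality $\OPT(F)=4k$ holds exactly when $E(F)$ can be partitioned into $k$ quads each spanning precisely four vertices. The only simple graphs on four vertices with four edges are the $4$-cycle $C_4$ and the ``paw'' (a triangle with a pendant edge), so $\OPT(F)=4k$ iff $E(F)$ decomposes into $k$ copies of $C_4$ or paws; if, in addition, $F$ is triangle-free, this is simply asking for a $C_4$-decomposition of $F$. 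Consequently, it suffices to construct in polynomial time a \emph{simple, connected, triangle-free} graph $F$ with $4k$ edges such that deciding $\OPT(F)=4k$ is NP-hard.

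I would obtain such a construction by reducing from Exact Cover by $3$-Sets (ground set $U$ with $|U|=3q$ and a family $\mathcal S$ of $3$-element subsets): build $F$ from an \emph{element gadget} for every $u\in U$ and a \emph{set gadget} for every $S\in\mathcal S$, wired together so that the resulting graph is simple, connected and bipartite (hence triangle-free). The set gadget for $S=\{a,b,c\}$ is designed to admit exactly two types of $C_4$-decomposition: a ``not selected'' mode in which its edges form $4$-cycles entirely internally, and a ``selected'' mode in which three of its $4$-cycles each borrow one edge from the element gadgets of $a$, $b$ and $c$; the element gadget of $u$ is designed so that it can be completed to $4$-cycles if and only if exactly one set gadget of a set containing $u$ is in ``selected'' mode. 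A $C_4$-decomposition of $F$ then corresponds precisely to a subfamily of $\mathcal S$ covering each element exactly once, i.e.\ an exact cover. (A shorter route is to invoke the NP-completeness of $C_4$-decomposition, which follows from the resolution of Holyer's conjecture by Dor and Tarsi, and to check that the hardness survives restriction to connected triangle-free graphs --- but I would expect the direct gadget reduction to give a cleaner, self-contained argument.)

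A routine technical point is connectivity. If the natural construction splits into components $F_1,\dots,F_t$, I would link them into a single graph by attaching, at one vertex of each $F_i$, a ``rope'': a chain of $4$-cycles in which any two consecutive cycles share exactly one vertex. Such a rope is bipartite, its unique $C_4$-decomposition is the obvious one, and --- since it meets the rest of the graph only at cut vertices --- it introduces no new $4$-cycle. Hence the value of the glued graph equals $\sum_i\OPT(F_i)$ plus four times the number of rope cycles, so the threshold shifts by a known constant and the equivalence with exact cover is preserved; the same device pads the number of edges to a multiple of $4$ if needed. Combined with the structural observation, this establishes that PQ$(\#1{=}2,\text{distinct},\text{connected})$ is NP-hard, and therefore that problem PQ is NP-hard.

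The step I expect to be the main obstacle is the design and the \emph{soundness} analysis of the set and element gadgets: one must prove that there is \emph{no} partition of $E(F)$ into cost-$4$ quads other than those arising from genuine exact covers. This amounts to showing that no ``spurious'' $4$-cycle mixes edges of different gadgets, and --- if $F$ were not kept triangle-free --- that no paw yields an unintended cheap quad. Keeping the whole construction bipartite disposes of the paw case at once, after which the remaining argument is a careful but essentially finite case analysis of how the few edges incident to each gadget can be grouped into $4$-cycles.
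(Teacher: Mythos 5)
Your structural observations are exactly right and match the paper's: in a simple graph four distinct edges span at least four vertices, so every quad costs at least $4$; equality forces a $C_4$ or a paw; and restricting to bipartite (hence triangle-free) graphs eliminates the paw, so $\OPT=4k$ iff the edge set decomposes into $4$-cycles. The ``shorter route'' you mention in passing --- invoking known NP-hardness of $C_4$-decomposition --- is precisely what the paper does: it cites Holyer (via Hochbaum--Levin) for the NP-completeness of deciding whether the edge set of a \emph{connected bipartite} graph partitions into copies of $C_4$, and then the reduction is immediate. You should have committed to that route; since the hardness is already known for connected bipartite inputs, no ``rope'' device or padding is needed. (Also a small misattribution: the $C_4$ case is due to Holyer himself in 1981; Dor and Tarsi later resolved Holyer's general conjecture on $H$-decompositions.)

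As written, however, your preferred argument --- a self-contained gadget reduction from Exact Cover by $3$-Sets --- has a genuine gap: the element and set gadgets, together with the soundness proof that no spurious $C_4$ crosses gadget boundaries, are the entire substance of such a reduction, and you explicitly defer them (``the step I expect to be the main obstacle''). Designing gadgets for $C_4$-decomposition is notoriously delicate (it is essentially reproving Holyer's theorem), so what you have is a proof outline, not a proof. Either supply the gadgets with the case analysis, or, more sensibly, replace the whole second half with the one-line citation to Holyer's result on connected bipartite $C_4$-decomposition and the equivalence you already established.
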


\begin{proof}
As noted by Hochbaum and Levin~\cite{hoclev/2010}, the following problem was shown to be NP-complete by Holyer~\cite{Holyer/1981}. Given a connected, bipartite graph $G=(V,E)$, where $|E|=4k$ for some $k \in \mathcal{N}$, does there exist a partition of the edge-set $E$ such that each set is isomorphic to a cycle on 4 nodes, i.e., a $C_4$? We refer to this decision problem as EP$C_4$.

Given an instance of EP$C_4$, we build the following instance of PQ$(\#1=2, \mbox{distinct}, \\ \mbox{connected})$.
There are $4k$ vectors, each of length $|V|$. The $|V|$ components of each vector correspond to the nodes in $V$. Each edge in $E$ gives rise to a vector whose entries are 0, except in the two components that correspond to the nodes spanning the edge; these components have value 1.
This specifies all $4k$ vectors.
The question is: does there exist a solution of this instance of PQ$(\#1=2, \mbox{distinct, connected})$ with cost at most $4k$?

We claim that an instance of EP$C_4$ is a yes-instance if and only if there exists a solution to PQ$(\#1=2, \mbox{distinct, connected})$ with cost at most $4k$. Indeed, if the instance of EP$C_4$ is a yes-instance, the four edges of each $C_4$ directly correspond to four vectors making up a quadruple with cost of 4, leading to a total cost of $4k$.

Consider now a solution to PQ$(\#1=2, \mbox{distinct, connected})$, i.e., a set of $k$ quads, with total cost $4k$. Since any four vectors are pairwise distinct, it follows that the four edges corresponding to each quadruple must span at least four vertices, i.e., each quadruple must have cost at least 4. And since the total cost equals $4k$, it follows that each quadruple must have cost exactly 4. Finally, since the only possibility for four edges to span four nodes in a simple bipartite graph is a $C_4$, it follows that a partition into $C_4$'s must exist.
\end{proof}

\subsection{About algorithm A: notation and properties}

Recall that, in our analysis, we may assume that all vectors are $\{0,1\}$-vectors. Let $v_i\vee v_j$ denote the vector that is the component-wise maximum of the two vectors $v_i$ and $v_j$, i.e.:
\[
v_i\vee v_j = (\mbox{max}(v_{i,1}, v_{j,1}), \mbox{max}(v_{i,2}, v_{j,2}), \ldots, \mbox{max}(v_{i,n}, v_{j,n})).
\]
Here, $v_{i,\ell}$ denotes the $\ell$-th component of vector $v_i$ ($\ell=1, \ldots,n$).
We use $|v_i|$ to denote the number of ones in vector $v_i$ ($1 \leq i \leq 4k$), i.e.:
\[
|v_i| = \sum_{\ell=1}^n v_{i,\ell}\,.
\]
The cost of a quad $Q=\{v_1,v_2,v_3,v_4\}$ is then $\cost(Q)=|v_1\vee v_2\vee v_3 \vee v_4|$. For a pair $p=\{v_1,v_2\}$ of vectors, we set $\cost(p)=|v_1\vee v_2|$.

For two vectors $v_i$ and $v_j$, let $\sav(v_i,v_j)$ (the ``savings'' made by combining $v_i$ and $v_j$) denote the number of common ones in $v_i$ and $v_j$, i.e.:
\[
\sav(v_i,v_j) = \sum_{\ell=1}^n \mbox{min}(v_{i,\ell}, v_{j,\ell}).
\]
If $p=\{v_1,v_2\}$ and $p'=\{v_3,v_4\}$ are pairs of vectors, we also write
$\sav(p,p')$ for $\sav(v_1\vee v_2,v_3\vee v_4)$.

The following observation concerning two $\{0,1\}$-vectors $u$ and $v$ is immediate:

\begin{obs}
\label{obs:basic}
$|u|+|v| = \sav(u,v) + |u \vee v|$.
\end{obs}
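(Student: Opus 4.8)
The plan is to establish the identity coordinate by coordinate and then sum over the $n$ coordinates. First I would fix a coordinate $\ell \in \{1,\ldots,n\}$ and observe that for the two values $u_\ell, v_\ell \in \{0,1\}$ one has $u_\ell + v_\ell = \min(u_\ell, v_\ell) + \max(u_\ell, v_\ell)$. This is the elementary fact that $a + b = \min(a,b) + \max(a,b)$ for any two real numbers (whichever of the two is smaller is the minimum, the other is the maximum, and their sum is unchanged); in the binary setting it can alternatively be verified by simply checking the four cases $(u_\ell,v_\ell) \in \{(0,0),(0,1),(1,0),(1,1)\}$, in each of which both sides agree.

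Next I would sum this coordinate-wise identity over $\ell = 1,\ldots,n$. The left-hand side becomes $\sum_{\ell=1}^n (u_\ell + v_\ell) = |u| + |v|$ by the definition of $|\cdot|$. On the right-hand side, $\sum_{\ell=1}^n \min(u_\ell, v_\ell)$ is precisely $\sav(u,v)$ by definition, and $\sum_{\ell=1}^n \max(u_\ell, v_\ell) = |u \vee v|$ since $u \vee v$ is by definition the component-wise maximum of $u$ and $v$. Putting the two sides together yields the claimed equality $|u| + |v| = \sav(u,v) + |u \vee v|$.

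I do not expect any genuine obstacle here: the observation is just the inclusion–exclusion principle applied to the supports of two $\{0,1\}$-vectors, with $\sav(u,v)$ playing the role of the size of the intersection and $|u \vee v|$ the size of the union. The only point requiring a modicum of care is lining up the three sums above with the definitions of $|\cdot|$, $\sav(\cdot,\cdot)$, and $\vee$ given earlier in this section.
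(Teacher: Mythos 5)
Your coordinate-wise argument is correct and is essentially the same as the paper's proof, which partitions the $n$ coordinates into the four configuration classes $(u_\ell,v_\ell)\in\{(1,0),(0,1),(1,1),(0,0)\}$, counts their sizes, and verifies the identity on those counts; you even note the four-case check yourself. The only cosmetic difference is that you sum the scalar identity $a+b=\min(a,b)+\max(a,b)$ over $\ell$, whereas the paper groups equal coordinates first.
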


\begin{proof}
Recall that we may assume that all vectors are $\{0,1\}$-vectors (Lemma~\ref{lem:reduction01}).
Let us partition the set of components that make up the vectors $u$ and $v$ into four sets:
\begin{itemize}
\item Those with a `1' in $u$, and a `0' in $v$: say there are $k_{u,\bar{v}}$ of them.
\item Those with a `0' in $u$, and a `1' in $v$: say there are $k_{\bar{u},v}$ of them.
\item Those with a `1' in $u$, and a `1' in $v$: say there are $k_{u,v}$ of them.
\item Those with a `0' in $u$, and a `0' in $v$: say there are $k_{\bar{u},\bar{v}}$ of them.
\end{itemize}
Obviously, since $|u| = k_{u,\bar{v}} + k_{u,v}$, $|v| = k_{\bar{u},v} + k_{u,v}$, $\sav(u,v) = k_{u,v}$, and $|u \vee v| = k_{u,\bar{v}} +  k_{\bar{u},v} + k_{u,v}$, the claim follows.
\end{proof}

Let us revisit the description of Algorithm $A$. In the first phase, it computes a minimum-cost perfect matching $M$ in the complete graph $G$ on the given $4k$ vectors, where the weight
of the edge between vectors $v_i$ and $v_j$ is set to $|v_i \vee v_j|$.
Let $p_1,\ldots,p_{2k}$ be the $2k$ matched vector pairs corresponding
to the computed matching~$M$, and let $\cost(M)$ denote the cost of the matching~$M$. For $1\le i\le 2k$, let $v_{i}^1$ and $v_{i}^2$ be the two vectors
in the vector pair~$p_i$, and let $v_i'=v_{i}^1\vee v_{i}^2$.

In the second phase, Algorithm $A$ computes a minimum-cost perfect matching $M'$ in the complete graph $G'$ on the $2k$ vector pairs, where the weight of the edge between pairs $p_i$ and $p_j$ is set to $|v_i'\vee v_j'|$.
The quads corresponding to $M'$ are output as a solution.
Let $\cost(M')$ be the cost of matching~$M'$.

\begin{obs}
\label{obs:mmo}
$A(I)=\cost(M')$ and $\cost(M')\le \cost(M)$.
\end{obs}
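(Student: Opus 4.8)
The statement has two parts. The first part, $A(I)=\cost(M')$, is essentially a matter of unwinding definitions: the output of algorithm $A$ is the set of $k$ quads induced by the matching $M'$, and the cost of a quad induced by matching pairs $p_i$ and $p_j$ is $|v_i^1\vee v_i^2\vee v_j^1\vee v_j^2|=|v_i'\vee v_j'|$, which is exactly the weight assigned to edge $\{p_i,p_j\}$ in $G'$. Summing over the $k$ edges of $M'$ gives $A(I)=\cost(M')$. I would state this in one or two sentences.

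The second part, $\cost(M')\le\cost(M)$, is the substantive claim, and the plan is to exhibit a \emph{specific} perfect matching in $G'$ whose cost is at most $\cost(M)$; since $M'$ is a minimum-cost perfect matching in $G'$, this suffices. The natural candidate is obtained by pairing up the $2k$ vector pairs $p_1,\ldots,p_{2k}$ arbitrarily—say $p_{2i-1}$ with $p_{2i}$ for $i=1,\ldots,k$. The cost of this matching in $G'$ is $\sum_{i=1}^k |v_{2i-1}'\vee v_{2i}'|$. Using $|v_{2i-1}'\vee v_{2i}'|\le |v_{2i-1}'|+|v_{2i}'|$ (which follows from Observation~\ref{obs:basic}, since $\sav\ge 0$), this is at most $\sum_{i=1}^k\bigl(|v_{2i-1}'|+|v_{2i}'|\bigr)=\sum_{j=1}^{2k}|v_j'|=\sum_{j=1}^{2k}|v_j^1\vee v_j^2|=\cost(M)$. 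Hence $\cost(M')\le\cost(M)$.

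There is essentially no obstacle here: the only thing to be careful about is the direction of the inequality in Observation~\ref{obs:basic}—we need $|u\vee v|\le |u|+|v|$, which holds because $\sav(u,v)\ge 0$—and the bookkeeping that the sum of $|v_j'|$ over all $2k$ pairs equals $\cost(M)$ by definition of the edge weights in $G$. One could alternatively phrase the argument by noting that any grouping of the pairs into quads can only lose (never gain) relative to keeping them as separate pairs, because merging two vectors can only cause cancellations of common ones; this is the conceptual content, and the arbitrary pairing argument above makes it rigorous.
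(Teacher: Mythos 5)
Your proof is correct. Note that the paper leaves Observation~\ref{obs:mmo} unproven, treating both parts as evident from the construction; your argument is a reasonable filling-in of that gap, and your one-line justification of $A(I)=\cost(M')$ is exactly the intended unwinding. For the inequality, the paper's implicit route is slightly different from yours: it goes via the identity $\cost(M') = \cost(M) - \weight(M')$, which is established a few lines later in Lemma~\ref{lem:sav} (Equation~(\ref{eq:sav})) by applying Observation~\ref{obs:basic} term-by-term to \emph{every} matching in $G'$, so the inequality follows at once from $\weight(M')\ge 0$. You instead exhibit one particular perfect matching of $G'$ (pairing $p_{2i-1}$ with $p_{2i}$), bound its cost by $\cost(M)$ using subadditivity $|u\vee v|\le|u|+|v|$ (the same Observation~\ref{obs:basic}, $\sav\ge 0$), and invoke minimality of $M'$. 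Both proofs rest on the same elementary fact; the paper's version yields the stronger exact identity, which it then uses for Corollary~\ref{cor:sav}, whereas yours establishes just the inequality, which is all the observation claims. Either is fine here.
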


\begin{lemma}
\label{lem:sav}
In the first phase of algorithm $A$, we can equivalently set the weight of the edge between $v_i$ and $v_j$ to be $-\sav(v_i,v_j)$. Similarly, in the second phase of algorithm $A$, we can set the weight of the edge between $p_i$ and $p_j$ to be $-\sav(v_i',v_j')$.
\end{lemma}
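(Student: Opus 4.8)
The plan is to reduce everything to Observation~\ref{obs:basic} together with the elementary fact that a perfect matching covers every vertex exactly once, so that replacing the edge weights only shifts the objective by an additive constant that does not depend on the chosen matching.

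For the first phase, I would start from Observation~\ref{obs:basic}, which gives $|v_i\vee v_j| = |v_i| + |v_j| - \sav(v_i,v_j)$ for every pair of vectors. Given any perfect matching $N$ of $G$, summing this identity over the $2k$ edges of $N$ and using that each of the $4k$ vectors is an endpoint of exactly one edge of $N$ yields
\[
\cost(N) \;=\; \sum_{\ell=1}^{4k} |v_\ell| \;-\; \sum_{\{v_i,v_j\}\in N} \sav(v_i,v_j).
\]
Since $\sum_{\ell=1}^{4k}|v_\ell|$ is a constant independent of $N$, a perfect matching minimizes the total $|v_i\vee v_j|$-weight if and only if it maximizes $\sum_{\{v_i,v_j\}\in N}\sav(v_i,v_j)$, equivalently if and only if it minimizes the total $-\sav(v_i,v_j)$-weight. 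Hence the set of minimum-cost perfect matchings is the same under the two weightings, and algorithm~$A$ may equivalently use the weights $-\sav(v_i,v_j)$ in its first phase.

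For the second phase I would repeat the argument verbatim, now with the $2k$ vector pairs $p_1,\dots,p_{2k}$ as the ground set and the vectors $v_i' = v_i^1\vee v_i^2$ playing the role of the $v_i$: for any perfect matching $N'$ of $G'$, each pair $p_i$ is covered exactly once, so applying Observation~\ref{obs:basic} to $v_i'$ and $v_j'$ and summing over the edges of $N'$ gives
\[
\sum_{\{p_i,p_j\}\in N'}|v_i'\vee v_j'| \;=\; \sum_{i=1}^{2k}|v_i'| \;-\; \sum_{\{p_i,p_j\}\in N'}\sav(v_i',v_j'),
\]
and $\sum_{i=1}^{2k}|v_i'|$ is again a constant independent of $N'$. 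There is essentially no obstacle; the only point that needs to be stated explicitly is that the shift term is the \emph{same} constant for every perfect matching, which is immediate because a perfect matching partitions the ground set, so each $v_\ell$ (resp.\ each $v_i'$) contributes its $|\cdot|$ value exactly once regardless of which matching is selected.
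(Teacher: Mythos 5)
Your proof is correct and follows essentially the same route as the paper's: apply Observation~\ref{obs:basic} edge-by-edge, sum over the perfect matching, and observe that $\sum_\ell|v_\ell|$ (resp.\ $\sum_i|v_i'|$) is a constant shift independent of the matching because each vertex is covered exactly once. The only cosmetic difference is that you make this constancy argument explicit where the paper leaves it implicit.
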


\begin{proof}
For the first phase, it follows from Observation~\ref{obs:basic} that the cost of any perfect matching $M$ can be written as:
\[
\cost(M) = \sum_{(v_i,v_j)\in M} |v_i \vee v_j| = \sum_{i=1}^{4k}|v_i| - \sum_{(v_i,v_j)\in M} \sav(v_i,v_j).
\]
Hence, finding a matching $M$ that minimizes
$\sum_{(v_i,v_j)\in M} |v_i \vee v_j|$ is equivalent to finding
a matching $M$ that minimizes $\sum_{(v_i,v_j)\in M} -\sav(v_i,v_j)$.

In the second phase, the cost of any perfect matching $M'$ is
\begin{eqnarray}
\nonumber
\cost(M') = \sum_{(v_i',v_j')\in M'} |v_i' \vee v_j'| =
\sum_{(v_i',v_j')\in M'} (|v_i'|+|v_j'|-\sav(v_i',v_j'))= \\
\label{eq:costM'}
\sum_i |v_i'| -\sum_{(v_i',v_j')\in M'} \sav(v_i',v_j')  =
\cost(M) - \sum_{(v_i',v_j')\in M'} \sav(v_i',v_j').
\end{eqnarray}

Therefore, finding a matching $M'$ that minimizes
$\sum_{(v_i',v_j')\in M'} |v_i' \vee v_j'|$ is equivalent to finding
a matching $M'$ that minimizes $\sum_{(v_i',v_j')\in M'} -\sav(v_i',v_j')$.
\end{proof}

Let $\mbox{weight}(M')$ denote the total savings of the perfect matching $M'$, i.e.,
\[
\mbox{weight}(M') = \sum_{(v_i',v_j')\in M'} \sav(v_i',v_j').
\]
Then, it follows from Equation~(\ref{eq:costM'}) that
\begin{equation}
\label{eq:sav}
\cost(M')=\cost(M)-\sum_{(v_i',v_j')\in M'} \sav(v_i',v_j') = \cost(M) - \weight(M').
\end{equation}

\noindent
Observation~\ref{obs:mmo} and Equation (\ref{eq:sav}) imply:
\begin{corollary}
\label{cor:sav}
$A(I)=\cost(M)-\weight(M')$.
\end{corollary}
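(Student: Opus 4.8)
The plan is simply to chain together the two facts established immediately above the corollary. First I would recall Observation~\ref{obs:mmo}, which states that the value produced by algorithm $A$ on instance $I$ is exactly the cost of the second-phase matching, i.e.\ $A(I)=\cost(M')$. Next I would invoke Equation~(\ref{eq:sav}), which was derived from Equation~(\ref{eq:costM'}) in the proof of Lemma~\ref{lem:sav} together with the definition of $\weight(M')$ as the total savings of $M'$; this gives $\cost(M')=\cost(M)-\weight(M')$. Substituting the second identity into the first yields $A(I)=\cost(M)-\weight(M')$, which is precisely the claim.

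The only point that needs a moment's care is that both ingredients must be read with respect to the \emph{same} pair of matchings. This is automatic: both Observation~\ref{obs:mmo} and Equation~(\ref{eq:sav}) are formulated in terms of the matching $M$ computed in the first phase and the matching $M'$ computed in the second phase of algorithm $A$, so the substitution is legitimate and no reindexing or re-optimization is involved. There is no case analysis, no inequality chasing, and no appeal to optimality of $M$ or $M'$ beyond what is already baked into Observation~\ref{obs:mmo}.

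Consequently I expect no real obstacle in this step: the substantive work — rewriting a matching's cost in terms of savings via Observation~\ref{obs:basic}, and the telescoping computation $\sum_i|v_i'| = \cost(M)$ inside Lemma~\ref{lem:sav} — has already been carried out. The corollary is then a one-line deduction, and I would present it as such.
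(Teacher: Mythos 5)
Your proposal matches the paper's argument exactly: the paper presents this corollary as an immediate consequence of Observation~\ref{obs:mmo} (which gives $A(I)=\cost(M')$) and Equation~(\ref{eq:sav}) (which gives $\cost(M')=\cost(M)-\weight(M')$), with no further proof offered. Your additional remark about the two ingredients referring to the same pair of matchings is a sensible sanity check but, as you note, automatic from the setup.
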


In view of this corollary, it follows that if we can show that $\cost(M)\le B$ and
$\weight(M')\ge S$ for some bounds $B$ and $S$, we can conclude
that $A(I)\le B-S$.

Two vectors $u$ and $v$ are {\em identical} when $u=v$, and a pair of identical vectors is called an {\em identical pair}. In the following we show that among the set of minimum-cost perfect matchings, there is one that contains a maximum number of identical pairs.

\begin{lemma}
\label{lem:identical}
There is a minimum-cost perfect matching in $G$, as well as in $G'$, that contains
a maximum number of identical pairs.
\end{lemma}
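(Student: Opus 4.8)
The plan is to prove this by an exchange argument: starting from an arbitrary minimum-cost perfect matching, I repeatedly replace edges to increase the number of identical pairs without increasing the cost, until no further improvement is possible. Concretely, let $M$ be a minimum-cost perfect matching in $G$ that maximizes the number of identical pairs among all minimum-cost matchings, and suppose for contradiction that some minimum-cost matching $M^\star$ has strictly more identical pairs. I would then look at $M \triangle M^\star$, the symmetric difference, which decomposes into alternating cycles and paths; since both are perfect matchings, it decomposes into even alternating cycles. The goal is to find one such cycle along which switching from the $M$-edges to the $M^\star$-edges does not increase the cost but does not decrease the number of identical pairs — or better, to argue more directly.

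A cleaner route, which I would try first, is the following local argument. Suppose $u = v$ are identical vectors that are not matched to each other in $M$; say $M$ matches $u$ with $a$ and $v$ with $b$. I want to show that replacing the edges $\{u,a\},\{v,b\}$ with $\{u,v\},\{a,b\}$ does not increase the cost. Using Observation~\ref{obs:basic} (equivalently, working with savings via Lemma~\ref{lem:sav}), the change in cost is
\[
\bigl(|u\vee v| + |a\vee b|\bigr) - \bigl(|u\vee a| + |v\vee b|\bigr).
\]
Since $u=v$ we have $|u\vee v| = |u|$. The key inequality I need is that for any vectors, $|u| + |a \vee b| \le |u \vee a| + |u \vee b|$ (after substituting $v=u$); in savings form this reads $\sav(a,b) \ge \sav(u,a) + \sav(u,b) - |u|$, which should follow from the component-wise definitions by a short case analysis on each coordinate (a coordinate contributes to the right side only if $u$ has a $1$ there, in which case it contributes at most $1$ to each term and the inequality is easily checked per-coordinate, analogous to the proof of Observation~\ref{obs:basic}). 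This shows the swap never increases cost, so after the swap we still have a minimum-cost matching, and it has at least one more identical pair (the pair $\{u,v\}$), with no previously identical pair destroyed. Iterating yields a minimum-cost matching with the maximum possible number of identical pairs.

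I expect the main obstacle to be making the per-coordinate inequality fully rigorous and, more importantly, handling the bookkeeping when $a$ or $b$ is itself part of an identical pair already matched by $M$: the swap must not destroy an existing identical pair. I would handle this by choosing the swap carefully — if $u=v$ are both matched to non-identical partners, the swap as above strictly increases the count; if one of them, say $u$, is already matched to an identical partner $u'=u$, then $\{u,u'\}$ is already an identical pair and there is nothing to fix for $u$, so we may assume no vector in an unresolved identical class is matched identically, and then the swap involves only "free" endpoints and is safe. A symmetric argument, applied to $G'$ with vector pairs $v_i'$ in place of vectors and using Lemma~\ref{lem:sav} for the second phase, handles the matching $M'$ in $G'$; the only adjustment is that "identical" now means $v_i' = v_j'$ as component-wise-maxima vectors, and the same inequality applies verbatim since it was stated for arbitrary $\{0,1\}$-vectors.
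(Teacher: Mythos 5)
Your proposal is correct and uses essentially the same exchange argument as the paper: if identical $u=v$ are matched to $a$ and $b$ respectively, swap to $\{u,v\}$ and $\{a,b\}$, and show $|u\vee a|+|v\vee b|\ge |u\vee v|+|a\vee b|=|u|+|a\vee b|$. Your per-coordinate verification of $\sav(a,b)\ge\sav(u,a)+\sav(u,b)-|u|$ is equivalent to the paper's derivation via Observation~\ref{obs:basic}, which writes $|u\vee a|+|v\vee b|=\sav(u\vee a,v\vee b)+|(u\vee a)\vee(v\vee b)|\ge|u|+|u\vee a\vee b|\ge|u|+|a\vee b|$; your extra care about not destroying existing identical pairs during the swap is a small but legitimate point of rigor that the paper leaves implicit.
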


\begin{proof}
Assume that, in a minimum-cost perfect matching, there are two identical vectors $u$ and $v$ that are not matched to each other; instead, let $u$ be matched to some vector $a$, and $v$ be matched to some vector $b$. We need to prove that, when $u=v$, the cost of the pairs $\{u,a\}$ and $\{v,b\}$ is at least as large as the cost of the pairs $\{u,v\}$ and $\{a,b\}$.

\begin{eqnarray*}
|u \vee a| + |v \vee b| & = &
\sav((u \vee a),(v \vee b)) + |(u \vee a) \vee (v \vee b)|\\
& \geq & |u| + |u \vee a \vee b|\\
& \geq & |u| + |a \vee b|.
\end{eqnarray*}
The first equality holds by Observation~\ref{obs:basic}, the first inequality follows from $u=v$, and the second inequality is trivial.
\end{proof}

Thus, in the implementation of our algorithm $A$, we can first greedily match pairs of identical vectors as long as they exist, and then use any standard minimum-cost perfect matching algorithm to compute a perfect matching of the remaining vectors.

\subsection{Our results}
\label{sec:results}

In this paper, we show the following bounds on the worst-case ratio of algorithm $A$ (see Table~\ref{table:results} for a summary).

\begin{theorem}
\label{theo:PQ}
Algorithm $A$ is a $\frac32$-approximation algorithm for problem PQ, and this bound is tight.
\end{theorem}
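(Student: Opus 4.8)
The plan is to exploit Corollary~\ref{cor:sav}: since $A(I)=\cost(M)-\weight(M')$, it suffices to combine an upper bound on $\cost(M)$ with a lower bound on $\weight(M')$. Fix an optimal solution with quads $Q_1,\dots,Q_k$, and for each $Q_j$ pick the cheapest of the three ways to split its four vectors into two pairs, writing this split as $\{a_j,b_j\},\{c_j,d_j\}$ and setting $x_j=a_j\vee b_j$, $y_j=c_j\vee d_j$. This choice defines a perfect matching $\tilde M$ of the $4k$ vectors with $\cost(\tilde M)=\sum_j(|x_j|+|y_j|)$. Since $|x_j|,|y_j|\le\cost(Q_j)$ we get $\cost(\tilde M)\le 2\,\OPT$, and by Observation~\ref{obs:basic}, $\cost(\tilde M)=\OPT+S^*$ where $S^*:=\sum_j\sav(x_j,y_j)\le\OPT$. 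As $M$ is minimum-cost, $\cost(M)\le\cost(\tilde M)$, so Corollary~\ref{cor:sav} gives $A(I)\le\OPT+S^*-\weight(M')$; it remains to show $\weight(M')\ge S^*-\tfrac12\OPT$, which only bites when the optimal quads have substantial internal overlap.

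To bound $\weight(M')$ from below I would build an explicit perfect matching $N$ of the $2k$ pairs of $M$ and use $\weight(M')\ge\weight(N)$. The multiset $M\cup\tilde M$ decomposes into alternating cycles. Along a cycle whose $M$-pairs, in cyclic order, are $p_1,\dots,p_s$, consecutive pairs $p_i,p_{i+1}$ are separated by one $\tilde M$-pair whose two vectors are component-wise dominated by (one vector of) $p_i$ and of $p_{i+1}$; hence matching $p_i$ with $p_{i+1}$ harvests at least the $\sav$-value of that $\tilde M$-pair, and the two ``shifts'' along the cycle together recover all the intra-pair savings $T^*:=\sum_j(\sav(a_j,b_j)+\sav(c_j,d_j))$ of $\tilde M$ lying on it. Taking the better shift on each cycle would give $\weight(N)\ge\tfrac12 T^*$. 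A short count based on the cheapest-split rule---for a coordinate lying in exactly two vectors of $Q_j$, two of the three splits put those vectors apart, so the cheapest split makes at most two thirds of such coordinates ``straddle'' it, while coordinates in three or four vectors always straddle---gives $\sum_{v\in Q_j}|v|\ge 3\,\sav(x_j,y_j)$, i.e.\ $T^*+\OPT\ge 2S^*$, i.e.\ $\tfrac12 T^*\ge S^*-\tfrac12\OPT$. Substituting, $A(I)\le\OPT+S^*-\tfrac12 T^*\le\tfrac32\OPT$.

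The main obstacle is that the ``two shifts'' only form perfect matchings of $p_1,\dots,p_s$ when $s$ is even, and odd cycles do occur---in particular the degenerate cycles consisting of a single pair of $M$ that is already a pair of OPT, whose intra-pair savings the second phase cannot recover at all. For such cycles the naive argument drops savings and falls short of $\tfrac32$. Closing the gap is the technical heart: one should bring in the \emph{local} optimality of $M$ (swapping $M$ for $\tilde M$ along a single alternating cycle cannot lower the cost, which forces the intra-$M$-pair savings on a cycle to dominate the intra-$\tilde M$-pair savings there), and also exploit that pairing two $M$-pairs may harvest savings beyond the one intervening $\tilde M$-pair; I expect this to require a case analysis on short cycles, plus a check that the bounds on $\cost(M)$ and on $\weight(M')$ do not double-count savings.

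For tightness I would give an explicit family of instances---most cleanly in the graph special case PQ$(\#1=2)$, where the vectors are edges and quad cost counts spanned vertices---designed so that the forced minimum-cost first-phase matching pairs vectors in a way that disperses exactly the overlap the optimal quads use, leaving nothing for the second phase: one arranges $\cost(M)$ close to $2\,\OPT$ while $\weight(M')$ stays close to $\tfrac12\OPT$, so that $\cost(M)-\weight(M')\to\tfrac32\,\OPT$ as the gadget is scaled or taken in many disjoint copies.
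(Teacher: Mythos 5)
Your framing shares the paper's skeleton---use Corollary~\ref{cor:sav}, bound $\cost(M)$ above by the cost of a pair-split of the optimal quads, and bound $\weight(M')$ below by exhibiting a matching of the $M$-pairs---but your route to the second bound via a cycle decomposition of $M\cup\tilde M$ is different from the paper's and, as you yourself suspect, does not close. The per-quad count $\sum_{v\in Q_j}|v|\ge 3\sav(x_j,y_j)$ for the cheapest split is correct and gives $\tfrac12 T^*\ge S^*-\tfrac12\OPT$. However, the intermediate claim $\weight(M')\ge\tfrac12 T^*$ is not merely hard to prove on odd cycles---it is false. Take $k=1$, $v_1=v_2=(1,1,0,0)$, $v_3=v_4=(0,0,1,1)$. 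The cheapest split is $\{v_1,v_2\},\{v_3,v_4\}$, so $T^*=4$ and $S^*=0$; the unique minimum-cost first-phase matching is $M=\tilde M=\{\{v_1,v_2\},\{v_3,v_4\}\}$, and the second phase saves $\sav((1,1,0,0),(0,0,1,1))=0<2=\tfrac12 T^*$. The needed inequality $\weight(M')\ge S^*-\tfrac12\OPT$ still holds there (trivially), but you lose it by funnelling through the aggregate $T^*$. Your sketched repairs do not help in this instance either: $M$ and $\tilde M$ coincide exactly, so the local-swap argument is vacuous, and the two $M$-pairs have disjoint support, so there is no extra inter-pair saving to harvest.

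This is exactly where the paper's proof does something structurally different. It does not fix one split $\tilde M$ per quad in advance; instead it builds an auxiliary multigraph $K$ on the optimal quads, takes an Eulerian tour of each component to construct a bipartite, maximum-degree-$2$ graph $H$ on the $M$-pairs, and then chooses \emph{per quad} both which two vector pairs enter $\hat M$ and which two $H$-edges to charge, optimizing over six combinations subject to the ``good partition'' constraint that keeps $H$ bipartite. Quads containing lucky pairs---the analogue of your degenerate $s=1$ cycles---are treated by separate direct estimates (sets $O_1$, $O_2$), and quads with no lucky pair are bounded via a per-coordinate $2^4$-configuration analysis combining three of the six bounds with weights $\tfrac12,\tfrac14,\tfrac14$. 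The per-quad flexibility in the choice of split and of charged savings is what your single-$\tilde M$ accounting lacks; filling your gap would require a per-cycle and per-quad case analysis essentially of that kind. The tightness half is also not carried out in your proposal; the paper exhibits a concrete $8$-vector instance of PQ$(\#1\in\{1,2\})$ (Observation~\ref{wcinstPQatmosttwoones}) attaining ratio $\tfrac32$.
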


\begin{theorem}
\label{theo:PQatmosttwoones}
Algorithm $A$ is a $\frac32$-approximation algorithm for problem PQ$(\#1\in \{1,2\})$, and this bound is tight.
\end{theorem}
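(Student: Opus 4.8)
The plan is to prove two things: (i) the upper bound $A(I)\le \frac32\,\OPT(I)$ for problem PQ$(\#1\in\{1,2\})$, and (ii) tightness, via a family of instances on which $A$ is forced to output a solution of cost arbitrarily close to $\frac32$ times the optimum. Since PQ$(\#1\in\{1,2\})$ is a special case of PQ, the upper bound of $\frac32$ is inherited for free from Theorem~\ref{theo:PQ}; so the only work is to exhibit tight instances that stay inside the special case (all vectors have exactly one or two ones). I expect the tightness construction to be the main obstacle, because the generic lower-bound instances used for Theorem~\ref{theo:PQ} may well use vectors with more than two ones, and we must re-engineer them so that the $\frac32$ ratio is preserved while the ``at most two ones'' restriction holds.

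For the lower bound I would look for a gadget on a constant number of vectors (say $4$ or $8$) where the optimal quad partition groups ``compatible'' vectors so that each quad has small cost, but the first-phase matching $M$ is tricked into pairing incompatible vectors, after which the second phase cannot recover the savings. Concretely, using the graph viewpoint from Section~\ref{sec:prelim}: think of the two-ones vectors as edges of a graph $F$ and the one-one vectors as ``half-edges'' (loops, or pendant markers on a single coordinate). The cost of a quad is the number of distinct coordinates it touches. A natural attempt is to take, for a parameter $k$, a disjoint union of $k$ copies of a small optimal gadget $H$ on $4$ vectors with cost $c_{\OPT}$, arranged so that the minimum-cost matching $M$ prefers to pair one vector from each of two different copies (because those cross pairs are cheaper or equally cheap as the ``within-gadget'' pairs), forcing $A$ into a configuration of cost close to $\frac32 k\, c_{\OPT}$. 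The key technical point is to verify that the adversarial matching is genuinely a \emph{minimum}-cost perfect matching in $G$ (not merely some matching), i.e.\ that no pairing does strictly better; this is where one must compute $\sav$ values carefully and invoke Lemma~\ref{lem:sav} to reason about savings rather than costs. Lemma~\ref{lem:identical} also has to be respected: if the instance contains identical vectors, the adversary's matching must already be optimal \emph{with a maximum number of identical pairs}, so it is cleanest to build the gadget out of pairwise distinct vectors (which also keeps us honest about the special case).

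The second ingredient is bounding $A(I)$ from above on these instances to confirm it really is forced to be bad, or equivalently exhibiting the unique (up to symmetry) behaviour of $A$: first show $\cost(M)$ equals the adversarial value, then show that for \emph{every} second-phase matching $M'$ the weight $\weight(M')$ is small, so by Corollary~\ref{cor:sav}, $A(I)=\cost(M)-\weight(M')$ stays large. Meanwhile $\OPT(I)$ is bounded above by the value of the intended ``good'' partition into $k$ copies of the optimal gadget. Taking the ratio $A(I)/\OPT(I)$ and letting $k\to\infty$ (or already for the gadget itself if the constants work out) should give a sequence of ratios tending to $\frac32$; combined with the inherited upper bound this proves the bound is tight.

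The main obstacle, then, is designing a small gadget consisting solely of one- and two-ones vectors for which simultaneously: the ``good'' partition has cost exactly $c_{\OPT}$ per group; the adversarial matching $M$ is provably of minimum cost in $G$; and no second-phase recombination $M'$ can reclaim more than a small amount of savings. I would first try the smallest cases by hand — e.g.\ a gadget of four distinct two-ones vectors forming a path $P_4$ or a triangle-plus-pendant, or a mix with a couple of one-ones vectors — compute all pairwise $\sav$ values, check which matchings are minimum, and iterate on the gadget until the $\frac32$ ratio emerges; if four vectors are not enough to defeat the second phase, I would move to eight vectors ($k=2$ within one connected gadget) and add a symmetric ``twin'' structure so that the only minimum-cost matching $M$ is the bad one. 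Once a working gadget is found, the disjoint-union (or chained) construction and the limiting argument are routine.
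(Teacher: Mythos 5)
Your high-level structure is correct: the upper bound is indeed inherited from the general PQ bound (Lemma~\ref{ubPQ}), so only the tight example needs work, and your plan to stay inside the special case by working with one- and two-ones vectors (and to double-check that the adversarial first-phase matching is actually minimum-cost, keeping Lemma~\ref{lem:identical} in mind) is the right instinct. However, your write-up stops at a \emph{plan} for finding a gadget rather than producing one: you say you ``would first try the smallest cases by hand,'' ``iterate on the gadget until the $\frac32$ ratio emerges,'' etc. No instance is actually exhibited, the minimality of the adversarial matching is not verified, and the second-phase savings bound is not computed. Since tightness is half the theorem's claim, this is a real gap, not a presentational one.

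For what it's worth, the paper's tight example is exactly the kind of gadget you anticipate, and is already of constant size (so no limiting $k\to\infty$ argument is needed): take the eight vectors $e_1,e_2,e_3,e_4$ (the unit vectors in $\{0,1\}^4$) and two copies each of $(1,1,0,0)$ and $(0,0,1,1)$. The optimum partitions them into $\{e_1,e_2,(1,1,0,0),(1,1,0,0)\}$ and $\{e_3,e_4,(0,0,1,1),(0,0,1,1)\}$, total cost $4$; but the first phase may legitimately return the minimum-cost matching $\{e_1,e_3\},\{e_2,e_4\},\{(1,1,0,0),(1,1,0,0)\},\{(0,0,1,1),(0,0,1,1)\}$ (cost $8$, and it maximizes identical pairs), after which the best second-phase matching saves only $2$, giving $A(I)=6=\frac32\cdot\OPT(I)$. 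Note two things your plan got subtly wrong: (a) you suggested avoiding identical vectors, but PQ$(\#1\in\{1,2\})$ does not forbid duplicates, and the paper's gadget exploits them (the identical pairs lock in part of the adversarial matching); (b) in this gadget some \emph{other} minimum-cost first-phase matchings lead to the optimum, so one must be careful to argue only that the bad matching is \emph{a} minimum-cost matching, which suffices because the algorithm's tie-breaking is unspecified. Your plan would likely have reached a working gadget eventually, but as written the proof of tightness is missing.
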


\begin{theorem}
\label{theo:PQexactlytwoones}
Algorithm $A$ is a $\frac43$-approximation algorithm for problem PQ$(\#1=2)$, and this bound is tight.
\end{theorem}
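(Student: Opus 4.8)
The plan is to use Corollary~\ref{cor:sav}: it suffices to bound $\cost(M)$ from above and $\weight(M')$ from below in terms of $\OPT(I)$, and conclude $A(I)\le\frac43\OPT(I)$. In the setting PQ$(\#1=2)$, each vector is an edge of a multigraph $F$ on $n$ nodes, $\cost$ of a set of vectors is the number of nodes touched by the corresponding edges, and an optimum solution is a partition of the $4k$ edges into $k$ quads $Q_1^*,\ldots,Q_k^*$ of total cost $\OPT(I)=\sum_t \cost(Q_t^*)$. Note that every quad has $4$ edges and hence at most $8$ endpoints, so $\cost(Q_t^*)\le 8$ always; more to the point, if a quad has cost $c$ then its four edges form a (multi)graph on $c$ vertices with $4$ edges, which therefore has a cycle (or a repeated edge) as soon as $c\le 4$, and in any case the savings structure inside a quad is governed by $4$ edges spanning $c$ vertices.

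The first main step is the \emph{upper bound on $\cost(M)$}. I would construct a feasible perfect matching $\widehat M$ of the $4k$ vectors by, within each optimal quad $Q_t^*=\{e_1,e_2,e_3,e_4\}$, pairing the edges into two pairs so as to maximize the savings captured; then $\cost(M)\le\cost(\widehat M)$. Using Observation~\ref{obs:basic} at the level of the quad, one shows $\cost(Q_t^*)=\sum|e_i| - (\text{savings used to merge the four edges into one})$; splitting the four edges into two pairs recovers only part of those savings, and a short case analysis on the possible shapes of a $4$-edge (multi)graph on $c$ vertices (paths, stars, triangles-plus-edge, matchings, multi-edges, etc.) shows that in the worst case the best $2+2$ pairing loses at most a $\frac43$ factor relative to $\cost(Q_t^*)$, i.e. $\cost(\widehat M)\le\frac43\OPT(I)$ \emph{after} accounting for the second-phase gain — so really this step should be set up jointly with the next one.

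The second main step is the \emph{lower bound on $\weight(M')$}. Given the pairs $p_1,\ldots,p_{2k}$ produced in phase one, I would exhibit some perfect matching $\widehat{M'}$ of these pairs into quads whose total savings is provably large; since $M'$ is optimal, $\weight(M')\ge\weight(\widehat{M'})$. The natural choice is, within each optimal quad $Q_t^*$, to reunite the two pairs into which we split it (this is available only if the phase-one matching respects the optimal quads — which it need not, so instead one argues via an averaging/exchange argument over how the $2k$ pairs distribute among the $Q_t^*$, or one chooses $\widehat{M'}$ greedily to recover, inside each $Q_t^*$-block, the savings that were ``paid for'' in phase one). Combining the two bounds via $A(I)=\cost(M)-\weight(M')$ and summing over $t$, the per-quad inequality to be verified is of the form: for every $4$-edge multigraph $H$ on $c$ vertices, (best $2$-pairing cost of $H$) $-$ (savings recovered by re-merging) $\le\frac43 c$. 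This reduces to a finite check over the finitely many isomorphism types of $4$-edge multigraphs, and the extremal type (which should also drive the tightness example — four edges forming a structure of cost $3$, e.g. a triangle with a pendant or a multi-triangle, giving ratio $4/3$) pins down the constant.

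The main obstacle will be decoupling the two phases: the perfect matching $M$ need not align with the optimal quads, so the clean per-quad bookkeeping above is not literally valid. I expect to handle this exactly as in the proofs of Theorems~\ref{theo:PQ} and~\ref{theo:PQatmosttwoones} (which the paper proves in Section~\ref{sec:upperb}): use the optimal solution only to certify the \emph{existence} of a cheap matching/quad-matching (so $\cost(M)$ and $\weight(M')$ are bounded by what the optimal quads offer), and then charge everything through Corollary~\ref{cor:sav} with a single global inequality rather than quad-by-quad. The genuinely new content over the general case is the restriction $|v_i|=2$, which collapses the case analysis to $4$-edge multigraphs and is what brings the ratio down from $\frac32$ to $\frac43$; verifying that this case analysis is exhaustive, and that no $4$-edge configuration forces a loss worse than $\frac43$, is the technical heart. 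Tightness is then established separately in Section~\ref{sec:lowerb} by a family of instances built from the extremal $4$-edge configuration identified in the case analysis.
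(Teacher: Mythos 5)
Your proposal takes essentially the same route as the paper: it reuses Corollary~\ref{cor:sav} together with the auxiliary-graph machinery developed for the $\frac32$-bound (matching $\hat{M}$ from the optimal quads plus the bipartite, degree-$2$ graph $H$ built via Eulerian cycles to lower-bound $\weight(M')$), and it reduces the improvement from $\frac32$ to $\frac43$ to an exhaustive per-quad case analysis over $4$-edge multigraph configurations organized by $\cost(Q)$ — exactly how Lemma~\ref{ubPQexactlytwoones} is structured. One small slip: the extremal configuration of cost $3$ is the doubled edge plus completing triangle (as in Figure~\ref{fig:PQ2LB}), not a triangle with a pendant, which would already have cost~$4$.
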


\begin{theorem}
\label{theo:PQexactlytwoonesanddis}
Algorithm $A$ is a $\frac{13}{10}$-approximation algorithm for problem PQ$(\#1=2, \mbox{distinct})$, and its worst-case ratio is at least $\frac{5}{4}$.
\end{theorem}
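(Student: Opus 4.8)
The plan is to apply Corollary~\ref{cor:sav}, writing $A(I)=\cost(M)-\weight(M')$, and to bound the two terms separately against a fixed optimal solution $O=\{Q_1^\ast,\dots,Q_k^\ast\}$. By Lemma~\ref{lem:reduction01} we view the instance as a simple graph $F$ with $4k$ edges, a quad being a set of four edges and its cost the number of vertices it spans; write $c_j=\cost(Q_j^\ast)$, so $\OPT=\sum_j c_j$ and each $c_j\ge 4$.

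Bounding $\cost(M)$ is the easy half. Since distinct edges have $\sav\in\{0,1\}$, a minimum-cost first-phase matching maximizes the number of pairs of \emph{adjacent} edges, and by Lemma~\ref{lem:sav} we have $\cost(M)=8k-\weight(M)$ with $\weight(M)$ the maximum number of adjacent pairs in a perfect matching of $E(F)$. Two bounds follow. Taking adjacent pairs inside each optimal quad, and using that any four distinct edges spanning exactly four vertices form a $C_4$ or a ``paw'' and hence split into two adjacent pairs, gives $\weight(M)\ge 2\lvert\{j:c_j=4\}\rvert$, so $\cost(M)\le 6k+2\lvert\{j:c_j\ge 5\}\rvert$. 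And decomposing each even-sized component of $F$ into paths on three vertices (and each odd component into such paths plus one leftover edge) gives $\cost(M)=6k+\tfrac12 c_{\odd}$, where $c_{\odd}$ counts components of $F$ with an odd number of edges; in particular, when $F$ is connected, $\cost(M)=6k$ exactly and every first-phase pair is a path on three vertices.

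The core of the argument is a lower bound on $\weight(M')$. Since $M'$ is a \emph{maximum}-weight matching of the $2k$ first-phase pairs (with weights equal to the number of shared vertices, by Lemma~\ref{lem:sav}), it suffices to exhibit one perfect matching $\widehat M'$ of the pairs with $\weight(\widehat M')$ large, and we would build $\widehat M'$ from $O$. Call a first-phase pair \emph{aligned} if both its edges lie in one optimal quad, and \emph{crossing} otherwise; whenever the four edges of some $Q_j^\ast$ split into two aligned pairs, we put those pairs together in $\widehat M'$, so the induced $A$-quad is exactly $Q_j^\ast$ and contributes savings $\lvert v_i'\rvert+\lvert v_\ell'\rvert-c_j$, which is substantial when $c_j=4$. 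The remaining pairs are matched guided by the incidence structure between first-phase pairs and optimal quads — concretely, the $4$-regular multigraph on the optimal quads with one edge (possibly a loop) per first-phase pair — choosing a matching of its edges that keeps shared vertices as high as possible; one then bounds $\weight(\widehat M')$ from below and combines with the bound on $\cost(M)$ and $\OPT=\sum_j c_j$. The cleanest presentation is probably a discharging argument: charge to each $Q_j^\ast$ the value $8$ (the total number of ones on its four edges) minus the portions of $\weight(M)$ and $\weight(\widehat M')$ attributable to it, and prove the single inequality ``(this value) $\le \tfrac{13}{10}c_j$'' by case analysis on how $Q_j^\ast$'s four edges are distributed among first-phase pairs. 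The main obstacle is precisely this analysis of crossing configurations: when the first-phase path decomposition systematically straddles the boundaries of the optimal quads, the second phase gets no ``free'' savings from reassembling a whole optimal quad and must instead recover savings from chains of overlapping paths spread over several optimal quads, and bounding what a maximum-weight matching recovers in the worst such configuration is the delicate point — it is also presumably where the analysis loses precision, matching the theorem's asymmetry between the $\tfrac{13}{10}$ upper bound and the $\tfrac54$ lower bound.

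For the lower bound, the plan is to exhibit an infinite family of connected simple graphs $F$ on which $A(I)=\tfrac54\,\OPT(I)$ up to lower-order terms. I would replicate a small gadget $k$ times and glue the copies into a single connected graph so that (i) $F$ decomposes into $k$ quads each of cost $4$, whence $\OPT\le 4k+o(k)$, while (ii) the path decomposition that the first phase is forced into, and the matching that the second phase is then forced into, cannot be aligned with those cost-$4$ quads, so every quad $A$ outputs has cost at least $5$ and $A(I)\ge 5k-o(k)$. The subtle part is \emph{rigidity}: one must ensure that \emph{every} maximum first-phase matching and \emph{every} subsequent minimum-cost second-phase matching is bad — not merely some adversarial tie-break — which I would arrange by designing the gadget so the relevant optimal matchings are unique up to symmetry.
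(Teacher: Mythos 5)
Your plan has the right skeleton — use Corollary~\ref{cor:sav} to write $A(I)=\cost(M)-\weight(M')$, exhibit a good perfect matching of the first-phase pairs to lower-bound $\weight(M')$, and distribute the slack over the optimal quads via a per-quad inequality with ratio $\tfrac{13}{10}$ — but the proposal stops exactly where the proof actually is. You explicitly concede that ``bounding what a maximum-weight matching recovers in the worst \emph{crossing} configuration is the delicate point''; that case analysis \emph{is} the theorem, and the write-up offers no mechanism for carrying it out. Two concrete gaps. First, the discharging you propose is not well-defined: ``the portion of $\weight(M)$ and $\weight(\widehat M')$ attributable to $Q_j^\ast$'' has no meaning when a first-phase pair straddles two optimal quads and its savings must be split. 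The paper resolves this with the auxiliary graph $H$ on the first-phase pairs, built from Eulerian tours of a $4$-regular multigraph $K$ on the optimal quads, together with the ``good partition'' device; these guarantee $H$ is bipartite of maximum degree~$2$, so Claim~\ref{propertyH} certifies a matching of weight at least $S_1+\tfrac12 S_2$, and they also pin down exactly which savings are charged to which quad. You gesture at $K$ (``the $4$-regular multigraph on the optimal quads with one edge per first-phase pair'') but never extract a guaranteed matching from it, and without the bipartite/degree-$2$ structure there is no reason the savings you hope to realize can all be packed into one perfect matching. Second, the per-quad inequality $\phi_Q\le\tfrac{13}{10}\cost(Q)$ still needs a case analysis over the configurations a quad's four edges can take; the paper gets this almost for free by observing that simplicity forces $\cost(Q)\ge 4$ and then reusing the case-by-case bounds already proved in Lemma~\ref{ubPQexactlytwoones} (costs $4,5,6,\ge 7$ give $\phi_Q\le 5, 6.5, 7, 8$ respectively, each at most $\tfrac{13}{10}\cost(Q)$). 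Your route — computing $\cost(M)=8k-\weight(M)$ exactly via path decompositions and then attacking $\weight(M')$ independently — is a genuinely different framing, but it needs a separate correctness argument for the odd-component decomposition and, more importantly, is never combined with a concrete lower bound on $\weight(M')$.

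On the lower bound you are over-engineering. The definition of ``worst-case ratio'' here quantifies over \emph{all} minimum-cost matchings the algorithm might return, so it suffices to exhibit one instance and one adversarial tie-break under which $A(I)/\OPT(I)=\tfrac54$; rigidity (forcing \emph{every} optimal first- and second-phase matching to be bad) is not needed. Indeed the paper's instance in Figure~\ref{fig:PQGconLB} is a single $8$-edge connected simple graph, already an instance of PQ$(\#1=2,\mbox{distinct})$, with $\OPT=8$ and a choice of first-phase matching leaving total second-phase savings at most $2$, hence $A(I)\ge 10$. No infinite family, no gluing, no uniqueness argument is required.
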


\begin{theorem}
\label{theo:PQexactlytwoonesanddisandcon}
Algorithm $A$ is a $\frac54$-approximation algorithm for problem PQ$(\#1=2,$ distinct,
connected$)$, and this bound is tight.
\end{theorem}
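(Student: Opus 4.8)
The plan is to invoke Corollary~\ref{cor:sav}, which gives $A(I)=\cost(M)-\weight(M')$, and to bound the two terms separately, crucially exploiting that the graph $F$ associated with the instance is simple and connected. The first goal is to pin down $\cost(M)$ exactly. Since the vectors are pairwise distinct $\{0,1\}$-vectors with exactly two ones, any two of them share at most one position, so $\sav(v_i,v_j)\le 1$ and, by Observation~\ref{obs:basic}, every edge of $G$ has weight at least $3$; hence $\cost(M)\ge 6k$. Conversely, $F$ is connected and has $4k$ edges, an even number, and a connected graph with an even number of edges decomposes into paths of length two (a standard fact, provable by induction on $|E(F)|$, repeatedly removing a leaf of a spanning tree together with an incident edge and adjusting for parity). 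Such a $P_3$-decomposition is a perfect matching of $G$ of cost exactly $6k$, so $\cost(M)=6k$; moreover, a perfect matching of cost $6k$ must consist of $2k$ pairs each of weight exactly $3$, i.e.\ each pair $p_i$ is a length-two path $v_i^1v_i^2$ of $F$ with $|v_i'|=3$. Thus the matching $M$ computed in phase~1 is itself a $P_3$-decomposition of $F$.

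It remains to lower-bound $\weight(M')$. By the choice of $M'$ as a minimum-cost matching and Lemma~\ref{lem:sav}, $\weight(M')$ equals the maximum, over all perfect matchings $N$ of the paths $p_1,\dots,p_{2k}$, of $\sum_{\{p_i,p_j\}\in N}\sav(v_i',v_j')$, where $\sav(v_i',v_j')\in\{0,1,2\}$ counts the vertices of $F$ shared by the two paths $p_i,p_j$. Every quad spans at least four vertices, so $\OPT\ge 4k$; hence it suffices to prove $\weight(M')\ge 6k-\tfrac54\OPT$, and since $6k-\tfrac54\OPT\le 6k-5k=k$, a perfect matching of the $p_i$ with total shared-vertex count at least $k$ already suffices. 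Consider the path-intersection graph $H$ on $\{p_1,\dots,p_{2k}\}$, with $p_i$ adjacent to $p_j$ iff they share a vertex of $F$; walking along an $F$-path between a vertex of $p_i$ and a vertex of $p_j$, and noting that consecutive edges of the walk lie in paths that coincide or are adjacent in $H$, shows that $H$ is connected. If $H$ has a perfect matching, we are done, since then $\weight(M')\ge k$.

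The case where $H$ has no perfect matching is the crux and the step I expect to be the main obstacle. Here I would apply the Tutte--Berge formula to obtain a set $S$ of paths with $\odd(H-S)=|S|+2d$ for some $d\ge 1$, so that $H$ has a matching with $k-d$ edges; extending it by $d$ further arbitrary pairs gives $\weight(M')\ge k-d$, whence $A(I)\le 5k+d$. It then remains to show $\OPT\ge 4k+\tfrac45 d$, since then $\tfrac54\OPT\ge 5k+d\ge A(I)$. The structural input is that the components of $H-S$ have pairwise disjoint vertex sets in $F$, that each component is a connected edge-disjoint union of length-two paths, and that the $2|S|$ edges carried by the paths of $S$ are the only links between these components; intuitively, an optimal quad either lies essentially inside a single component — and a component whose path-intersection graph has no perfect matching cannot have all of its edges packed into cost-$4$ quads inside it, forcing a quad that spans an extra vertex — or it uses an edge of $S$ and thereby also picks up an extra vertex, so the deficiency $2d$ of $H$ is paid for by a surplus of vertices in $\OPT$. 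Turning this intuition into a proof requires a careful case analysis of how the four edges of an optimal quad distribute over $S$ and the components of $H-S$, together with the right choice of a tight separator $S$; this is where the real work lies, and it is what pins down the particular constant $\tfrac54$.

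Combining the three steps yields $A(I)\le\tfrac54\OPT$ for every instance of PQ$(\#1=2,\text{distinct, connected})$, establishing that $A$ is a $\tfrac54$-approximation algorithm; together with the matching lower-bound construction given in Section~\ref{sec:lowerb}, this shows that the ratio $\tfrac54$ is tight.
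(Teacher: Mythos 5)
Your proposal follows the paper's strategy almost step by step: pin down $\cost(M)=6k$ via a $P_3$-decomposition of the connected graph $F$, form the path-intersection graph $H$ on the pairs $p_1,\ldots,p_{2k}$, conclude $A(I)\le 6k-\mu$ where $\mu$ is the size of a maximum matching in $H$, handle the case $\mu=k$ directly, and then invoke Tutte--Berge when $\mu<k$. Up to that point the argument is correct and matches the paper (the paper cites J\"unger et al.\ and Dong et al.\ for the $P_3$-decomposition; your sketch via a spanning tree is not quite a proof but the fact is standard and correct).

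The genuine gap is exactly the one you flag: you set up the Tutte--Berge separator $S$ with $\odd(H-S)=|S|+2d$ and reduce the problem to showing $\OPT\ge 4k+\tfrac45 d$, but you do not prove it, and your stated intuition is not quite the right mechanism. In particular, the phrase ``a component whose path-intersection graph has no perfect matching cannot have all of its edges packed into cost-$4$ quads inside it'' misidentifies the obstruction: the relevant constraint is that each odd component $O_i$ of $H-S$ carries $|E_F(O_i)|=2|O_i|\equiv 2\pmod 4$ edges of $F$, so at least two of those edges are forced into optimal quads that also contain edges from outside $E_F(O_i)$ (a parity argument on the number of edges, not on matchability inside the component). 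It is also not true that a ``crossing'' quad must automatically pick up an extra vertex: a cost-$4$ quad can perfectly well mix edges of some $E_F(O_i)$ with edges of $E_F(S)$. The paper's actual argument is a budget count: it defines ``special'' edges (the one or two or three edges of $E_F(O_i)$ sitting in a crossing quad, with the convention that a quad containing three gives only one special edge), shows each odd component contributes at least two special edges, and then proves by a type-analysis of cost-$4$ quads that every cost-$4$ quad with special edges consumes at least as many edges of $E_F(S)$ as it has special edges. Since $|E_F(S)|=2|S|$, only $2|S|$ special edges can be absorbed by cost-$4$ quads, leaving at least $2d-2|S|=2(2k-2\mu)-\ldots$ of the $2d$ special edges in quads of cost $\ge 5$, hence at least $(2d-2|S|)/4=k-\mu=d$ such quads and $\OPT\ge 4k+d$. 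Without this counting the constant $\tfrac54$ is not established, so the proposal as written proves the easy direction but leaves the decisive lemma unfinished.
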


\begin{table}[ht]
\centering
\begin{tabular}{|l||l|l|}
\hline
Problem name &  Lower Bound & Upper Bound \\
\hline
\  &\  &\  \\[-11pt]
PQ &  $\frac32$ & $\frac32$ (Lemma~\ref{ubPQ}) \\[2pt]
PQ$(\#1 \in \{1,2\})$ &  $\frac32$ (Observation~\ref{wcinstPQatmosttwoones}) & $\frac32$ \\[2pt]
PQ$(\#1=2)$ &  $\frac43$ (Observation~\ref{wcinstPQexactlytwoones}) & $\frac{4}{3}$ (Lemma~\ref{ubPQexactlytwoones})\\[2pt]
PQ$(\#1=2, \mbox{distinct})$ &  $\frac54$ & $\frac{13}{10}$ (Lemma~\ref{ubPQexactlytwoonesanddist})
\\[2pt]
PQ$(\#1=2, \mbox{distinct, connected})$ &  $\frac54$ (Observation~\ref{wcinstPQexactlytwoonesanddistandconn}) & $\frac54$ (Lemma~\ref{ubPQexactlytwoonesanddistandconn}) \\
\hline
\end{tabular}
\caption{Overview of bounds on the worst-case ratio of algorithm $A$}
\label{table:results}
\end{table}

Proofs of the theorems are presented in the following sections: The upper bound proofs (Lemmas~\ref{ubPQ}-\ref{ubPQexactlytwoonesanddistandconn}) are given in Section~\ref{sec:upperb}, and the lower bound results (Observations~\ref{wcinstPQatmosttwoones}-\ref{wcinstPQexactlytwoonesanddistandconn}) in Section~\ref{sec:lowerb}.
As an aside, in Section~\ref{sec:badgreedy} we also give instances that show that the worst-case ratio of a natural greedy algorithm is worse than the worst-case ratio of algorithm $A$, both for problem PQ and for problem PQ($\#1=2, \mbox{distinct, connected}$).

\section{Upper bound proofs}
\label{sec:upperb}%
In this section, we prove the upper bounds for the worst-case ratios of algorithm A for problem PQ and its special cases. In
Section~\ref{sec:ubPQ} we prove the upper bound $\frac32$ for the worst-case ratio of Problem PQ, in Section~\ref{sec:ubPQtwoones} we prove the upper bound $\frac{4}{3}$ for the worst-case ratio of Problem PQ$(\#1=2)$, and in Section~\ref{sec:ubPQtwoonesdist} we prove the upper bound $\frac{13}{10}$ for the worst-case ratio of Problem PQ$(\#1=2, \mbox{distinct})$. Finally, in Section~\ref{sec:ubPQtwoonesdistconn} we prove the upper bound $\frac54$ for the worst-case ratio of Problem PQ$(\#1=2,\mbox{distinct, connected})$.

\subsection{Approximation analysis for PQ}
\label{sec:ubPQ}

\begin{lemma}
\label{ubPQ}
The worst-case ratio of algorithm $A$ for PQ is at most~$\frac32$.
\end{lemma}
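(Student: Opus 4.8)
The plan is to use Corollary~\ref{cor:sav}, which tells us that $A(I) = \cost(M) - \weight(M')$, so it suffices to produce an upper bound $B$ on $\cost(M)$ and a lower bound $S$ on $\weight(M')$ with $B - S \le \frac32 \OPT(I)$. Throughout, fix an optimum solution consisting of $k$ quads $Q_1, \ldots, Q_k$, and write $\OPT = \OPT(I) = \sum_{j=1}^k \cost(Q_j)$. Also write $T = \sum_{i=1}^{4k} |v_i|$ for the total number of ones in all input vectors; by Observation~\ref{obs:basic} (applied repeatedly, or via a suitable telescoping) the cost of each quad is at most the sum of the sizes of its four vectors, and more importantly each quad ``saves'' a certain amount relative to that sum. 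The key quantity will be how much saving the optimum exposes, both at the level of pairing vectors inside a quad and at the level of combining two pairs.

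First I would bound $\cost(M)$. Inside each optimum quad $Q_j = \{a,b,c,d\}$, consider the three ways of splitting it into two pairs; pick the split, say $\{a,b\},\{c,d\}$, that maximizes the total savings $\sav(a,b) + \sav(c,d)$ (equivalently minimizes $\cost(\{a,b\}) + \cost(\{c,d\})$). Taking this best split in every quad yields a perfect matching $M^\ast$ of all $4k$ vectors, and since $M$ is a minimum-cost perfect matching, $\cost(M) \le \cost(M^\ast) = \sum_j \bigl(\cost(\{a,b\}) + \cost(\{c,d\})\bigr)$. The crucial combinatorial step is to bound $\cost(\{a,b\}) + \cost(\{c,d\})$ in terms of $\cost(Q_j)$: since $\cost(Q_j) = |a \vee b \vee c \vee d| \ge \max(|a\vee b|, |c \vee d|)$ and also $\cost(\{a,b\}) + \cost(\{c,d\}) = |a\vee b| + |c \vee d| \le \cost(Q_j) + \sav(a \vee b, c \vee d)$, one gets a bound; the averaging over the three splits (some split has savings at least a third of something) is what should deliver the factor that eventually combines to $\frac32$. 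I expect the clean statement to be $\cost(M) \le \frac32 \OPT$ on its own is too weak without also using the second-phase savings, so the real work is to track a shared quantity.

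Next I would bound $\weight(M')$ from below. Here the idea is to build a candidate matching $M'^\ast$ on the $2k$ pairs $p_1, \ldots, p_{2k}$ produced in phase~1 that recombines them, as much as possible, according to the optimum quads, and to argue that $\weight(M') \ge \weight(M'^\ast)$ is at least the savings that the optimum quads realize when their two constituent pairs (in $M$'s pairing, not the optimum's) are merged. Since the phase-1 matching $M$ need not respect the optimum quads, one typically argues via a fractional/averaging or exchange argument: each optimum quad contributes savings $\sav(Q_j) := |a| + |b| + |c| + |d| - \cost(Q_j) = T_j - \cost(Q_j)$ where $T_j$ is the sum of the four vector sizes, and this saving must be ``accounted for'' either within a phase-1 pair or across two phase-1 pairs in a way that a good second-phase matching can capture a constant fraction of. Combining $\cost(M) = T - \weight(M)$ (the analogue of the phase-1 identity) with $A(I) = \cost(M) - \weight(M')$, everything is expressed through $T$ and the two weights, and $\OPT = T - \sum_j \sav(Q_j)$.

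The main obstacle, and where I would concentrate effort, is the interaction between the two phases: the phase-2 matching $M'$ operates on pairs $p_i$ that were chosen adversarially-from-the-analysis's-point-of-view in phase~1, so we cannot simply assume phase~1 respected the optimum quads. The standard device is to fix an optimum solution, pair up the $2k$ phase-1 pairs $p_1, \ldots, p_{2k}$ into $k$ groups using the structure of the optimum (e.g.\ via a counting/averaging argument over how the $4k$ vectors distribute among optimum quads), and show this candidate phase-2 matching has large weight; then $\weight(M') \ge \weight(M'^\ast)$. Getting the constants to line up — showing that $\cost(M)$ minus (a lower bound on) $\weight(M')$ is at most $\frac32 \OPT$ — is the calculation that has to be done carefully, and I anticipate it needs the inequality $\cost(Q_j) \ge \frac{1}{?}(|a|+|b|+|c|+|d|)$ type bounds together with the averaging over the three pairings, so that the ``worst'' single pairing is not too far from optimal. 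I would set up the two bounds $B$ and $S$ symbolically first, verify $B - S \le \frac32\OPT$ in the abstract, and only then fill in the combinatorial lemmas that establish each bound.
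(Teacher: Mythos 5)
Your high-level frame is the same as the paper's (use Corollary~\ref{cor:sav}, bound $\cost(M)$ by a matching $\hat{M}$ read off the optimum quads, and lower-bound $\weight(M')$ by exhibiting a good candidate matching on the phase-1 pairs), but the two steps that actually carry the proof are missing, and they are exactly the ones you defer with ``I expect'' and ``I anticipate''. First, your plan to ``pair up the $2k$ phase-1 pairs into $k$ groups using the structure of the optimum'' does not go through as stated: a phase-1 pair can contain vectors from two \emph{different} optimum quads, so there is no canonical grouping of the $p_i$ induced by the optimum, and no counting/averaging one-liner produces one. The paper has to build an auxiliary multigraph $K$ on the optimum quads (every vertex of degree $4$), take Eulerian cycles, and turn them into a bipartite, maximum-degree-$2$ graph $H$ on the phase-1 pairs; only then does Claim~\ref{propertyH} give $\weight(M')\ge S_1+\frac12 S_2$, and the factor $\frac12$ on cycle edges (plus the ``good partition'' constraint, which forbids one of the three ways of cross-pairing the four phase-1 pairs meeting a quad) is precisely what determines the achievable constant. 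Without some construction of this kind your lower bound $S$ on $\weight(M')$ is unproven.

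Second, even granting such a bound, you never verify the per-quad charge: the paper distributes $\cost(\hat M)-(S_1+\frac12 S_2)$ as contributions $\phi_Q$ and must prove $\phi_Q\le\frac32\cost(Q)$ for each optimum quad, splitting into quads with two, one, or zero ``lucky pairs''. The zero-lucky-pair case is the crux and is not a back-of-envelope calculation: because the choice of $\hat M$-split (three options) and the choice of admissible $H$-edges (only two options, by the good partition) cannot be optimized independently, the paper takes a convex combination (weights $\frac12,\frac14,\frac14$) of three of six inequalities, expanded over the $16$ possible per-component configurations $n_0,\dots,n_{15}$, to extract the constant $\frac32$. Your proposal identifies that ``the constants have to line up'' but does not supply this argument, and simpler routes genuinely fail: e.g.\ $\cost(M)\le\cost(\hat M)$ alone can be as bad as $2\,\OPT$ (four identical vectors in a quad), so the coupling with the second-phase savings is unavoidable and has to be carried out quad by quad. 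As written, the proposal is a plausible outline of the paper's strategy rather than a proof.
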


\begin{proof}
We use the terminology from Section~\ref{sec:prelim}, where $M$ refers to the minimum-cost perfect matching found by $A$ in the first phase based on the costs $|v_i \vee v_j|$, and $M'$ refers to the maximum-weight perfect matching found in the second phase based on the savings $\sav(v_i',v_j')$. As described in Corollary~\ref{cor:sav}, we can express the cost of the solution found by algorithm $A$ as follows:
\[
A(I) = \mbox{cost}(M) - \mbox{weight}(M').
\]

Consider the quads in an optimum solution. By specifying two vector pairs in each quad from the optimum solution, we obtain a matching $\hat{M}$ that we can compare to the matching $M$ found by $A$. Clearly, by the optimality of the first phase's matching of algorithm $A$, we have:
\begin{equation}
\label{aboutcost}
\cost(M) \leq \cost(\hat{M}), \mbox{ for any possible choice of }\hat{M}.
\end{equation}

Further, we will identify potential matches between vector pairs in $M$ with corresponding savings that algorithm $A$ could make in the second phase. These potential matches are represented as edges in an auxiliary graph $H$ whose vertex set is the set of vector pairs resulting from the algorithm's matching of the first phase. Thus we will construct a graph $H=(V',E_1 \cup E_2)$ where the edge-sets $E_1$ and $E_2$ will be described in detail.
The weight of each edge $e$ in the graph $H$, called $w(e)$, represents the savings that algorithm $A$ would realize in the second phase if it were to match the vector pairs that are the endpoints of~$e$. The graph $H$ can be seen as a ``proxy'' for the graph $G'$ that is used in the second phase of algorithm $A$; it will allow us to argue that a certain amount of savings is guaranteed to exist in an optimum second phase matching.

The graph $H$ will be bipartite and have maximum degree~$2$, implying that every cycle in $H$ must have even length. Furthermore, each edge of $H$ will connect two vertices with the same degree. We use $E_1$ to denote the
edges in $H$ whose both endpoints have degree one, with $S_1$ their total weight, i.e., $S_1 = \sum_{e \in E_1} w(e)$. We use $E_2$ to denote the remaining edges in $H$, with $S_2$ representing their total weight, i.e., $S_2 = \sum_{e \in E_2} w(e)$. Our construction will ensure that $E_2$ is a collection of even-length cycles.

\begin{claim}
\label{propertyH}
Consider an undirected, edge-weighted graph $H$ that is bipartite, and has maximum degree 2.
Further, assume that every edge connects two vertices of the same degree.
Let $S_1$ ($S_2$) be the total weight of edges between nodes with degree 1 (degree 2), and let $M_H$ be a maximum-weight matching in $H$.
Then
\begin{equation*}
  \weight(M_H) \geq S_1 + \frac12S_2.
\end{equation*}
\end{claim}
\begin{proof}
The claim follows because there exists a matching in $H$ with that
weight that can be obtained by
taking all edges from $E_1$ and partitioning $E_2$ into two matchings and
taking the one with maximum weight.
\end{proof}

We claim that the matching $M'$ that algorithm $A$ finds in $G'$ in the
second phase has total savings at least $S_1+\frac12 S_2$.
This follows from Claim~\ref{propertyH} because $H$ is a subgraph of $G'$
and we can obtain a perfect matching of $G'$ by taking a maximum-weight
matching of $H$ and matching any remaining vector pairs arbitrarily. Thus, we get:
\begin{equation}
\label{propertyHweight}
  \weight(M') \geq S_1 + \frac12S_2.
\end{equation}

Inequalities (\ref{aboutcost}) and (\ref{propertyHweight}) imply:
\begin{equation}
\label{basicUB}
A(I) = \mbox{cost}(M) - \mbox{weight}(M') \leq \mbox{cost}(\hat{M}) - (S_1+\frac12 S_2).
\end{equation}

Consider the quantity $\cost(\hat{M})-(S_1+\frac12 S_2)$, which - according to (\ref{basicUB}) - serves as an upper bound for the cost of the solution found by algorithm $A$. Informally speaking, we are going to distribute this quantity over the quads from the optimum solution: for each quad $Q$ in the optimum solution, we will define its corresponding ``share'' of $\cost(\hat{M})-(S_1+\frac12 S_2)$ by $\phi_Q$; we will refer to $\phi_Q$ as the {\em contribution reserved for} $Q$. This contribution $\phi_Q$ consists of terms reflecting the contribution to $\hat{M}$, and terms reflecting the contribution to the total savings $S_1+\frac12S_2$. We will show that the choice of $\phi_Q$ satisfies, for each $Q$ from the optimum solution:

\begin{equation}
\label{phiq3over2}
\phi_Q \leq \frac32 \mbox{cost}(Q).
\end{equation}

This leads to the following:
\begin{eqnarray*}
A(I) \leq \mbox{cost}(\hat{M}) - (S_1+\frac12 S_2) = \sum_Q \phi_Q \leq \sum_Q \frac32 \mbox{cost}(Q) = \frac32 OPT(I).
\end{eqnarray*}

The remainder of the proof is devoted to proving the above relationship.
Thus, correctness hinges upon proving that
\begin{enumerate}[labelindent=0pt,labelwidth=\widthof{(iii)},label=\arabic*.,itemindent=1em,leftmargin=!]
\item[(i)] the graph $H$ that we will construct is bipartite, and has maximum degree 2,
\end{enumerate}
and that our choice of $\phi_Q$ satisfies
\begin{enumerate}[labelindent=0pt,labelwidth=\widthof{(iii)},label=\arabic*.,itemindent=1em,leftmargin=!]
\item[(ii)] $\phi_Q \leq \frac32 \mbox{cost}(Q)$ for each $Q$ from the optimum solution, and
\item[(iii)] $\sum_Q \phi_Q = \mbox{cost}(\hat{M}) - (S_1+\frac12 S_2)$.
\end{enumerate}
We now prove (i), (ii), and (iii).

\paragraph*{Proving that the graph $H$ is bipartite, and has maximum degree 2}

\begin{definition}
A \textbf{lucky pair} is a pair of vectors that are in the same quad in the optimum solution
and that are matched by algorithm $A$ in the first phase.
\end{definition}

Let us first construct the edge-set $E_1$ of the graph $H$. Consider an optimal quad $Q$ that contains two lucky pairs $p_1$ and $p_2$. By definition, $p_1$ and $p_2$ correspond to nodes in $H$, and we add the edge $e=(p_1,p_2)$ to $E_1$, and we set its weight equal to the corresponding savings:  $w(e)=\sav(p_1,p_2)$. Notice that neither node $p_1$ nor $p_2$ will be incident to any other edge in $H$.

Let us now proceed with the edge-set $E_2$ of the graph $H$.
The edge set $E_2$ of $H$ will be constructed as follows.
Consider an auxiliary multi-graph $K$ whose vertices are the quads
of the optimal solution that contain at most one lucky pair.
For every vector pair $(v_1,v_2)$ that is matched by the algorithm in
the first phase, add an edge
$(Q_1,Q_2)$ to $K$, where $Q_i$ is the optimal quad that contains
the vector $v_i$ for $i=1,2$. We say that this edge $(Q_1,Q_2)$
\emph{corresponds} to the pair $(v_1,v_2)$. If $(v_1,v_2)$ is
a lucky pair, the edge added to $K$ is a self-loop at the node corresponding to the quad that contains $v_1$ and $v_2$. Every
edge of $K$ corresponds to a vertex of the auxiliary graph $H$,
as the vertices of $H$ are the pairs of vectors matched by the algorithm
in the first phase.
Note that each vertex in $K$ has degree four, where a self-loop
contributes $2$ to the degree of the vertex to which it is attached.
As every vertex of $K$ has even degree,
every connected component of $K$ admits an Eulerian cycle.
Note that each Eulerian cycle of a connected component of $K$
has an even number of edges as each vertex in the component
has degree~$4$ and the number of edges in a multi-graph with
self-loops is equal to half the sum of the vertex degrees.

Pick an arbitrary Eulerian cycle (possibly including self-loops) in each component of~$K$. We will use
these Eulerian cycles to determine edges to be added to $E_2$
in $H$ in
such a way that $H$ is bipartite and has maximum degree~$2$.
Orient each Eulerian cycle in an arbitrary way into a directed
cycle. For every pair of consecutive edges $(Q_i,Q_{i+1})$ and
$(Q_{i+1},Q_{i+2})$ on such a cycle,
where $(Q_i,Q_{i+1})$ corresponds to $p_1=\{v_i,v_{i+1}\}$
and $(Q_{i+1},Q_{i+2})$ to $p_2=\{v_{i+1}',v_{i+2}\}$, add
the edge $e=(p_1,p_2)$ to~$H$, and set its weight to $w(e) = \sav(p_1,p_2)$. The edge $(p_1,p_2)$ is considered
to be added for the optimal quad $Q_{i+1}$, and its savings
are used only in the analysis of that optimal quad.
See Figure~\ref{fig:QKH} for an example of graph $K$ and $H$.

\begin{figure}[ht]
\centering
\scalebox{0.8}{
\begin{tikzpicture}

\def\xShi{0};
\def\yShi{0};
\def\inter{0.3};
\def\size{0.75};
\node[label = east:{OPT quads and first phase matching $M$:}] at (\xShi-2,\yShi+2) {};
\node[label = above:{$Q_1$}] at (\xShi,\yShi+\size) {};
\draw (\xShi,\yShi) circle (\size cm);
\node(q11) at (\xShi -\inter,\yShi+\inter) [circle,fill=black,inner sep=0pt,minimum size=5pt] {};
\node(q12) at (\xShi +\inter,\yShi+\inter) [circle,fill=black,inner sep=0pt,minimum size=5pt] {};
\node(q13) at (\xShi -\inter,\yShi-\inter) [circle,fill=black,inner sep=0pt,minimum size=5pt] {};
\node(q14) at (\xShi +\inter,\yShi-\inter) [circle,fill=black,inner sep=0pt,minimum size=5pt] {};
\draw[thick](q11)--(q13);
\node[label=west:{$p_0$}] at  (\xShi-.75,\yShi) {};
\def\xShi{2};
\def\yShi{0};
\node[label = above:{$Q_2$}] at (\xShi,\yShi+\size) {};
\node[label=above:{$p_1$}] at  (\xShi-1,\yShi+0.5) {};
\node[label=below:{$p_2$}] at  (\xShi-1,\yShi-0.5) {};
\draw (\xShi,\yShi) circle (\size cm);
\node(q21) at (\xShi -\inter,\yShi+\inter) [circle,fill=black,inner sep=0pt,minimum size=5pt] {};
\node(q22) at (\xShi +\inter,\yShi+\inter) [circle,fill=black,inner sep=0pt,minimum size=5pt] {};
\node(q23) at (\xShi -\inter,\yShi-\inter) [circle,fill=black,inner sep=0pt,minimum size=5pt] {};
\node(q24) at (\xShi +\inter,\yShi-\inter) [circle,fill=black,inner sep=0pt,minimum size=5pt] {};
\def\xShi{4};
\def\yShi{0};
\node[label = above:{$Q_3$}] at (\xShi,\yShi+\size) {};
\node[label=above:{$p_3$}] at  (\xShi-1,\yShi+0.5) {};
\node[label=below:{$p_4$}] at  (\xShi-1,\yShi-0.5) {};
\draw (\xShi,\yShi) circle (\size cm);
\node(q31) at (\xShi -\inter,\yShi+\inter) [circle,fill=black,inner sep=0pt,minimum size=5pt] {};
\node(q32) at (\xShi +\inter,\yShi+\inter) [circle,fill=black,inner sep=0pt,minimum size=5pt] {};
\node(q33) at (\xShi -\inter,\yShi-\inter) [circle,fill=black,inner sep=0pt,minimum size=5pt] {};
\node(q34) at (\xShi +\inter,\yShi-\inter) [circle,fill=black,inner sep=0pt,minimum size=5pt] {};
\draw[thick](q32)--(q34);
\node(q11a)[label=east:{$p_5$}] at  (\xShi+.75,\yShi) {};

\draw[thick](q12)--(q21);
\draw[thick](q14)--(q23);
\draw[thick](q22)--(q31);
\draw[thick](q24)--(q33);

\def\xShi{0};
\def\yShi{-3.5};
\node[label = east:{Graph $K$:}] at (\xShi-2,\yShi+1.5) {};
\node[label=west:{$p_0$}] at  (\xShi-.75,\yShi) {};
\node(q1)[label=above:{$Q_1$}] at  (\xShi,\yShi) [circle,fill=black] {};
\node(q12a)[label=above:{$p_1$}] at  (\xShi+1,\yShi+0.5) {};
\node(q12b)[label=below:{$p_2$}] at  (\xShi+1,\yShi-0.5) {};
\node(q2)[label=above:{$Q_2$}] at  (2,\yShi) [circle,fill=black] {};
\node(q23a)[label=above:{$p_3$}] at  (\xShi+3,\yShi+0.5) {};
\node(q23b)[label=below:{$p_4$}] at  (\xShi+3,\yShi-0.5) {};
\node(q3)[label=above:{$Q_3$}] at  (4,\yShi) [circle,fill=black] {};
\node[label=east:{$p_5$}] at  (\xShi+4.75,\yShi) {};
\draw ((\xShi-.5,\yShi) circle(0.5*\size);
\draw (q1) .. controls (q12a) .. (q2);
\draw (q1) .. controls (q12b) .. (q2);
\draw (q2) .. controls (q23a) .. (q3);
\draw (q2) .. controls (q23b) .. (q3);
\draw ((\xShi+4.5,\yShi) circle(0.5*\size);

\def\xShi{0};
\def\yShi{-7};
\node[label = east:{Two possible graphs $H$:}] at (\xShi-2,\yShi+1.5) {};
\node(p0)[label=west:{$p_0$}] at  (\xShi-.75,\yShi) [circle,fill=black,inner sep=0pt,minimum size=7.5pt] {};
\node(p1)[label=above:{$p_1$}] at  (\xShi+1,\yShi+.5) [circle,fill=black,inner sep=0pt,minimum size=7.5pt] {};
\node(p2)[label=below:{$p_2$}] at  (\xShi+1,\yShi-.5) [circle,fill=black,inner sep=0pt,minimum size=7.5pt] {};
\node(p3)[label=above:{$p_3$}] at  (\xShi+3,\yShi+.5) [circle,fill=black,inner sep=0pt,minimum size=7.5pt] {};
\node(p4)[label=below:{$p_4$}] at  (\xShi+3,\yShi-.5) [circle,fill=black,inner sep=0pt,minimum size=7.5pt] {};
\node(p5)[label=east:{$p_5$}] at  (\xShi+4.75,\yShi) [circle,fill=black,inner sep=0pt,minimum size=7.5pt]{};
\draw(p0)--(p1);
\draw(p3)--(p1);
\draw(p0)--(p2);
\draw(p4)--(p2);
\draw(p4)--(p5);
\draw(p3)--(p5);

\def\xShi{0};
\def\yShi{-10};
\node(p0)[label=west:{$p_0$}] at  (\xShi-.75,\yShi) [circle,fill=black,inner sep=0pt,minimum size=7.5pt] {};
\node(p1)[label=above:{$p_1$}] at  (\xShi+1,\yShi+.5) [circle,fill=black,inner sep=0pt,minimum size=7.5pt] {};
\node(p2)[label=below:{$p_2$}] at  (\xShi+1,\yShi-.5) [circle,fill=black,inner sep=0pt,minimum size=7.5pt] {};
\node(p3)[label=above:{$p_3$}] at  (\xShi+3,\yShi+.5) [circle,fill=black,inner sep=0pt,minimum size=7.5pt] {};
\node(p4)[label=below:{$p_4$}] at  (\xShi+3,\yShi-.5) [circle,fill=black,inner sep=0pt,minimum size=7.5pt] {};
\node(p5)[label=east:{$p_5$}] at  (\xShi+4.75,\yShi) [circle,fill=black,inner sep=0pt,minimum size=7.5pt]{};
\draw(p0)--(p1);
\draw(p4)--(p1);
\draw(p0)--(p2);
\draw(p2)--(p3);
\draw(p4)--(p5);
\draw(p3)--(p5);

\end{tikzpicture}
}
\caption{Example of graphs \texorpdfstring{$K$}{K} and \texorpdfstring{$H$}{H} with \texorpdfstring{$p_0$}{p0} and \texorpdfstring{$p_5$}{p5} as lucky pairs}
\label{fig:QKH}
\end{figure}

\begin{claim}
\label{cla:Hbipartite}
The graph $H$ is bipartite, and has maximum degree 2.
\end{claim}
\begin{proof}
Observe that, for every component of $K$, the edges added to $H$
form a single cycle consisting of an even number of edges. This is true because
the vertices of $H$ correspond to edges of $K$ and the edges of $H$ connect
consecutive edges of an Eulerian cycle of $K$, which has even length.
\end{proof}

Clearly, when building $H$ as described above, different choices can be made, since there might be different Eulerian cycles possible in $K$ in the case of a quad not containing a lucky pair. We need to be explicit about these different possibilities. Indeed, consider any quad $Q=\{v_1,v_2,v_3,v_4\}$ that does not contain a lucky pair. Let $p_i$ be the vector pair containing $v_i$ that was matched by the algorithm in the first phase, for $1\le i\le 4$. Let $(p_1,p_3)$
and $(p_2,p_4)$ be the edges added to $H$, and observe that they lie on a single even-length cycle $C$ in $H$ (namely, the cycle created in $H$ from the Eulerian cycle of the component of $K$ that contains $Q$). Assume, without loss of generality, that after removing the edges $(p_1,p_3)$ and $(p_2,p_4)$ from $H$, the cycle $C$ splits into two paths in $H$, one between node $p_1$ and node $p_2$, and one between node $p_3$ and node $p_4$. Define $\{\{p_1,p_2\},\{p_3,p_4\}\}$ to be the \emph{good partition} associated with $Q$.

Observe now that replacing the edges $(p_1,p_3)$  and $(p_2,p_4)$ in $H$ by the edges $(p_1,p_4)$ and $(p_2,p_3)$ in $H$ replaces the cycle $C$
by another cycle $C'$ consisting of the same number of edges. This means that
any one of the two possible combinations of two independent edges
between a vertex on one side of the good partition and a vertex
on the other side of the good partition can be chosen
for inclusion in~$E_2$, while maintaining the property that
$E_2$ consists of even-length cycles. For example, in Figure~\ref{fig:QKH}, we show the two possible cycles of even length for graph $H$.

Observe that the discussion following Claim~\ref{cla:Hbipartite} has identified a collection of graphs $H$, each satisfying the conditions of Claim~\ref{cla:Hbipartite}.

\paragraph*{Proving that $\phi_Q \leq \frac32 \cost(Q) \mbox{ for each quad }Q$}

Now we are ready to analyse the contribution to $\phi$ of each quad~$Q$ from the optimum solution. There are three types of quads in an optimum solution:
\begin{itemize}
\item those quads that contain two lucky pairs; let us refer to this set of quads as $O_2$,
\item those quads that contain one lucky pair; let us refer to this set of quads as $O_1$,
\item those quads that contain no lucky pairs; let us refer to this set of quads as $O_0$.
\end{itemize}

Let us first consider the quads from the set $O_2$. Let $Q \in O_2$ equal $\{p_1,p_2\}$. We set
\begin{equation}
\label{eq:phiQO0}
\phi_Q = \mbox{cost}(p_1) + \mbox{cost}(p_2) - \sav(p_1,p_2).
\end{equation}

Since $\cost(Q) = \mbox{cost}(p_1) + \mbox{cost}(p_2) - \sav(p_1,p_2) = \phi_Q$, it trivially follows that:
\begin{equation}
\label{twoluckybound}
\phi_Q \leq \frac32 \cost(Q) \mbox{ for each quad }Q \in O_2.
\end{equation}

Next, we consider the quads containing a single lucky pair, i.e., the quads from $O_1$. Thus, with $Q=\{u_1,u_2,v_1,v_2\}$, let the lucky pair be
$p=\{u_1,u_2\}$, and let $p_1=\{v_1,w_1\}$ and $p_2=\{v_2,w_2\}$ be the other nodes in $H$ that contain the vectors from quad $Q$. Clearly, the edge-set $E_2$ contains the edge $(p,p_1)$ as well as $(p,p_2)$, with weights respectively $\sav(p,p_1)$ and $\sav(p,p_2)$. Observe that there are no other edges in $H$ incident to node $p$.

For each $Q=\{u_1,u_2,v_1,v_2\} \in O_1$, we set:
\begin{equation}
\label{eq:phiQO1}
\phi_Q = \cost(p)+|v_1\vee v_2|-\frac12(\sav(p,p_1)+\sav(p,p_2)).
\end{equation}

The cost of $Q$ is:
\begin{equation}
\label{eq:costQO1}
\cost(Q)=\cost(p)+|v_1\vee v_2|-\sav(p,v_1\vee v_2).
\end{equation}
Notice that
$\cost(Q)\ge\cost(p)$ and
$\cost(Q)\ge|v_1\vee v_2|$,
and therefore:
\begin{equation}
\label{eq:costQO1UB}
\cost(Q)\ge\frac12(\cost(p)+|v_1\vee v_2|).
\end{equation}

Further, we have:
\begin{equation}
\label{eq:savO1}
\sav(p,v_1\vee v_2) \le
\sav(p,v_1)+\sav(p,v_2) \le
\sav(p,p_1)+\sav(p,p_2).
\end{equation}
Combining (\ref{eq:phiQO1}), (\ref{eq:costQO1UB}), and (\ref{eq:savO1}) gives, for each quad $Q \in O_1$:
\begin{eqnarray}
\label{oneluckybound}
\nonumber
\phi_Q &=& \cost(p)+|v_1\vee v_2|-\frac12(\sav(p,p_1)+\sav(p,p_2))\\
\nonumber
& \le &
\cost(p)+|v_1\vee v_2|-\frac12\sav(p,v_1\vee v_2) \\
\nonumber
& = &
\frac12(\cost(p)+|v_1\vee v_2|)
+\frac12(\cost(p)+|v_1\vee v_2|-\sav(p,v_1\vee v_2)) \\
& \le & \cost(Q)
+\frac12\cdot\cost(Q)
= \frac32\cdot\cost(Q).
\end{eqnarray}

Now, consider a quad $Q=\{v_1,v_2,v_3,v_4\}$ from $O_0$, i.e., a quad with no lucky pairs. As mentioned before, the term $\phi_Q$ consists of terms reflecting the contribution to $\hat{M}$, and terms reflecting the contribution to the savings. As there are three ways to choose two vector pairs from $Q$, the contribution to the matching $\hat{M}$ can be realized by any of the
three expressions
$|v_1\vee v_2|+|v_3\vee v_4|$,
$|v_1\vee v_3|+|v_2\vee v_4|$,
$|v_1\vee v_4|+|v_2\vee v_3|$.

Let $p_i$ be the vector pair matched by the algorithm in the first phase that contains $v_i$, for $1\le i\le 4$. Assume that $\{\{p_1,p_2\},\{p_3,p_4\}\}$ is the good partition associated with~$Q$; it follows that we can choose for inclusion in $H$ either the edges $(p_1,p_3)$ and $(p_2,p_4)$ or the edges $(p_1,p_4)$ and $(p_2,p_3)$. Depending on which pair of edges we choose for inclusion in $H$, the term in $\phi_Q$ that reflects the contribution to the savings equals either $\frac12(\sav(p_1,p_3)+\sav(p_2,p_4))$ or $\frac12(\sav(p_1,p_4)+\sav(p_2,p_3))$. Observe that we have:

\begin{eqnarray}
\label{boundsonsavings1}
\sav(p_1,p_3)+\sav(p_2,p_4) \geq \sav(v_1,v_3) + \sav(v_2,v_4) \\
\label{boundsonsavings2}
\sav(p_1,p_4)+\sav(p_2,p_3) \geq \sav(v_1,v_4) + \sav(v_2,v_3).
\end{eqnarray}

Summarizing, depending on which pairs of vectors from $Q$ we
put in $\hat{M}$ (there are three possibilities), and which two edges we add to $H$ (there are two possibilities), we can get (using (\ref{boundsonsavings1})-(\ref{boundsonsavings2}))
any of the following six bounds on the contribution $\phi_Q$ of $Q$:

\begin{eqnarray}
\phi_Q & \leq & |v_1\vee v_2|+|v_3\vee v_4| -\frac12(\sav(v_1,v_3)+\sav(v_2,v_4)), \label{eq:b1}\\
\phi_Q & \leq & |v_1\vee v_2|+|v_3\vee v_4| -\frac12(\sav(v_1,v_4)+\sav(v_2,v_3)), \label{eq:b2}\\
\phi_Q & \leq & |v_1\vee v_3|+|v_2\vee v_4| -\frac12(\sav(v_1,v_3)+\sav(v_2,v_4)), \label{eq:b3}\\
\phi_Q & \leq & |v_1\vee v_3|+|v_2\vee v_4| -\frac12(\sav(v_1,v_4)+\sav(v_2,v_3)), \label{eq:b4}\\
\phi_Q & \leq & |v_1\vee v_4|+|v_2\vee v_3| -\frac12(\sav(v_1,v_3)+\sav(v_2,v_4)), \label{eq:b5}\\
\phi_Q & \leq & |v_1\vee v_4|+|v_2\vee v_3| -\frac12(\sav(v_1,v_4)+\sav(v_2,v_3)). \label{eq:b6}
\end{eqnarray}
We can choose the pairs of vectors for $\hat{M}$ and the two edges
we add to $H$ for $Q$ in such a way that the smallest of these six
bounds becomes an upper bound on the contribution reserved for $Q$.
For that choice, all right-hand sides (\ref{eq:b1})-(\ref{eq:b6}) are upper bounds on the contribution $\phi_Q$.

Since we may assume that each vector $v_i$ is a $\{0,1\}$-vector, and since the right-hand sides of (\ref{eq:b1})-(\ref{eq:b6}) involve the four vectors of quad $Q$, there are $2^4=16$ possible configurations for the 4 values of a particular component in the four vectors. Thus, for each $j=0,1,\ldots, 15$, we can write its binary expansion as $j=8 b_{j,1} + 4 b_{j,2}+ 2 b_{j,3} + b_{j,4}$. We denote by $n_j$ the number of components of the vectors of the quad whose values equal the binary expansion of $j$, i.e., the number of components $r$ with $v_{1,r} = b_{j,1}$, $v_{2,r} = b_{j,2}$, $v_{3,r} = b_{j,3}$, $v_{4,r} = b_{j,4}$.

We can now express each of the relevant quantities as linear functions
of the $n_j$:
\begin{eqnarray}
\cost(Q) & = & n_1+n_2+\cdots+n_{15}, \label{eq:Q}\\
|v_1\vee v_2| & = & n_4+n_5+n_6+n_7+n_8+n_9+n_{10}\nonumber\\&&\mbox{}+n_{11}+n_{12}+n_{13}+n_{14}+n_{15},\\
|v_1\vee v_3| & = & n_2+n_3+n_6+n_7+n_8+n_9+n_{10}\nonumber\\&&\mbox{}+n_{11}+n_{12}+n_{13}+n_{14}+n_{15},\\
|v_1\vee v_4| & = & n_1+n_3+n_5+n_7+n_8+n_9+n_{10}\nonumber\\&&\mbox{}+n_{11}+n_{12}+n_{13}+n_{14}+n_{15},\\
|v_2\vee v_3| & = & n_2+n_3+n_4+n_5+n_6+n_7+n_{10}\nonumber\\&&\mbox{}+n_{11}+n_{12}+n_{13}+n_{14}+n_{15},\\
|v_2\vee v_4| & = & n_1+n_3+n_4+n_5+n_6+n_7+n_9\nonumber\\&&\mbox{}+n_{11}+n_{12}+n_{13}+n_{14}+n_{15},\\
|v_3\vee v_4| & = & n_1+n_2+n_3+n_5+n_6+n_7+n_9\nonumber\\&&\mbox{}+n_{10}+n_{11}+n_{13}+n_{14}+n_{15},\\
\sav(v_1,v_3) & = & n_{10}+n_{11}+n_{14}+n_{15},\\
\sav(v_2,v_4) & = & n_5+n_7+n_{13}+n_{15},\\
\sav(v_1,v_4) & = & n_9+n_{11}+n_{13}+n_{15},\\
\sav(v_2,v_3) & = & n_6+n_7+n_{14}+n_{15}. \label{eq:s23}
\end{eqnarray}

Using the identities in (\ref{eq:Q})--(\ref{eq:s23}), we can write the inequalities (\ref{eq:b1}), (\ref{eq:b4}), and (\ref{eq:b6}) as
follows:
\begin{eqnarray}
\phi_Q & \le & n_1+n_2+n_3+n_4+\frac32n_5+2n_6+\frac32n_7+n_8\nonumber\\&&\mbox{}+2n_9+\frac32n_{10}+\frac32n_{11}+n_{12}+\frac32n_{13}+\frac32n_{14}+n_{15}, \label{eq:zb1}\\
\phi_Q & \le & n_1+n_2+2n_3+n_4+n_5+\frac32n_6+\frac32n_7+n_8\nonumber\\&&\mbox{}+\frac32n_9+n_{10}+\frac32n_{11}+2n_{12}+\frac32n_{13}+\frac32n_{14}+n_{15}, \label{eq:zb4}\\
\phi_Q & \le & n_1+n_2+2n_3+n_4+2n_5+\frac12n_6+\frac32n_7+n_8\nonumber\\&&\mbox{}+\frac12n_9+2n_{10}+\frac32n_{11}+2n_{12}+\frac32n_{13}+\frac32n_{14}+n_{15}. \label{eq:zb6}
\end{eqnarray}

Multiplying (\ref{eq:zb1}) by $\frac12$, (\ref{eq:zb4}) by $\frac14$, and
(\ref{eq:zb6}) by $\frac14$, and then adding the three inequalities, we get, for each quad $Q \in O_0$:
\begin{eqnarray}
\label{noluckybound}
\nonumber
\phi_Q & \le & n_1+n_2+\frac32n_3+n_4+\frac32n_5+\frac32n_6+\frac32n_7+n_8\\
\nonumber
&&\mbox{}+\frac32n_9+\frac32n_{10}+\frac32n_{11}+\frac32n_{12}+\frac32n_{13}+\frac32n_{14}+n_{15}\\
\nonumber
& =& \frac32 \cdot \sum_{j=1}^{15} n_j - \frac12n_1 - \frac12n_2 - \frac12n_4 - \frac12n_8 - \frac12n_{15}\\
\nonumber
& =& \frac32 \cdot \cost(Q) - \frac12n_1 - \frac12n_2 - \frac12n_4 - \frac12n_8 - \frac12n_{15}\\
& \le& \frac32 \cdot \cost(Q).
\end{eqnarray}

Thus, (\ref{twoluckybound}), (\ref{oneluckybound}), and (\ref{noluckybound}) show that indeed:
\[
\phi_Q \leq \frac32 \mbox{cost}(Q) \mbox{ for each } Q \mbox{  from the optimum solution.}
\]

\paragraph*{Proving that $\sum_Q \phi_Q = \mbox{cost}(\hat{M}) - (S_1+\frac12 S_2)$}

Since a quad contains either zero, one, or two lucky pairs, it follows that the expressions for $\phi_Q$ given in (\ref{eq:phiQO0}), (\ref{eq:phiQO1}), (\ref{eq:b1})-(\ref{eq:b6}) contain the terms that jointly sum up to $\cost(\hat{M})$. Additionally, it is not difficult to verify that the construction of the graph $H$ is such that the savings on the edges of $H$ (as defined in (\ref{eq:phiQO0}), (\ref{eq:phiQO1}), (\ref{eq:b1})-(\ref{eq:b6})) sum up to $S_1+\frac12S_2$.

The proof is complete.
\end{proof}

\subsection{Approximation analysis for \texorpdfstring{PQ$(\#1=2)$}{PQ(\#1=2)}}
\label{sec:ubPQtwoones}

\begin{lemma}
\label{ubPQexactlytwoones}
The worst-case ratio of algorithm $A$ for PQ$(\#1=2)$ is at most~$\frac{4}{3}$.
\end{lemma}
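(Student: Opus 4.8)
The plan is to follow the same template as the proof of Lemma~\ref{ubPQ}, but to exploit the extra structure of instances in PQ$(\#1=2)$, where every vector has exactly two ones, to push the per-quad bound from $\frac32\cost(Q)$ down to $\frac43\cost(Q)$. Recall from Corollary~\ref{cor:sav} that $A(I)=\cost(M)-\weight(M')$, that $\cost(M)\le\cost(\hat M)$ for the matching $\hat M$ induced by an optimal solution, and that the auxiliary graph $H$ (bipartite, maximum degree $2$, every edge joining vertices of equal degree) yields $\weight(M')\ge S_1+\frac12 S_2$ by Claim~\ref{propertyH}. So it again suffices to construct $H$ exactly as before, split the optimal quads into $O_2$ (two lucky pairs), $O_1$ (one lucky pair), $O_0$ (no lucky pair), distribute $\cost(\hat M)-(S_1+\frac12 S_2)$ as $\sum_Q\phi_Q$ using the same formulas, and verify $\phi_Q\le\frac43\cost(Q)$ for every $Q$ in each class.

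First I would handle $O_2$: here $\cost(Q)=\cost(p_1)+\cost(p_2)-\sav(p_1,p_2)=\phi_Q$, so $\phi_Q=\cost(Q)\le\frac43\cost(Q)$ trivially — and note $\cost(Q)\ge 2$ always (two vectors of weight two span at least two nodes, and in fact a quad's cost is at least $2$), so nothing special is needed. For $O_1$, with lucky pair $p$ and the two other pairs $p_1,p_2$ containing the remaining two vectors $v_1,v_2$, I keep $\phi_Q=\cost(p)+|v_1\vee v_2|-\frac12(\sav(p,p_1)+\sav(p,p_2))$. Here the key new ingredient is that $|v_1\vee v_2|\le 4$ and $\cost(p)\le 4$, while $\cost(Q)$ is an integer that is at least $\max(\cost(p),|v_1\vee v_2|,\text{(number of nodes touched)})$; in particular when $\cost(Q)$ is as small as $3$ or $4$ the bounds $\cost(Q)\ge\cost(p)$ and $\cost(Q)\ge|v_1\vee v_2|$ together with $\cost(Q)\ge\cost(p)+|v_1\vee v_2|-\sav(p,v_1\vee v_2)$ and $\sav(p,v_1\vee v_2)\le\sav(p,p_1)+\sav(p,p_2)$ (as in (\ref{eq:savO1})) should give $\phi_Q\le\frac43\cost(Q)$ after a short case analysis on the small values of $\cost(Q)$, $\cost(p)$, $|v_1\vee v_2|$ — I would do this by hand, treating the few cases $\cost(Q)\in\{3,4,5\}$ and using that $\cost(p),|v_1\vee v_2|\in\{2,3,4\}$, and for larger $\cost(Q)$ the crude bound $\phi_Q\le\cost(p)+|v_1\vee v_2|\le 8\le\frac43\cost(Q)$ already suffices once $\cost(Q)\ge 6$, with the boundary handled directly.

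For $O_0$, I would reuse the six inequalities (\ref{eq:b1})--(\ref{eq:b6}) and the component-count variables $n_0,\dots,n_{15}$, but now the constraint ``every $v_i$ has exactly two ones'' forces $n_8+n_9+n_{10}+n_{11}+n_{12}+n_{13}+n_{14}+n_{15}+n_j(\text{appropriate weight-counting}) $ relations — concretely each column of the $4\times(\text{length})$ incidence array has exactly two ones in each \emph{row}, which translates into four linear equalities among the $n_j$ (one per vector: the total number of components where $v_i=1$ equals $2$). These equalities let me eliminate the "expensive" configurations; I expect that taking a suitable convex combination of (a subset of) the six bounds, now with coefficients chosen to exploit the equalities $\sum_{j:\,b_{j,i}=1}n_j=2$ for $i=1,2,3,4$, yields $\phi_Q\le\frac43\sum_{j\ge 1}n_j=\frac43\cost(Q)$, analogously to how multiplying (\ref{eq:zb1}),(\ref{eq:zb4}),(\ref{eq:zb6}) by $\frac12,\frac14,\frac14$ produced the $\frac32$ bound — here the weights and the chosen inequalities will differ, and the four extra equalities are exactly what buys the improvement from $\frac32$ to $\frac43$. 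The main obstacle is this last step: finding the right combination of bounds and the right use of the degree-two (exactly-two-ones) equalities so that all fifteen coefficients come out at most $\frac43$; I would locate it by setting up the small LP "minimize the worst coefficient over convex combinations of (\ref{eq:b1})--(\ref{eq:b6}) subject to the four equalities", solve it (it is tiny), and then present the resulting certificate as an explicit weighted sum, checking the final coefficient vector by inspection. Finally, the identity $\sum_Q\phi_Q=\cost(\hat M)-(S_1+\frac12 S_2)$ holds verbatim as in Lemma~\ref{ubPQ} since the $\phi_Q$ formulas and the construction of $H$ are unchanged, so combining everything gives $A(I)\le\cost(\hat M)-(S_1+\frac12S_2)=\sum_Q\phi_Q\le\frac43\sum_Q\cost(Q)=\frac43\,\OPT(I)$.
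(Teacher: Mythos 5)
The approach you outline is genuinely different from the paper's, and it has a fatal gap in the $O_0$ step. The paper abandons the convex-combination idea entirely for this lemma and instead does an explicit combinatorial case analysis on the subgraph of $F$ induced by each optimal quad, distinguishing $\cost(Q)=2,3,4,5,\ge 6$ and, within each, the possible graph shapes (Figures~\ref{fig:PQMQ2}--\ref{fig:PQMQ5}); it also leans heavily on Lemma~\ref{lem:identical} (a min-cost matching that maximizes the number of identical pairs) to force every quad with a repeated edge to contain a lucky pair. For each shape and each possible good partition it picks, ad hoc, the best pair of $\hat M$-edges and the best pair of $H$-edges.

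Your plan of finding a single fixed convex combination of the six bounds (\ref{eq:b1})--(\ref{eq:b6}) that certifies $\phi_Q\le\frac43\cost(Q)$ for every $O_0$ quad cannot succeed, even with the four exactly-two-ones equalities. A $4$-cycle $Q$ (the paper's case $\cost(Q)=4$, sub-case (vi)) is a concrete counterexample. It is a valid $O_0$ quad in PQ$(\#1=2)$, and the adversary controls the good partition, so you must cover all three labelings $\{\{p_1,p_2\},\{p_3,p_4\}\}$. Evaluating (\ref{eq:b1})--(\ref{eq:b6}) on the $4$-cycle gives the three vectors of right-hand sides $(7,7,5,5,5,5)$, $(6,5,8,7,6,5)$, and $(5,6,5,6,7,8)$ for the three labelings. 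For every index $i$ the three entries sum to exactly $18$, so for any $\lambda\ge 0$ with $\sum_i\lambda_i=1$ the sum of the three weighted bounds is $18$ — hence at least one of them is $\ge 6=\frac32\cost(Q)$. No convex combination can push the guaranteed bound below $\frac32\cost(Q)$ for this quad; the $\frac43$ certificate you hope to extract from the LP simply does not exist. (The paper gets $\phi_Q\le 5<\frac43\cdot 4$ only because it chooses, separately for each good partition, which of the six bounds to use; that choice-dependent argument is exactly what a fixed $\lambda$ cannot simulate.)

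The $O_1$ step is also shakier than you suggest. Using only the outer inequality $\sav(p,v_1\vee v_2)\le\sav(p,p_1)+\sav(p,p_2)$ gives $\phi_Q\le\cost(p)+|v_1\vee v_2|-\frac12\sav(p,v_1\vee v_2)$, and with $\cost(Q)=\cost(p)+|v_1\vee v_2|-\sav(p,v_1\vee v_2)$ you need $\sav(p,v_1\vee v_2)\le\frac25(\cost(p)+|v_1\vee v_2|)$, which already fails for, say, $\cost(p)=|v_1\vee v_2|=\sav(p,v_1\vee v_2)=3$ (giving $4.5>\frac43\cdot 3$). You would need the stronger middle inequality $\sav(p,v_1)+\sav(p,v_2)\le\sav(p,p_1)+\sav(p,p_2)$ \emph{and} the structural consequences of Lemma~\ref{lem:identical} (repeated edges produce lucky pairs, so such configurations land in $O_1$ or $O_2$ with extra savings) to close all the small-cost cases — i.e., essentially the paper's case analysis again. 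In short: the template transfers, but the key quantitative step does not, and the algebraic shortcut you propose is provably blocked by the $4$-cycle.
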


\begin{proof}
Recall that an instance of PQ$(\#1=2)$ is a multi-graph $G$ with $4k$ edges. We assume that the algorithm chooses in the first phase a matching that matches as many pairs of duplicates as possible (see Lemma~\ref{lem:identical}).

Our analysis follows the structure of the proof of Lemma~\ref{ubPQ}, but we obtain a better bound on the ratio between $\phi_Q$ and $\mbox{cost}(Q)$ by exploiting the restricted set of configurations that are possible for a quad $Q$ in the optimal solution if every vector in the given instance of the problem has exactly two ones.

By specifying two edge pairs in each quad from the optimal
solution, we obtain again a matching $\hat{M}$ whose cost is an upper bound
on the cost of the matching $M$ computed by the algorithm in the
first phase, i.e. (cf.\ (\ref{aboutcost})):

\begin{equation}
\label{aboutcostno2}
\cost(M) \leq \cost(\hat{M}), \mbox{ for any possible choice of }\hat{M}.
\end{equation}

Furthermore, we again construct an auxiliary graph $H=(V',E_1\cup E_2)$ that represents potential matches between vector pairs in $M$ with corresponding savings that algorithm~$A$ could make in the second phase. The weight $w(e)$ of each edge $e$ in the graph $H$ represents the savings that algorithm $A$ would realize in the second phase if it were to match the vector pairs that are the endpoints of~$e$. The graph $H$ is again bipartite (we will use the concept of good partitions introduced in the proof of Lemma~\ref{ubPQ} to ensure this) and has maximum degree~$2$, and the edges in $E_1$ connect vertices of degree~$1$ and the edges in $E_2$ connect vertices of degree~$2$. Letting $S_1$ denote the total weight of $E_1$, and $S_2$ the total weight of $E_2$, it follows from Claim~\ref{propertyH} that the maximum-weight matching $M_H$ in $H$ satisfies $\mbox{weight}(M_H)\ge S_1+\frac12 S_2$, and hence the weight of the matching $M'$ that algorithm~$A$ finds in the second phase satisfies:

\begin{equation}
\label{propertyHweight2}
  \weight(M') \geq S_1 + \frac12S_2.
\end{equation}
Thus, we again have the relationship
\begin{eqnarray*}
A(I) \leq \mbox{cost}(\hat{M}) - (S_1+\frac12 S_2)
\end{eqnarray*}
and distribute the value on the right-hand side over the quads of the optimal solution, with each quad $Q$
receiving a share $\phi_Q$ that is referred to as the contribution reserved for~$Q$.
We will show that $\phi_Q\le \frac{4}{3}\mbox{cost}(Q)$ holds for all quads $Q$ of the
optimal solution, implying that:
\begin{eqnarray*}
A(I) \leq \mbox{cost}(\hat{M}) - (S_1+\frac12 S_2) = \sum_Q \phi_Q \le \sum_Q \frac{4}{3} \mbox{cost}(Q) = \frac{4}{3}\OPT(I)
\end{eqnarray*}

Thus, correctness hinges upon proving that
\begin{enumerate}[labelindent=0pt,labelwidth=\widthof{(iii)},label=\arabic*.,itemindent=1em,leftmargin=!]
\item[(i)] $\phi_Q \leq \frac{4}{3} \mbox{cost}(Q)$ for each $Q$ from the optimum solution,
\item[(ii)] $\sum_Q \phi_Q = \mbox{cost}(\hat{M})-(S_1+\frac12 S_2)$, and
\item[(iii)] the graph $H$ that we construct is bipartite and has maximum degree 2.
\end{enumerate}

\paragraph*{Proving that $\phi_Q \leq \frac{4}{3} \mbox{cost}(Q)$ for each $Q$ from the optimum solution}
Note that a quad contains four edges of multi-graph $G$ and its cost equals the number of vertices in the subgraph induced by these four edges. Thus, in this proof we write quad $Q$ from the optimum solution as a set of four edges of $G$, i.e., $Q=(e_1,e_2,e_3,e_4)$.
We distinguish the following cases for the quads depending on the value of $\cost(Q)$:

\begin{itemize}

\begin{figure}[ht]
\centering
\begin{tikzpicture}
\tikzstyle{mynode}=[circle,draw, fill = white]
\node (n11) at (0,4.75){};
\node (n12) at (0,4.9){};
\node (n13) at (0,5.1){};
\node (n14) at (0,5.25){};
\node (n21) at (2,4.75){};
\node (n22) at (2,4.9){};
\node (n23) at (2,5.1){};
\node (n24) at (2,5.25){};
\draw (n11) -- (n21);
\draw (n12) -- (n22);
\draw (n13) -- (n23);
\draw (n14) -- (n24);
\node at (0,5) [mynode] {1};
\node  at (2,5) [mynode] {2};
\end{tikzpicture}
\caption{$\cost(Q)=2$; 1 and 2 are nodes in $G$}
\label{fig:PQMQ2}
\end{figure}
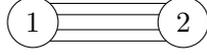

\item $\cost(Q)=2$. This means all four edges are identical, see Figure~\ref{fig:PQMQ2}.
As we may assume that the algorithm matches as many duplicate vectors (edges of $G$) as possible (Lemma \ref{lem:identical}),
we can say that the algorithm matches $p_1=\{e_1,e_2\}$ and $p_2=\{e_3,e_4\}$ in the first phase.

We select both $p_1$ and $p_2$ to be part of matching $\hat{M}$, with total cost $2+2=4$.
If the algorithm, in the second phase, matches $p_1$ and $p_2$, the algorithm will make a saving of $2$. Thus, we add edge $(p_1,p_2)$ with savings $2$ to edge-set $E_1$ in $H$, as both $p_1$ and $p_2$ do not appear in any other quads.
Hence,
$$ \phi_Q = \cost(p_1) + \cost(p_2) - \sav(p_1,p_2) = 2 < \frac{4}{3} \cost(Q).$$

\item $\cost(Q)=3$. The edges in $G$ belonging to $Q$ can have one of the following three structures, see Figure~\ref{fig:PQMQ3}:

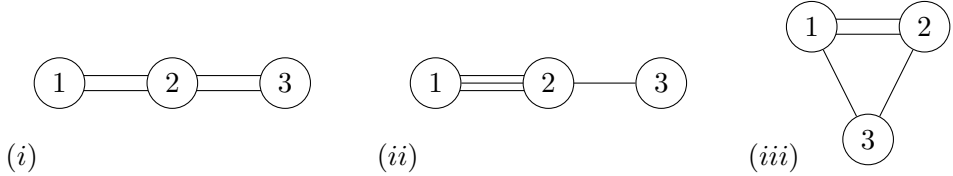
\begin{figure}[ht]
\centering
\begin{tikzpicture}
\tikzstyle{mynode}=[circle,draw, fill = white]

% center height = 0
\node  at (-1.5,-1)  {$(i)$};
\node (n11) at (-1,-0.1){};
\node (n12) at (-1,0.1){};
\node (n31) at (2,-0.1){};
\node (n32) at (2,0.1){};
\draw (n11) -- (n31);
\draw (n12) -- (n32);
\node(n1) at (-1,0) [mynode] {1};
\node(n2)  at (0.5,0) [mynode] {2};
\node(n3) at (2,0) [mynode] {3};

\node  at (3.5,-1)  {$(ii)$};
\node (n11) at (4,-0.1){};
\node (n12) at (4,0.1){};
\node (n21) at (5.5,-0.1){};
\node (n22) at (5.5,0.1){};
\draw (n11) -- (n21);
\draw (n12) -- (n22);
\node(n3) at (7,0) [mynode] {3};
\node(n1) at (4,0) [mynode] {1};
\draw (n3)--(n1);
\node(n2)  at (5.5,0) [mynode] {2};

\node  at (8.5,-1)  {$(iii)$};
\node (n13) at (9,0.85){};
\node (n14) at (9,0.65){};
\node (n23) at (10.5,0.85){};
\node (n24) at (10.5,0.65){};
\draw (n13) -- (n23);
\draw (n14) -- (n24);
\node(n1) at (9,0.75) [mynode] {1};
\node(n2)  at (10.5,0.75) [mynode] {2};
\node(n3) at (9.75,-0.75) [mynode] {3};
\draw (n1) -- (n3);
\draw (n2) -- (n3);

\end{tikzpicture}
\caption{Quads with $\cost(Q)=3$; 1, 2 and 3 are nodes in $G$}
\label{fig:PQMQ3}
\end{figure}

\begin{enumerate}[labelindent=0pt,labelwidth=\widthof{(iii)},label=\arabic*.,itemindent=1em,leftmargin=!]
\item[$(i)$] If quad $Q$ contains $e_1 =(1,2)$, $e_2 = (1,2)$, $e_3=(2,3)$ and $e_4=(2,3)$, we select both $p_1=\{(1,2),(1,2)\}$ and $p_2=\{(2,3),(2,3)\}$ to be part of matching $\hat{M}$, with a total cost of $2+2=4$. Furthermore, we add the edge $(p_1,p_2)$ with savings $1$ to the edge-set $E_1$ of $H$. Hence,
$$\phi_Q \leq \cost(p_1) + \cost(p_2) -\sav(p_1,p_2) = 3 < \frac{4}{3} \cost(Q).$$

\item[$(ii)$] If quad $Q$ contains $e_1 =(1,2)$, $e_2 = (1,2)$, $e_3=(1,2)$ and $e_4=(2,3)$, we may assume that the algorithm
has matched $p_1=\{(1,2),(1,2)\}$ in the first phase. Furthermore, we select $p_1$ and $p_2 = \{e_3,e_4\}$ to be part of matching $\hat{M}$, with total cost $2+3=5$.

In a worst case scenario, $p_2$ is not contained in $M$, hence the algorithm has matched $e_3$ to another vector (edge in $G$), say $x_3$, and $e_4$ to say $x_4$.
We define $p_3 = \{ e_3,x_3\}$ and $p_4 = \{e_4,x_4\}$, which are nodes in the auxiliary graph $H$.
We add edges $(p_1,p_3)$ and $(p_1,p_4)$ to edge set $E_2$, with total savings at least $2+1=3$.
Hence,

$$ \phi_Q = \cost(p_1) + \cost(p_2) - \frac12\Big(\sav(p_1,p_3)+\sav(p_1,p_4)\Big) \le 5-1.5 < \frac{4}{3} \cost(Q).$$

\item[$(iii)$] If quad $Q$ contains $e_1 =(1,2)$, $e_2 = (1,2)$, $e_3=(1,3)$ and $e_4=(2,3)$, we use a similar argument as above.
We may assume that the algorithm has matched $p_1=\{(1,2),(1,2)\}$ in the first phase and select $p_1$ and $p_2 = \{e_3,e_4\}$ to be part of matching $\hat{M}$, with total cost $2+3=5$.

In a worst case scenario, $p_2$ is not contained in $M$, hence the algorithm has matched $e_3$ to say $x$, and $e_4$ to say $y$.
We define $p_3 = \{ e_3,x\}$ and $p_4 = \{e_4,y\}$, which are nodes in auxiliary graph $H$.
We add edges $(p_1,p_3)$ and $(p_1,p_4)$ to edge set $E_2$, with total savings at least $1+1=2$.
Hence,

$$ \phi_Q = \cost(p_1) + \cost(p_2) - \frac12\Big(\sav(p_1,p_3)+\sav(p_1,p_4)\Big) \le 4 \le \frac{4}{3} \cost(Q).$$

\end{enumerate}
\item $\cost(Q)=4$. The edges in $G$ belonging to $Q$ can have one of the following six structures, see Figure~\ref{fig:PQMQ4}:

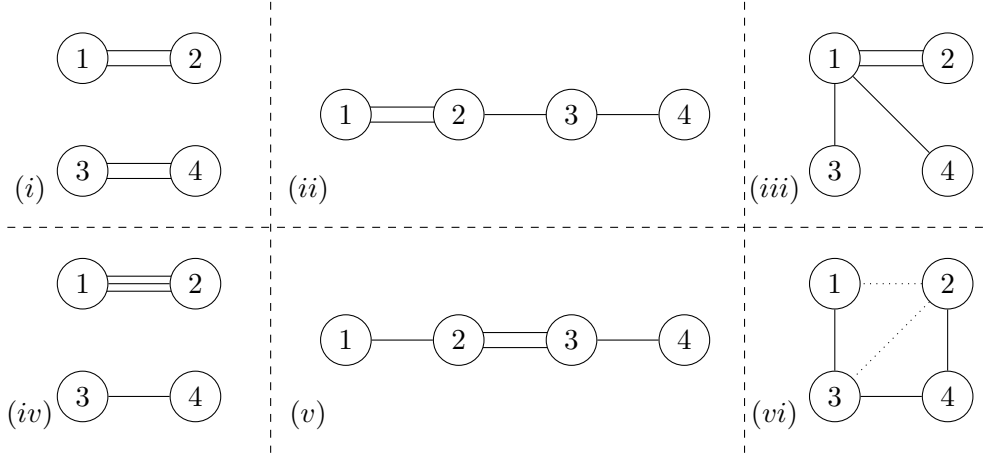
\begin{figure}[ht]
\centering
\begin{tikzpicture}
\tikzstyle{mynode}=[circle,draw, fill = white]

% center height = 0\node  at (-1.5,-1)  {$(i)$};
\node  at (-1.7,-1)  {$(i)$};
\node (n11) at (-1,0.85){};
\node (n12) at (-1,0.65){};
\node (n21) at (0.5,0.85){};
\node (n22) at (0.5,0.65){};
\node (n31) at (-1,-0.85){};
\node (n32) at (-1,-0.65){};
\node (n41) at (0.5,-0.85){};
\node (n42) at (0.5,-0.65){};
\draw (n11)--(n21);
\draw (n12)--(n22);
\draw (n31)--(n41);
\draw (n32)--(n42);
\node(n1) at (-1,0.75) [mynode] {1};
\node(n2)  at (0.5,0.75) [mynode] {2};
\node(n3) at (-1,-0.75) [mynode] {3};
\node(n4) at (0.5,-0.75) [mynode] {4};

\node  at (2,-1)  {$(ii)$};
\node (n11) at (2.5,-0.1){};
\node (n12) at (2.5,0.1){};
\node (n21) at (4,-0.1){};
\node (n22) at (4,0.1){};
\draw (n11) -- (n21);
\draw (n12) -- (n22);
\node(n4) at (7,0) [mynode] {4};
\node(n2)  at (4,0) [mynode] {2};
\node(n1) at (2.5,0) [mynode] {1};
\draw (n2)--(n4);
\node(n3)  at (5.5,0) [mynode] {3};

\node  at (8.2,-1)  {$(iii)$};
\node (n41) at (9,0.85){};
\node (n42) at (9,0.65){};
\node (n31) at (10.5,0.85){};
\node (n32) at (10.5,0.65){};
\draw (n41) -- (n31);
\draw (n42) -- (n32);
\node(n1) at (9,-0.75) [mynode] {3};
\node(n2)  at (10.5,-0.75) [mynode] {4};
\node(n3) at (9,0.75) [mynode] {1};
\node(n4) at (10.5,0.75) [mynode] {2};
\draw (n1)--(n3);
\draw (n2)--(n3);

\draw[dashed](-2,-1.5) --(11,-1.5);
% center height = -3

\node  at (-1.7,-4)  {$(iv)$};
\node (n11) at (-1,-2.15){};
\node (n12) at (-1,-2.35){};
\node (n21) at (0.5,-2.15){};
\node (n22) at (0.5,-2.35){};
\draw (n11)--(n21);
\draw (n12)--(n22);
\node(n1) at (-1,-2.25) [mynode] {1};
\node(n2)  at (0.5,-2.25) [mynode] {2};
\node(n3) at (-1,-3.75) [mynode] {3};
\node(n4) at (0.5,-3.75) [mynode] {4};
\draw (n1)--(n2);
\draw (n3)--(n4);

\node  at (2,-4)  {$(v)$};
\node (n11) at (5.5,-3.1){};
\node (n12) at (5.5,-2.9){};
\node (n21) at (4,-3.1){};
\node (n22) at (4,-2.9){};
\draw (n11) -- (n21);
\draw (n12) -- (n22);
\node(n4) at (7,-3) [mynode] {4};
\node(n1) at (2.5,-3) [mynode] {1};
\node(n2)  at (4,-3) [mynode] {2};
\node(n3)  at (5.5,-3) [mynode] {3};
\draw (n1)--(n2);
\draw (n3)--(n4);

\node  at (8.2,-4)  {$(vi)$};
\node(n1) at (9,-2.25) [mynode] {1};
\node(n2)  at (10.5,-2.25) [mynode] {2};
\node(n3) at (9,-3.75) [mynode] {3};
\node(n4) at (10.5,-3.75) [mynode] {4};
\draw (n1)--(n3);
\draw[dotted] (n2)--(n1);
\draw[dotted] (n2)--(n3);
\draw (n4)--(n3);
\draw (n4)--(n2);

\draw[dashed](1.5,1.5) --(1.5,-4.5);
\draw[dashed](7.8,1.5) --(7.8,-4.5);

\end{tikzpicture}
\caption{Quads with $\cost(Q)=4$; 1, 2, 3 and 4 are nodes in $G$}
\label{fig:PQMQ4}
\end{figure}

\begin{enumerate}[labelindent=0pt,labelwidth=\widthof{(ii)\&(iii)},label=\arabic*.,itemindent=1em,leftmargin=!]
\item[$(i)$] If quad $Q$ contains $e_1 =(1,2)$, $e_2 = (1,2)$, $e_3=(3,4)$ and $e_4=(3,4)$, we select both $p_1=\{(1,2),(1,2)\}$ and $p_2=\{(3,4),(3,4)\}$ to be part of matching $\hat{M}$, with a total cost of $2+2=4$. We add the edge $(p_1,p_2)$ with savings $0$ to the edge-set $E_1$ of $H$.
Hence,
$$\phi_Q \leq \cost(p_1) + \cost(p_2) = 4 < \frac{4}{3} \cost(Q).$$

\item[$(ii) \& (iii)$] If the graph induced by quad $Q$ contains two identical edges, $e_1$ and $e_2$, and two adjacent edges, $e_3$ and $e_4$, then select $p_1=\{e_1, e_2\}$ and $p_2=\{e_3,e_4\}$ to be part of matching $\hat{M}$, with a total cost of $2+3=5$.
We can assume that algorithm $A$ has matched $e_1$ and $e_2$ in the first phase. In the worst case, the algorithm has matched
$e_3$ to another edge $x_3$ and $e_4$ to another edge~$x_4$. Let $p_3=(e_3,x_3)$ and $p_4=(e_4,x_4)$, and add the edges $(p_1,p_3)$
and $(p_1,p_4)$ with total savings at least $1+0=1$ to the edge-set $E_2$ of $H$.
Hence,
$$\phi_Q \leq \cost(p_1) + \cost(p_2) - \frac12 = 4.5 < \frac{4}{3} \cost(Q).$$

\item[$(iv) \&  (v)$] If the graph induced by quad $Q$ contains two identical edges, $e_1$ and $e_2$, and two non-adjacent edges, $e_3$ and $e_4$, then we select $p_1=\{e_1, e_2\}$ and $p_2=\{e_3,e_4\}$ to be part of matching $\hat{M}$, with a total cost of $2+4=6$.

In a worst case scenario, $p_2$ is not contained in $M$, hence the algorithm has matched $e_3$ to another vector (edge in $G$), say $x_3$, and $e_4$ to say $x_4$. We define that $p_3 = \{ e_3,x_3\}$ and $p_4 = \{e_4,x_4\}$, which are nodes in auxiliary graph $H$.
We add edges $(p_1,p_3)$ and $(p_1,p_4)$ to edge set $E_2$ in $H$. Either $e_3$ is a duplicate of $e_1$ (case $(iv)$), in which case the total saving is at least $2+0=2$, or both edges are adjacent to both $e_1$ and $e_2$ (case $(v)$), in which case the total saving is at least $1+1=2$.
Hence,
$$\phi_Q = \cost(p_1) + \cost(p_2) - \frac12\Big(\sav(p_1,p_3)+\sav(p_1,p_4)\Big) \le 5 < \frac{4}{3}  \cost(Q).$$

\item[$(vi)$] Suppose quad $Q$ contains $e_1 =(1,2)$, $e_2 = (1,3)$, $e_3=(2,4)$ and $e_4=(3,4)$ (if $Q$ contains $e_1 = (2,3)$, $e_2 = (1,3)$, $e_3=(2,4)$ and $e_4=(3,4)$ we use the exact same argument).
We select $p_1 = \{e_1,e_2\}$ and $p_2 = \{e_3,e_4\}$ to be part of matching $\hat{M}$, with total cost $3+3=6$.

In a worst case scenario, no edge pair of $G$ in $Q$ is contained in $M$, hence we say the algorithm has matched $e_i$ to $x_i$ and define $p_{e_{i}}=\{ e_i,x_i\}$ for $i \in \{1, \dots,4\}$.  Note that each of the $p_{e_{i}}$ is a node in auxiliary graph $H$.
If the good partition associated with Q (as defined in the proof of Lemma~\ref{ubPQ}) is $\{\{p_{e_{1}}, p_{e_{2}}\},\{p_{e_{3}}, p_{e_{4}}\}\}$,
we add edges $(p_{e_{1}}, p_{e_{3}})$ and $(p_{e_{2}}, p_{e_{4}})$ to edge-set $E_2$ of $H$, with total savings at least $1+1=2$.
Otherwise, we add edges $(p_{e_{1}}, p_{e_{2}})$ and $(p_{e_{3}}, p_{e_{4}})$ to edge set $E_2$, with total savings at least $1+1=2$.
Hence,
$$ \phi_Q \le \cost(p_1) + \cost(p_2) - 1 = 5 < \frac{4}{3}  \cost(Q).$$

\end{enumerate}

\item $\cost(Q)=5$. If $Q$ contains two identical edges, say $e_1$ and $e_2$, we select $p_1=\{e_1,e_2\}$ and $p_2=\{e_3,e_4\}$ to be part of $\hat{M}$, with a total cost of at most $2+4=6$. We can assume that the algorithm has matched $(e_1,e_2)$ in $M$. In the worst case, the algorithm has not matched $e_3$ to $e_4$ in $M$. Let $p_{e_3}=\{e_3,x_3\}$
and $p_{e_4}=\{e_4,x_4\}$ be the matched pairs in $M$ that contain $e_3$ and $e_4$, respectively. We add the edges $(p_1,p_{e_3})$ and $(p_1,p_{e_4})$ to the edge-set $E_2$ of $H$. Even without taking the savings of those edges into account, we have
$$ \phi_Q \le \cost(p_1) + \cost(p_2) = 6 < \frac43 \cost(Q).$$
Assume now that $Q$ does not contain any two identical edges. Then the graph
induced by $Q=(e_1,e_2,e_3,e_4)$ must fall into one of the four following cases, see Figure~\ref{fig:PQMQ5}:

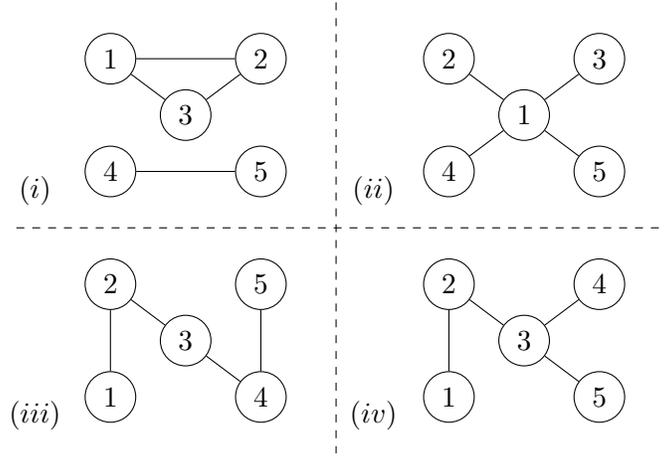
\begin{figure}[ht]
\centering
\begin{tikzpicture}
\tikzstyle{mynode}=[circle,draw, fill = white]

% center height = 0
\node  at (-2,-1)  {$(i)$};
\node(n1) at (-1,0.75) [mynode] {1};
\node(n2)  at (1,0.75) [mynode] {2};
\node(n3) at (0,0) [mynode] {3};
\node(n4) at (-1,-0.75) [mynode] {4};
\node(n5) at (1,-0.75) [mynode] {5};
\draw(n1)--(n2);
\draw(n2)--(n3);
\draw(n1)--(n3);
\draw(n4)--(n5);

\node  at (2.5,-1)  {$(ii)$};
\node(n1) at (3.5,0.75) [mynode] {2};
\node(n2)  at (5.5,0.75) [mynode] {3};
\node(n3) at (4.5,0) [mynode] {1};
\node(n4) at (3.5,-0.75) [mynode] {4};
\node(n5) at (5.5,-0.75) [mynode] {5};
\draw(n1)--(n3);
\draw(n2)--(n3);
\draw(n4)--(n3);
\draw(n5)--(n3);

\draw[dashed](-2.25,-1.5) --(6.5,-1.5);
% center height = -3

\node  at (-2,-4)  {$(iii)$};
\node(n1) at (-1,-2.25) [mynode] {2};
\node(n2)  at (1,-2.25) [mynode] {5};
\node(n3) at (0,-3) [mynode] {3};
\node(n4) at (-1,-3.75) [mynode] {1};
\node(n5) at (1,-3.75) [mynode] {4};
\draw(n1)--(n4);
\draw(n1)--(n3);
\draw(n3)--(n5);
\draw(n2)--(n5);

\node  at (2.5,-4)  {$(iv)$};
\node(n1) at (3.5,-2.25) [mynode] {2};
\node(n2)  at (5.5,-2.25) [mynode] {4};
\node(n3) at (4.5,-3) [mynode] {3};
\node(n4) at (3.5,-3.75) [mynode] {1};
\node(n5) at (5.5,-3.75) [mynode] {5};
\draw(n1)--(n4);
\draw(n1)--(n3);
\draw(n3)--(n2);
\draw(n3)--(n5);

\draw[dashed](2,1.5) --(2,-4.5);

\end{tikzpicture}
\caption{Quads with $\cost(Q)=5$; 1, 2, 3, 4  and 5 are nodes in $G$}
\label{fig:PQMQ5}
\end{figure}

\begin{enumerate}[labelindent=0pt,labelwidth=\widthof{(iii)},label=\arabic*.,itemindent=1em,leftmargin=!]
\item[(i)] $Q$ is a triangle of edges $e_1$, $e_2$, $e_3$ plus a disjoint edge $e_4$. We select $p_1=\{e_1,e_2\}$ and $p_2=\{e_3,e_4\}$ to be part of $\hat{M}$, with a total cost of $3+4=7$. In the worst case, $Q$ does not contain a lucky pair. For $1\le i\le 4$, let $p_{e_i}=\{e_i,x_i\}$ be the pair of matched edges in $M$ that includes~$e_i$. If the good partition associated with $Q$ is $\{\{p_{e_1},p_{e_2}\},\{p_{e_3},p_{e_4}\}\}$, we add the edges $(p_{e_1},p_{e_3})$ and $(p_{e_2},p_{e_4})$ to $E_2$ with savings at least $1+0=1$. Otherwise, add the edges $(p_{e_1},p_{e_2})$ and $(p_{e_3},p_{e_4})$ to $E_2$, again with savings at least $1+0=1$.
We have
$$ \phi_Q \le \cost(p_1) + \cost(p_2) - \frac12 = 6.5 < \frac{4}{3}  \cost(Q).$$
\item[(ii)] $Q$ is a star (i.e., all four edges of $Q$ are incident with the same same vertex).
We select $p_1=\{e_1,e_2\}$ and $p_2=\{e_3,e_4\}$ to be part of $\hat{M}$, with a total cost of $3+3=6$.
In the worst case, $Q$ does not contain a lucky pair. For $1\le i\le 4$, let $p_{e_i}=\{e_i,x_i\}$ be the pair of matched edges in $M$ that includes~$e_i$.
If the good partition associated with $Q$ is $\{\{p_{e_1},p_{e_2}\},\{p_{e_3},p_{e_4}\}\}$, add the edges $(p_{e_1},p_{e_3})$ and $(p_{e_2},p_{e_4})$ to $E_2$, with savings at least $1+1=2$. Otherwise, add the edges $(p_{e_1},p_{e_2})$ and $(p_{e_3},p_{e_4})$ to $E_2$, again with savings at least $1+1=2$.
We have
$$ \phi_Q \le \cost(p_1) + \cost(p_2) - 1 = 5 < \frac{4}{3}  \cost(Q).$$
\item[(iii)] $Q$ is a path of four edges. Assume that the edges appear on the path
in the order $(e_1,e_2,e_3,e_4)$.
We select $p_1=\{e_1,e_2\}$ and $p_2=\{e_3,e_4\}$ to be part of $\hat{M}$, with a total cost of $3+3=6$.
In the worst case, $Q$ does not contain a lucky pair. For $1\le i\le 4$, let $p_{e_i}=\{e_i,x_i\}$ be the pair of matched edges in $M$ that includes~$e_i$.
If the good partition associated with $Q$ is $\{\{p_{e_1},p_{e_2}\},\{p_{e_3},p_{e_4}\}\}$, add the edges $(p_{e_1},p_{e_4})$ and $(p_{e_2},p_{e_3})$ to $E_2$, with savings at least $0+1=1$. Otherwise, add the edges $(p_{e_1},p_{e_2})$ and $(p_{e_3},p_{e_4})$ to $E_2$, with savings at least $1+1=2$.
We have
$$ \phi_Q \le \cost(p_1) + \cost(p_2) - \frac12 = 5.5 < \frac{4}{3}  \cost(Q).$$
\item[(iv)] $Q$ is a tree of diameter~$3$. Assume that $e_1=(1,2)$, $e_2=(2,3)$, $e_3=(3,4)$, $e_4=(3,5)$.
We select $p_1=\{e_1,e_2\}$ and $p_2=\{e_3,e_4\}$ to be part of $\hat{M}$, with a total cost of $3+3=6$.
In the worst case, $Q$ does not contain a lucky pair. For $1\le i\le 4$, let $p_{e_i}=\{e_i,x_i\}$ be the pair of matched edges in $M$ that includes~$e_i$.
If the good partition associated with $Q$ is $\{\{p_{e_1},p_{e_2}\},\{p_{e_3},p_{e_4}\}\}$, add the edges $(p_{e_1},p_{e_4})$ and $(p_{e_2},p_{e_3})$ to $E_2$, with savings at least $0+1=1$. Otherwise, add the edges $(p_{e_1},p_{e_2})$ and $(p_{e_3},p_{e_4})$ to $E_2$, with savings at least $1+1=2$.
We have
$$ \phi_Q \le \cost(p_1) + \cost(p_2) - \frac12 = 5.5 < \frac{4}{3}  \cost(Q).$$
\end{enumerate}

\item $\cost(Q)\ge6$. If we partition $Q$ into two edge pairs $p_1$ and $p_2$ for matching $\hat{M}$, their total cost will be at most $4+4=8$. Even if the savings of the edges that we add to the edge-set $E_2$ of $H$ (while taking into account the good partition associated with $Q$ to ensure that $H$ is bipartite, of course) are zero, we have
$$\phi_Q \leq \cost(p_1) + \cost(p_2) = 8 \leq \frac{4}{3}  \cost(Q).$$
\end{itemize}

\bigskip

\paragraph*{Proving that $H$ is a simple, bipartite graph with maximum degree 2}

An edge has been added to the edge-set $E_1$ of $H$ when considering a quad $Q$ only if $Q$ is a quad with two lucky pairs, and hence those pairs indeed become vertices of degree~$1$ in $H$. If $p=(e_1,e_2)$ is a lucky pair that is in a quad $Q$ with two other edges $e_3,e_4$ that do not form a lucky pair, it becomes the endpoint of exactly two edges in $E_2$: The edges $(p,p')$ and $(p,p'')$, where $p'$ is the pair of edges containing $e_3$ that is matched in $M$, and $p''$ is the pair of edges containing $e_4$ that is matched in $M$. Finally, every edge pair $p=(e_1,e_2)$ that has been matched by algorithm $A$ in $M$ and is not a lucky pair becomes the endpoint of exactly two edges in $E_2$: An edge $(p,p')$ added for quad $Q_1$ and an edge $(p,p'')$ added for quad $Q_2$, where $Q_1$ is the optimal quad containing $e_1$ and $Q_2$ the optimal quad containing~$e_2$. This shows that $H$ is a graph with maximum degree $2$ in which each edge connects two vertices of the same degree. Furthermore, by ensuring for each quad $Q$ that the edges added to $E_2$ are compatible with the good partition associated with $Q$ if $Q$ does not have a lucky pair, we have ensured that $H$ is simple and bipartite (recall that the cycles in $H$ correspond to Eulerian cycles of the connected components of $K$, which have an even number of edges, as discussed in the proof of Lemma~\ref{ubPQ}.

\bigskip

\paragraph*{Proving that $\sum_Q \phi_Q = \cost(\hat{M})-(S_1+\frac12 S_2)$}

For each quad $Q$ we have selected two pairs of edges in $Q$ for inclusion in $\hat{M}$, so the costs of those edge pairs clearly add up to $\cost(\hat{M})$. Furthermore, each edge of $H$ was added to $H$ by some quad $Q$. Furthermore, no two quads could have added the same edge to $H$ (as $H$ is a simple graph), so the sum of the savings of the added edges is indeed $S_1+\frac12 S_2$.

The proof is complete.
\end{proof}

\subsection{Approximation analysis for \texorpdfstring{PQ$(\#1=2, \mbox{distinct})$}{PQ(\#1=2,distinct)}}
\label{sec:ubPQtwoonesdist}

\begin{lemma}
\label{ubPQexactlytwoonesanddist}
The worst-case ratio of algorithm $A$ for PQ$(\#1=2, \mbox{distinct})$ is at most $\frac{13}{10}$.
\end{lemma}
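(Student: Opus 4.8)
The plan is to run the same machinery as in the proofs of Lemma~\ref{ubPQ} and Lemma~\ref{ubPQexactlytwoones}. From the optimal solution we build a perfect matching $\hat M$ with $\cost(M)\le\cost(\hat M)$ by splitting each optimal quad into two vector pairs, and an auxiliary graph $H$ that is bipartite and of maximum degree two (the good partitions of the quads are used to enforce this) whose edge weights are the savings algorithm~$A$ could realize in the second phase; Claim~\ref{propertyH} then yields $\weight(M')\ge S_1+\frac12 S_2$, so $A(I)\le\cost(\hat M)-(S_1+\frac12 S_2)=\sum_Q\phi_Q$, where $\phi_Q$ is the share reserved for $Q$. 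As in the earlier proofs, everything reduces to showing $\phi_Q\le\frac{13}{10}\cost(Q)$ for every quad $Q$ of the optimal solution; the verifications that $H$ is bipartite of maximum degree two and that $\sum_Q\phi_Q=\cost(\hat M)-(S_1+\frac12 S_2)$ proceed exactly as in Lemma~\ref{ubPQexactlytwoones}.

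The role of the ``distinct'' hypothesis is that four distinct edges of a simple graph span at least four vertices, so $\cost(Q)\ge4$ for every quad, which removes the cheap quads that were the bottleneck for PQ$(\#1=2)$. Moreover, for each value of $\cost(Q)$ there are only a few isomorphism types of the subgraph induced by $Q$: for $\cost(Q)=4$ it is $C_4$ or the triangle with a pendant edge (the paw); for $\cost(Q)=5$ it is a triangle plus a disjoint edge, or one of the three trees on five vertices ($P_5$, $K_{1,4}$, or the path $P_4$ with a pendant edge at an inner vertex); for $\cost(Q)=6$ it is a disjoint edge plus $P_4$, a disjoint edge plus $K_{1,3}$, or $P_3\cup P_3$; for $\cost(Q)=7$ it is two disjoint edges plus $P_3$; and for $\cost(Q)\ge8$ it is a perfect matching of four edges. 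Since a pair of distinct edges costs $3$ when the edges are adjacent and $4$ when they are disjoint, each shape with $\cost(Q)\ge6$ admits a split of $Q$ into two pairs of total cost at most $7$ (at most $8$ when $\cost(Q)\ge8$); this already gives $\phi_Q\le7\le\frac{13}{10}\cost(Q)$ for $\cost(Q)\in\{6,7\}$ and $\phi_Q\le8\le\frac{13}{10}\cdot8\le\frac{13}{10}\cost(Q)$ for $\cost(Q)\ge8$, with no savings needed at all, and lucky pairs can only help.

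The actual work is therefore in the cases $\cost(Q)\in\{4,5\}$, where I would split on the number of lucky pairs of $Q$. If $Q$ has two lucky pairs then $\phi_Q=\cost(Q)$ and we are done. If $Q$ has one lucky pair $p$, with complementary edges $e,e'$, then the two edges of $H$ charged to $Q$ are forced to be $(p,p_e)$ and $(p,p_{e'})$, contributing savings at least $\sav(p,e)+\sav(p,e')$; crucially, the contribution of $Q$ to $\hat M$ may be taken to be the cost of \emph{any} split of $Q$, not just the one that uses $p$ (this deviates from the $O_1$-formula used in Lemma~\ref{ubPQ} but is still consistent with the distribution identity). This freedom is what takes care of, e.g., a ``diagonal'' lucky pair in a $C_4$: using the split into two adjacent pairs (cost $6$) together with the forced savings ($\ge4$) gives $\phi_Q\le4$, and a short inspection of all shapes gives $\phi_Q\le5$ for $\cost(Q)=4$ and $\phi_Q\le6$ for $\cost(Q)=5$ in the one-lucky-pair case. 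If $Q$ has no lucky pair, then after relabelling the four vectors of $Q$ so that the good partition associated with $Q$ is $\{\{v_1,v_2\},\{v_3,v_4\}\}$, the bound $\phi_Q\le(\text{cheapest split of }Q)-\frac12\max\{\sav(v_1,v_3)+\sav(v_2,v_4),\ \sav(v_1,v_4)+\sav(v_2,v_3)\}$ holds, being just the minimum of the six bounds~(\ref{eq:b1})--(\ref{eq:b6}). One now checks shape by shape, and for each of the three pairings that the good partition could be, that one of the two admissible $H$-edge choices provides enough savings: for $\cost(Q)=4$ ($C_4$ or the paw) one can always secure savings~$2$, so $\phi_Q\le6-1=5$; for $\cost(Q)=5$ the three trees admit a split of cost $6$ and need no savings, while for the triangle-plus-disjoint-edge shape every split has cost $7$ and every pairing of the four edges has savings exactly~$1$, giving the (tight) bound $\phi_Q\le7-\frac12=\frac{13}{2}=\frac{13}{10}\cost(Q)$.

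I expect the main obstacle to be exactly this no-lucky-pair bookkeeping for $\cost(Q)\in\{4,5\}$: one has to verify, for each of the few shapes and for each of the three ways the good partition can cut the four edges, that one of the two $H$-edge choices permitted by the good partition attains the required savings lower bound, while keeping the choices made for all quads globally consistent so that $H$ stays bipartite of maximum degree two and $\sum_Q\phi_Q$ equals $\cost(\hat M)-(S_1+\frac12 S_2)$ exactly. Identifying the triangle-plus-disjoint-edge quad without a lucky pair as the extremal configuration is what pins the bound at $\frac{13}{10}$; since this exceeds the $\frac54$ lower bound of Theorem~\ref{theo:PQexactlytwoonesanddis}, this is the one case in which the analysis is not known to be tight.
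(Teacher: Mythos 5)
Your proposal is correct and follows essentially the same strategy as the paper: set up $\hat M$ and the auxiliary graph $H$, reduce to $\phi_Q\le\frac{13}{10}\cost(Q)$ per quad, and note that ``distinct'' forces $\cost(Q)\ge 4$, which eliminates the bottleneck quads. The paper's actual proof of this lemma is much shorter, because it simply recycles the per-quad bounds already derived in Lemma~\ref{ubPQexactlytwoones} ($\phi_Q\le 5$ for $\cost(Q)=4$, $\phi_Q\le 6.5$ for $\cost(Q)=5$, $\phi_Q\le 8$ for $\cost(Q)\ge 7$) and adds a small degree argument for $\cost(Q)=6$ to improve the split cost to $7$; you instead re-enumerate the isomorphism types of the induced 4-edge subgraph and rederive the bounds from scratch, arriving at the same numbers and the same tight configuration (triangle plus disjoint edge at $\cost(Q)=5$).

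One point where your proposal is genuinely more careful than the paper's text: you explicitly decouple the choice of $\hat M$-split from the lucky pair in the one-lucky-pair case, observing that the distribution identity $\sum_Q\phi_Q=\cost(\hat M)-(S_1+\frac12 S_2)$ only requires that each quad contribute \emph{some} split to $\hat M$ and the forced $H$-edges $(p,p_e),(p,p_{e'})$ to the savings. This matters: a non-adjacent (``diagonal'') lucky pair $p$ in a $C_4$ has $\cost(p)=4$, and plugging this into the rigid $O_1$-formula of Lemma~\ref{ubPQ} gives only $\phi_Q\le 4+4-\frac12\cdot 4=6>\frac{13}{10}\cdot 4$; with the adjacent-pair split (cost $6$) instead, the forced savings $\ge 4$ yield $\phi_Q\le 4$. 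The paper handles this by asserting in Lemma~\ref{ubPQexactlytwoones} that ``in a worst case scenario, no edge pair of $G$ in $Q$ is contained in $M$,'' which is true under your flexible-split reading but is not spelled out there. Your version makes that step airtight while reaching the same $\frac{13}{10}$ conclusion.
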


\begin{proof}
Recall that an instance of PQ$(\#1=2, \mbox{distinct})$ is nothing else but a simple graph $F$ with $4k$ edges. Note that the cost of every optimal quad is at least $4$ since $4$ edges in a simple graph touch at least $4$ different vertices. Hence we can repeat the arguments in the proof for Lemma~\ref{ubPQexactlytwoones} for quads of cost 4 and higher.

Then, we get the following results:
\begin{itemize}
\item For $\cost(Q) = 4$, we have $\phi_Q \leq 5 = \frac{5}{4} \cost(Q)< \frac{13}{10}\cdot \cost(Q)$,
\item For $\cost(Q) = 5$, we have $\phi_Q \leq 7-\frac12 = \frac{6.5}{5} \cost(Q)=\frac{13}{10}\cdot \cost(Q)$ ,
\item For $\cost(Q) = 6$, note that the average degree of the subgraph of $G$ induced by $Q$
is $\frac86> 1$, so there must exist a vertex of degree at least~$2$. This means that $Q$
contains two adjacent edges $e_1$ and $e_2$. Denote the remaining edges by $e_3$ and $e_4$.
We can select the pairs $p_1=\{e_1,e_2\}$ and $p_2=\{e_3,e_4\}$ to be part of $\hat{M}$,
giving $\phi_Q \le \cost(p_1)+\cost(p_2)\le 3+4 = 7 = \frac{7}{6} \cost(Q) < \frac{13}{10}\cdot \cost(Q)$.
\item For $\cost(Q) \geq 7$, we have $\phi_Q \leq 8 \le \frac{8}{7}\cost(Q) < \frac{13}{10}\cdot \cost(Q)$.
\end{itemize}
\end{proof}

\subsection{Approximation analysis for \texorpdfstring{PQ(\boldmath$\#1=2, \mbox{distinct, connected}$)}{PQ(\#1=2,distinct,connected)}}
\label{sec:ubPQtwoonesdistconn}

\begin{lemma}
\label{ubPQexactlytwoonesanddistandconn}
Algorithm $A$ is a $\frac54$-approximation algorithm for PQ$(\#1=2, \mbox{distinct, connected})$.

\end{lemma}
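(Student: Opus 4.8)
The plan is to follow the same framework developed in the proofs of Lemmas~\ref{ubPQ} and \ref{ubPQexactlytwoones}: write $A(I)=\cost(M)-\weight(M')$, pick two edge-pairs in each optimal quad to form $\hat M$ (so $\cost(M)\le\cost(\hat M)$), build the bipartite auxiliary graph $H$ with $\weight(M')\ge S_1+\tfrac12 S_2$, distribute $\cost(\hat M)-(S_1+\tfrac12 S_2)$ over the optimal quads as $\phi_Q$, and prove $\phi_Q\le\tfrac54\cost(Q)$ for every optimal quad $Q$. Since an instance of PQ$(\#1=2,\mbox{distinct,connected})$ is a \emph{simple} graph $F$, every optimal quad $Q$ induces a subgraph on at least $4$ vertices, so $\cost(Q)\ge 4$, and I can reuse the case analysis of Lemma~\ref{ubPQexactlytwoonesanddist} verbatim: for $\cost(Q)=4$ we already have $\phi_Q\le 5=\tfrac54\cost(Q)$, and for $\cost(Q)=6$ and $\cost(Q)\ge 7$ the bounds $\phi_Q\le 7=\tfrac76\cost(Q)$ and $\phi_Q\le 8=\tfrac87\cost(Q)$ are already $<\tfrac54\cost(Q)$. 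The only obstacle is the case $\cost(Q)=5$, where the generic bound from Lemma~\ref{ubPQexactlytwoonesanddist} only gives $\phi_Q\le 6.5=\tfrac{13}{10}\cost(Q)$, which exceeds $\tfrac54\cdot 5=6.25$. So the heart of the proof is improving the analysis of the $\cost(Q)=5$ quads, and this is where connectivity must be exploited.

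For $\cost(Q)=5$, recall the four possible induced subgraphs (Figure~\ref{fig:PQMQ5}): (i) a triangle plus a disjoint edge, (ii) a star $K_{1,4}$, (iii) a path on five vertices, and (iv) a tree of diameter~$3$. In cases (ii), (iii), (iv) the existing analysis already yields $\phi_Q\le 5$, $\phi_Q\le 5.5$, $\phi_Q\le 5.5$ respectively, all $\le\tfrac54\cdot5=6.25$, so nothing new is needed there. The single troublesome configuration is (i): the triangle $\{e_1,e_2,e_3\}$ together with a disjoint edge $e_4$, where the previous bound was $\phi_Q\le 3+4-\tfrac12=6.5$. Here I would use connectivity of $F$: since $F$ is connected and $Q$ is a quad with a component of $F$ ``locally'' looking like a triangle plus a disjoint edge, the disjoint edge $e_4=(4,5)$ cannot literally be a separate connected component of $F$ — there must be other edges of $F$ reaching the triangle and/or $e_4$. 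In particular, whichever edges $x_1,x_2,x_3,x_4$ the algorithm matched $e_1,\dots,e_4$ with in $M$ (when $Q$ contains no lucky pair), at least one of the pairs $p_{e_i}$ has extra structure forced by connectivity that raises the savings on the two edges added to $E_2$ for $Q$ from $1$ to at least $1.5$ in aggregate, or else one obtains a \emph{lucky pair} or an identical-edge situation that we can exploit. Concretely, I would argue that if $Q$ realises configuration (i) with no lucky pair and with the worst-case savings of exactly~$1$, then some neighbouring quad or the matching $M$ could be locally improved, contradicting optimality of the optimal solution or of $M$; alternatively, a direct argument shows that one of $\sav(p_{e_1},p_{e_2})$, $\sav(p_{e_3},p_{e_4})$ (or the corresponding pair consistent with the good partition) is at least~$2$, giving total savings $\ge 1.5$ and hence $\phi_Q\le 3+4-\tfrac32 = 5.5 \le 6.25$.

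The main obstacle, then, is precisely this refined accounting for the triangle-plus-disjoint-edge quad: making connectivity of $F$ bite at the local level. The likely mechanism is that the disjoint edge $e_4$ must, by connectivity, share an endpoint with some edge outside $Q$; tracing where that edge goes (into another optimal quad, or being matched by $A$) lets us either charge extra savings to $Q$ or redistribute a small surplus from a neighbouring quad. Once the $\cost(Q)=5$ case is settled with $\phi_Q\le 6.25$, parts (i) [bipartiteness and maximum degree~$2$ of $H$, via good partitions and Eulerian cycles of $K$ as in Lemma~\ref{ubPQ}] and (ii) [$\sum_Q\phi_Q=\cost(\hat M)-(S_1+\tfrac12 S_2)$, as in Lemma~\ref{ubPQexactlytwoones}] carry over unchanged, and we conclude
\[
A(I)\le \cost(\hat M)-(S_1+\tfrac12 S_2)=\sum_Q\phi_Q\le\sum_Q\tfrac54\cost(Q)=\tfrac54\,\OPT(I),
\]
completing the proof.
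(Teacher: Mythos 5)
Your plan — reuse the $\phi_Q$-charging framework and sharpen the cost-$5$ analysis via connectivity — is not the route the paper takes, and more importantly it has a genuine gap that I do not think can be closed by a local argument. The problematic case you correctly isolate is a quad $Q$ consisting of a triangle $e_1,e_2,e_3$ on vertices $\{1,2,3\}$ plus a disjoint edge $e_4=(4,5)$, with no lucky pair. You claim that connectivity of $F$ forces total savings of at least $1.5$ (hence at least $2$, since savings are integers) on the two edges added to $E_2$ for $Q$. This is false: if the first-phase matching pairs $e_4$ with $x_4=(5,6)$ and $e_3$ with $x_3=(3,7)$, then $p_{e_3}$ covers $\{1,3,7\}$ and $p_{e_4}$ covers $\{4,5,6\}$, which are disjoint, so $\sav(p_{e_3},p_{e_4})=0$; the good partition may force exactly this pairing. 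Connectivity of $F$ (say via an edge $(3,4)$ sitting in some other optimal quad) does nothing to raise these particular savings. Your fallback (``redistribute a small surplus from a neighbouring quad'') is left entirely unspecified, and since $\phi_Q\le\frac54\cost(Q)$ must hold per quad in this framework, it is not a mere accounting trick to defer to.

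The paper's proof abandons the $\phi_Q$-charging scheme entirely for this special case and is genuinely global. It first observes that $\cost(M)=6k$ exactly, because the line graph of a connected graph with an even number of edges always has a perfect matching, so every first-phase pair consists of adjacent edges and costs $3$. It then builds an auxiliary graph $H$ on the $2k$ pairs with an edge whenever two pairs share a vertex, so that a matching of size $\mu$ in $H$ yields $A(I)\le 6k-\mu$. If $\mu=k$ we are done since $\OPT\ge 4k$. If $\mu<k$, the Tutte--Berge formula gives a set $X$ with $\odd(H-X)-|X|=2k-2\mu$; the odd components of $H-X$ force at least $2d-2|X|$ ``special'' edges that must lie in optimal quads of cost at least $5$, which yields $\OPT\ge 4k+(k-\mu)=5k-\mu$, and the ratio $(6k-\mu)/(5k-\mu)\le\frac54$ follows. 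Your proposal neither uses $\cost(M)=6k$, nor the matching structure of $H$, nor Tutte--Berge; it would need a new and currently missing idea to close the $\cost(Q)=5$, triangle-plus-edge case.
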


\begin{proof}
Recall that an instance of PQ$(\#1=2, \mbox{distinct, connected})$ can be viewed as a simple, connected graph $F$ with $4k$ edges, and that the cost of a quad is the number of vertices spanned by the edges in the quad. Note that the cost of every optimal quad is at least $4$ since $4$ edges in a simple graph touch at least $4$ different vertices. Hence, $\OPT\ge 4k$.
Furthermore, if we can show that there are $z$ quads in the optimal solution that have cost at least $5$,
we get that $\OPT\ge 4(k-z)+5z=4k+z$.

\begin{obs}
$\cost(M) = 6k$.
\end{obs}
\begin{proof}
	The line graph of a connected graph with an even number of edges admits a perfect matching (J\"{u}nger et al.~\cite{juengeretal1985}, Dong et al.~\cite{DYZ/13}).
	Thus, the minimum-cost perfect matching $M$ pairs adjacent edges of the graph. Hence, every pair in $M$ has cost $3$, and thus the cost of $M$ is $2k\cdot 3=6k$.
\end{proof}

Let $p_1,\ldots,p_{2k}$ be the pairs corresponding to~$M$.
Consider the auxiliary graph $H$ with vertex set $V'=\{p_1,\ldots,p_{2k}\}$ in which an edge is added between $p_i$ and $p_j$ if $p_i$ and $p_j$ have
at least one common vertex (implying that matching
$p_i$ to $p_j$ in the matching $M'$ that $A$ computes
in the second phase would create a saving of at least one).
Note that $H$ is connected as $F$ is connected.
Let $\mu$ be the size of a maximum matching in $H$, $1\le \mu\le k$.
Note that the maximum matching of $H$ can be extended to a perfect
matching of $V'$ that makes savings at least $\mu$.
Therefore, we have
$$ A(I) \le 6k - \mu.$$

If $H$ contains a perfect matching, we have $\mu=k$ and hence $A(I)\le 5k$, implying that $A(I)/\OPT(I)\le 5k/(4k)=\frac54$. It remains
to consider the case $\mu<k$.

If a maximum matching in $H$ has size $\mu<k$, the number
of unmatched vertices is $2k-2\mu$. We will show that the optimal
solution then contains at least $k-\mu$ quads with cost
at least~$5$, and hence we have
$\OPT(I) \ge 4k+(k-\mu)=5k-\mu$.
Therefore,
$$
\frac{A(I)}{\OPT(I)}
\le \frac{6k-\mu}{5k-\mu}
\le \frac{5}{4},
$$
where the last inequality follows because $(6k-\mu)/(5k-\mu)$
is maximized if $\mu$ takes its maximum possible value, $\mu=k$.

It remains to show that the optimal solution contains at
least $k-\mu$ quads with cost at least~$5$. Recall that
a maximum matching in $H$ leaves $2k-2\mu$ vertices
unmatched. By the Tutte-Berge formula~\cite{Berge/58}, the number of unmatched
vertices of a maximum matching in $H$ is equal to
$$
\max_{X\subseteq V'} (\odd(H-X) - |X|),
$$
where $\odd(H-X)$ is the number of connected components
of $H-X$ that have an odd number of vertices ($H-X$ is the graph that results when the nodes in $X$, and their incident edges, are removed from $H$). Hence, there exists a set $X\subseteq V'$ such that
$\odd(H-X) - |X| = 2k-2\mu$. Let $d=\odd(H-X)$,
and let $O_1,O_2,\ldots,O_d$ denote the $d$ odd components
of $H-X$.
We have
$$
2k-2\mu = d-|X|\text{.}
$$

For a subgraph $S$ of $H$, let $E_F(S)$ denote the
set of edges of $F$ that are contained in the
edge pairs that form the vertex set of $S$ (recall
that the vertices of $H$ are pairs of edges from $F$).
Note that $|E_F(O_i)|\bmod 4 = 2$
for $1\le i\le d$ as $O_i$ contains an odd number
of edge pairs. Therefore, each $E_F(O_i)$ contains
at least two edges that are contained in optimal
quads that do not only contain edges from $E_F(O_i)$.
If such a quad contains three edges from $E_F(O_i)$,
note that there must be at least one other optimal
quad that contains at most three edges from $E_F(O_i)$
as $(|E_F(O_i)|-3)\bmod 4 = 3$.

For each optimal quad that contains one or two
edges from $E_F(O_i)$, define these one or two edges to be
\emph{special} edges.
For each optimal quad that contains three
edges from $E_F(O_i)$, select one of these three edges
arbitrarily and define it to be a \emph{special} edge.
There are at least two special edges
in each $E_F(O_i)$, $1\le i\le d$, and hence at
least $2d$ special edges in total. More precisely, we refer to these special edges as the edge-set SE, and partition it into two subsets: those special edges occurring in a quad with cost 4 (the set SE4), and those special edges occurring in a quad with cost at least 5 (the set SE5). Clearly:
\begin{equation}
\label{ineq:2d}
2d \leq |SE4|+ |SE5|\text{.}
\end{equation}

Consider a quad with cost 4 from the optimum solution. It consists of four edges of $F$. Since $F$ is a connected simple graph there are only two possible subgraphs induced by $Q$, as depicted in Figure~\ref{fig:Q4}. These four edges can be in the sets $E_F(O_i)$ for some $1\le i\le d$, the set $E_F(X)$, and the sets $E_F(C)$ for even components $C$ of $H-X$.
We now define types of quads of cost 4 depending on how many edges are in which set.

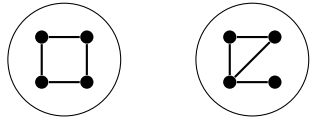
\begin{figure}[ht]
\centering
\begin{tikzpicture}

\def\xShi{0};
\def\yShi{0};
\def\inter{0.3};
\def\size{0.75};
\draw (\xShi,\yShi) circle (\size cm);
\node(q11) at (\xShi -\inter,\yShi+\inter) [circle,fill=black,inner sep=0pt,minimum size=5pt] {};
\node(q12) at (\xShi +\inter,\yShi+\inter) [circle,fill=black,inner sep=0pt,minimum size=5pt] {};
\node(q13) at (\xShi -\inter,\yShi-\inter) [circle,fill=black,inner sep=0pt,minimum size=5pt] {};
\node(q14) at (\xShi +\inter,\yShi-\inter) [circle,fill=black,inner sep=0pt,minimum size=5pt] {};
\draw[thick](q11)--(q13);
\draw[thick](q11)--(q12);
\draw[thick](q14)--(q13);
\draw[thick](q12)--(q14);

\def\xShi{2.5};
\def\yShi{0};
\draw (\xShi,\yShi) circle (\size cm);
\node(q31) at (\xShi -\inter,\yShi+\inter) [circle,fill=black,inner sep=0pt,minimum size=5pt] {};
\node(q32) at (\xShi +\inter,\yShi+\inter) [circle,fill=black,inner sep=0pt,minimum size=5pt] {};
\node(q33) at (\xShi -\inter,\yShi-\inter) [circle,fill=black,inner sep=0pt,minimum size=5pt] {};
\node(q34) at (\xShi +\inter,\yShi-\inter) [circle,fill=black,inner sep=0pt,minimum size=5pt] {};
\draw[thick](q31)--(q32);
\draw[thick](q31)--(q33);
\draw[thick](q32)--(q33);
\draw[thick](q33)--(q34);

\end{tikzpicture}
\caption{Quads with $\cost(Q)=4$}
\label{fig:Q4}
\end{figure}

Note that an edge from $E_F(O_i)$ cannot be incident to the same
vertex as an edge from $E_F(O_j)$ for $j\neq i$
because otherwise $H$ would contain an edge
between $O_i$ and $O_j$. Similarly, an edge
from $E_F(O_i)$ cannot be incident to the same
vertex as an edge from $E_F(C)$ where $C$ is
an even component of $H-X$. The only edges
that can share endpoints with edges in
$E_F(O_i)$ are those in $E_F(X)$.

We tabulate the different types of quads with cost 4 in Table~\ref{TaQuadscosts}.
Thus, a quad with cost 4 with a special edge must be of type 1, 2, 3, 4 or 5. For each of these types, the number of edges from $E_F(X)$ is at least the number of special edges in the quad. Thus,
\begin{equation}
\label{ineq:EX}
|E_F(X)| \geq |SE4|\text{.}
\end{equation}

\begin{table}[ht]
	
	\centering
	\begin{tabular}{|l||l|l|l||l|l|}
		\hline
		Type of & Number of edges & & & Cost & Number of\\
		quad & in $E_F(O_i)$ &in $E_F(X)$ & in $E_F(C)$ & & special edges \\
		\hline
		\  &\  &\ &\ &\ &\ \\[-11pt]
		1 &  3 & 1 &  & 4 & 1 \\[2pt]
		2 &  2 & 2 &  & 4 & 2\\[2pt]
		3 &  1, 1 & 2 &  & 4 & 2\\[2pt]
		4 &  1 & 2 & 1 & 4 & 1\\[2pt]
		5 &  1 & 3 &  & 4 & 1 \\[2pt]
		\hline
	\end{tabular}
	\caption{Overview of different types of quads with cost 4, containing at least 1 edge from $E_F(O_i)$. The entry ``1,1'' for quad type 3 means that there is one edge from $E_F(O_i)$ and one edge from $E_F(O_{i'})$ for $i\neq i'$}
	\label{TaQuadscosts}
\end{table}

Further, since $|E_F(X)|=2|X|$, it follows from (\ref{ineq:EX}) and (\ref{ineq:2d}) that
$
|SE5| \geq 2d-2|X|
$.
Thus, the number of quads of cost at least~$5$ is at least $\frac{2d-2|X|}{4}=\frac12(d-|X|)=k-\mu$.
\end{proof}

\section{Bad instances}
\label{sec:lowerb}
In this section, we give the instances that provide the lower bound results for
problem PQ and its special cases, as announced in Table~\ref{table:results}:
Section~\ref{sec:lowerbinstanceofPQ} presents the instance
of problem PQ$(\#1\in\{1,2\})$,
Section~\ref{sec:lowerbinstanceofPQ12} the instance
of problem PQ$(\#1=2)$, and
Section~\ref{sec:lowerbinstanceofspecialcase} the instance
of problem PQ$(\#1=2, \mbox{distinct, connected})$.
Furthermore, we illustrate in Section~\ref{sec:badgreedy} that a natural greedy algorithm (that can be seen as an alternative for algorithm $A$) has a worst-case ratio that is worse than the worst-case ratio of algorithm $A$.

\subsection{An instance of PQ\texorpdfstring{$(\#1\in \{1,2\})$}{(\#1\ in\ \{1,2\})}}
\label{sec:lowerbinstanceofPQ}%
Consider the instance $I$ consisting of the following 8 vectors, $v_1, \ldots, v_8$:
$$
\left(\begin{array}{c} 1 \\ 0 \\ 0 \\ 0\end{array}\right),
\left(\begin{array}{c} 0 \\ 1 \\ 0 \\ 0\end{array}\right),
\left(\begin{array}{c} 0 \\ 0 \\ 1 \\ 0\end{array}\right),
\left(\begin{array}{c} 0 \\ 0 \\ 0 \\ 1\end{array}\right),
\left(\begin{array}{c} 1 \\ 1 \\ 0 \\ 0\end{array}\right),
\left(\begin{array}{c} 1 \\ 1 \\ 0 \\ 0\end{array}\right),
\left(\begin{array}{c} 0 \\ 0 \\ 1 \\ 1\end{array}\right),
\left(\begin{array}{c} 0 \\ 0 \\ 1 \\ 1\end{array}\right).
$$
Since each vector contains either one or two 1's, this is an instance of PQ$(\#1\in \{1,2\})$. Clearly, the optimum solution consists of the quads $\{v_1, v_2, v_5, v_6\}$ and $\{v_3, v_4, v_7, v_8\}$, with a total cost of $\OPT(I)=4$; algorithm $A$ however, may find, as an optimum matching in the first phase, the pairs $\{v_1,v_3\}$, $\{v_2,v_4\}$, $\{v_5,v_6\}$ and $\{v_7,v_8\}$, leading to a final solution with cost $A(I)=6$. Thus, we arrive at the following observation.

\begin{obs}
\label{wcinstPQatmosttwoones}
For the instance depicted above, $cost (A) = \frac32 OPT$.
\end{obs}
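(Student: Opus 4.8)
The plan is to establish separately that $\OPT(I)=4$ and that one legitimate run of algorithm $A$ produces a solution of cost $6$; the ratio $\tfrac{6}{4}=\tfrac32$ then follows. For $\OPT(I)\le 4$ I would exhibit the partition into the quads $\{v_1,v_2,v_5,v_6\}$ and $\{v_3,v_4,v_7,v_8\}$, each of cost~$2$. For the lower bound $\OPT(I)\ge 4$: the quad containing $v_5$ has cost at least $2$ because $v_5$ has two ones, and likewise for the quad containing $v_7$; if these two quads are distinct we are done, and if $v_5$ and $v_7$ lie in a common quad $Q$ then $v_5\vee v_7=(1,1,1,1)$ forces $\cost(Q)=4$ while the other quad still has positive cost, so the total is at least~$5$. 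Hence $\OPT(I)=4$.

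Next I would determine what $A$ can do in the first phase. Summing Observation~\ref{obs:basic} over the edges of an arbitrary perfect matching $M$ of $G$ yields $\cost(M)=\sum_{i=1}^{8}|v_i|-\sum_{(u,v)\in M}\sav(u,v)=12-\sum_{(u,v)\in M}\sav(u,v)$, so a minimum-cost matching is one of maximum total savings (cf.\ Lemma~\ref{lem:sav}). Now in each of the four coordinates exactly three of the eight vectors carry a $1$ (coordinate $1$: $v_1,v_5,v_6$; coordinate $2$: $v_2,v_5,v_6$; and symmetrically for coordinates $3$ and $4$), and a matching contains at most one edge among any fixed three vectors, so each coordinate contributes at most $1$ to $\sum_{(u,v)\in M}\sav(u,v)$; hence this sum is at most $4$ and $\cost(M)\ge 8$. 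The matching $M^\ast=\{\{v_1,v_3\},\{v_2,v_4\},\{v_5,v_6\},\{v_7,v_8\}\}$ has total savings $0+0+2+2=4$, so $\cost(M^\ast)=8$ and $M^\ast$ is a minimum-cost perfect matching in $G$; since it also matches the two identical pairs $\{v_5,v_6\}$ and $\{v_7,v_8\}$, it is even a legal output of the implementation of $A$ that greedily matches identical vectors first (Lemma~\ref{lem:identical}). Thus $A$ may pick $M^\ast$.

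Finally I would run the second phase from $M^\ast$. The merged vectors are $v_1'=v_1\vee v_3=(1,0,1,0)$, $v_2'=v_2\vee v_4=(0,1,0,1)$, $v_3'=v_5\vee v_6=(1,1,0,0)$ and $v_4'=v_7\vee v_8=(0,0,1,1)$, and a direct check gives $\sav(v_1',v_2')=\sav(v_3',v_4')=0$ while $\sav(v_i',v_j')=1$ for the other four pairs. Hence the three perfect matchings of $G'$ have total savings $0$, $2$ and $2$, so the matching $M'$ that $A$ computes has $\weight(M')=2$, and by Corollary~\ref{cor:sav} $A(I)=\cost(M^\ast)-\weight(M')=8-2=6$ (concretely, $A$ outputs the quads $\{v_1,v_3,v_5,v_6\}$ and $\{v_2,v_4,v_7,v_8\}$, each of cost~$3$). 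Together with $\OPT(I)=4$ this gives $A(I)=\tfrac32\OPT(I)$ on this instance.

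The only step that is not a routine finite verification is the first-phase claim that $M^\ast$ is genuinely cost-minimal --- that is, the counting argument that every coordinate is covered by exactly three vectors and so can contribute at most one unit of savings --- since this is what licenses algorithm $A$ to start the second phase from $M^\ast$; everything else is a short case check.
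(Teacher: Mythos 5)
Your proof is correct and follows the same basic route as the paper: exhibit the optimal partition, exhibit the matching $M^\ast=\{\{v_1,v_3\},\{v_2,v_4\},\{v_5,v_6\},\{v_7,v_8\}\}$, and compute that $A$ outputs cost~$6$ against $\OPT(I)=4$. The paper simply asserts these facts (``clearly, the optimum solution consists of\dots''), whereas you supply the two missing verifications: the lower bound $\OPT(I)\ge 4$ via a short case analysis on whether $v_5$ and $v_7$ share a quad, and the optimality of $M^\ast$ via the coordinate-counting bound $\sum_{(u,v)\in M}\sav(u,v)\le 4$ (each coordinate is covered by exactly three vectors, so at most one matched edge can realize a saving there). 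Both checks are sound, and the remark that $M^\ast$ is also a legal output of the ``match identical pairs first'' implementation (Lemma~\ref{lem:identical}) is a nice touch; so is the concrete second-phase computation giving $\weight(M')=2$ and hence $A(I)=8-2=6$ via Corollary~\ref{cor:sav}.
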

Theorems~\ref{theo:PQ} and \ref{theo:PQatmosttwoones} now follow from Lemma~\ref{ubPQ} and Observation~\ref{wcinstPQatmosttwoones}.

As a remark, if we would allow all-zero vectors in the
input (which we do not allow in PQ$(\#1\in \{1,2\})$),
we can get an even smaller example with ratio $\frac32$. Let
$I$ consist of the following 8 vectors:
$$
\left(\begin{array}{c} 1 \\ 0 \end{array}\right),
\left(\begin{array}{c} 1 \\ 0 \end{array}\right),
\left(\begin{array}{c} 1 \\ 0 \end{array}\right),
\left(\begin{array}{c} 0 \\ 0 \end{array}\right),
\left(\begin{array}{c} 0 \\ 1 \end{array}\right),
\left(\begin{array}{c} 0 \\ 1 \end{array}\right),
\left(\begin{array}{c} 0 \\ 1 \end{array}\right),
\left(\begin{array}{c} 0 \\ 0 \end{array}\right).
$$
The optimal cost is $2$; algorithm $A$ however, may match the
two all-zero vectors in the first phase and get a solution
of cost~$3$.

\subsection{An instance of PQ\texorpdfstring{$(\#1=2)$}{(\#1=2)}}
\label{sec:lowerbinstanceofPQ12}%
Consider the instance $I$ consisting of the following 8 vectors, $v_1, \dots, v_8$.
$$
\left(\begin{array}{c} 1 \\ 1 \\0\\0\\0 \end{array}\right),
\left(\begin{array}{c} 1 \\ 1 \\0\\0\\0 \end{array}\right),
\left(\begin{array}{c} 1 \\ 0 \\1\\0\\0 \end{array}\right),
\left(\begin{array}{c} 0 \\ 1 \\1\\0\\0 \end{array}\right),
\left(\begin{array}{c} 0 \\ 0 \\1\\1\\0 \end{array}\right),
\left(\begin{array}{c} 0 \\ 0 \\1\\0\\1 \end{array}\right),
\left(\begin{array}{c} 0 \\ 0 \\0\\1\\1 \end{array}\right),
\left(\begin{array}{c} 0 \\ 0 \\0\\1\\1 \end{array}\right),
$$
Since each vector contains two 1's, this is an instance of PQ$(\#1=2\})$. This instance can be represented by the graph shown in Figure~\ref{fig:PQ2LB}.
\begin{figure}[ht]
\centering
\begin{tikzpicture}
\tikzstyle{mynode}=[circle,draw, fill = white]
\node (n1) at (0,4)  {1};
\node (n11) at (0,3.9){};
\node (n12) at (0,4.1){};
\node (n2) at (2,4)  {2};
\node (n21) at (2,3.9){};
\node (n22) at (2,4.1){};
\node (n3) at (1,2) [mynode]  {3};
\node (n4) at (0,0){4};
\node (n41) at (0,-0.1){};
\node (n42) at (0,0.1){};
\node (n5) at (2,0) {5};
\node (n51) at (2,-0.1){};
\node (n52) at (2,0.1){};
\draw (n11) -- (n21);
\draw (n12) -- (n22);
\draw (n1) -- (n3);
\draw (n2) -- (n3);
\draw (n3) -- (n4);
\draw (n3) -- (n5);
\draw (n41) -- (n51);
\draw (n42) -- (n52);
\node at (0,4) [mynode] {1};
\node  at (2,4) [mynode] {2};
\node  at (0,0) [mynode] {4};
\node at (2,0) [mynode] {5};
\end{tikzpicture}
\caption{An instance of PQ$(\#1=2)$}
\label{fig:PQ2LB}
\end{figure}
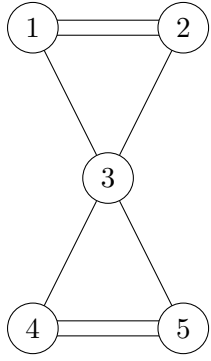

The optimal solution for this instance has cost $6$,
with the two quads
$$\{v_1,v_2,v_3,v_4\} = \{ (1,2), (1,2), (1,3), (2,3) \},$$
$$\{v_5,v_6,v_7,v_8\} = \{ (3,4), (3,5), (4,5), (4,5) \}.$$

The algorithm may in the first phase construct a matching
with cost $10$ consisting of the following pairs:
\begin{eqnarray*}
\{v_1,v_2\}=\{ (1,2),(1,2) \}, \mbox{ } \{v_3,v_5\}=\{ (1,3),(3,4) \}, \\
\{v_4,v_6\}=\{ (2,3),(3,5) \}, \mbox{ } \{v_7,v_8\}=\{ (4,5),(4,5) \}.
\end{eqnarray*}
Any two pairs share at most $1$ node. Hence, the total savings
that can be made in the second matching are at most $2$, so
by Corollary~\ref{cor:sav} we have $A(I)\ge 8$. Hence,
the worst-case approximation ratio of $A$ is at least $8/6=4/3$.

\begin{obs}
\label{wcinstPQexactlytwoones}
For the instance depicted in Figure~\ref{fig:PQ2LB}, $cost (A) = \frac43 OPT$.
\end{obs}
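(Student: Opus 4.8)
The plan is to prove the two inequalities $\OPT(I)\le 6$ and $A(I)\ge 8$, and then to combine them (together with the upper bound of Lemma~\ref{ubPQexactlytwoones}) to conclude that in fact $\OPT(I)=6$ and $A(I)=8$, so that $\cost(A)=\frac43\OPT(I)$ for the instance of Figure~\ref{fig:PQ2LB}.

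For the bound $\OPT(I)\le 6$, I would simply exhibit the solution consisting of the two quads $\{v_1,v_2,v_3,v_4\}$ and $\{v_5,v_6,v_7,v_8\}$: the edges of the first quad span the vertices $\{1,2,3\}$, so it has cost~$3$, and the edges of the second span $\{3,4,5\}$, so it also has cost~$3$; hence this solution has total cost~$6$. One can moreover see $\OPT(I)\ge 6$ directly: since no edge of the underlying graph occurs three or more times, the four edges of any quad span at least three distinct vertices, so each quad has cost at least~$3$, and therefore the two quads of any solution cost at least~$6$ in total.

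For the bound $A(I)\ge 8$, I would first argue that the perfect matching $M=\{\{v_1,v_2\},\{v_3,v_5\},\{v_4,v_6\},\{v_7,v_8\}\}$ of $G$, with $\cost(M)=2+3+3+2=10$, is of minimum cost and hence a legitimate output of the first phase. By Lemma~\ref{lem:sav} this amounts to showing that no perfect matching of the $8$ vectors has total savings exceeding~$6$. This follows from the savings count $\sav(v_i,v_j)\in\{0,1,2\}$ — it equals $2$ exactly when the two edges coincide, $1$ when they are distinct but adjacent, and $0$ when they are disjoint — together with the observation that the instance contains exactly two pairs of coinciding edges ($v_1=v_2$ and $v_7=v_8$): a matching using both of these gains $2+2$ from them and at most~$1$ from each of its two remaining pairs, and a matching using at most one coinciding pair gains strictly less than~$6$. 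Hence $\cost(M)=16-6=10$, and by Lemma~\ref{lem:identical} this matching is among those $A$ may select. In the second phase, $A$ then works with the merged vectors $v_1'=v_1\vee v_2$, $v_2'=v_3\vee v_5$, $v_3'=v_4\vee v_6$, $v_4'=v_7\vee v_8$, whose sets of ones are $\{1,2\}$, $\{1,3,4\}$, $\{2,3,5\}$, and $\{4,5\}$, respectively. A direct inspection of the six pairs shows that any two of these sets intersect in at most one element, so $\sav(v_i',v_j')\le 1$ whenever $i\ne j$; since a perfect matching of four vectors uses two edges, $\weight(M')\le 2$. By Corollary~\ref{cor:sav},
\[
A(I)=\cost(M)-\weight(M')\ge 10-2=8 .
\]

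Finally, combining $A(I)\ge 8$ with $\OPT(I)\le 6$ yields $A(I)/\OPT(I)\ge\frac43$; and since Lemma~\ref{ubPQexactlytwoones} gives $A(I)\le\frac43\OPT(I)\le 8$, equality holds throughout, so $\OPT(I)=6$, $A(I)=8$, and $\cost(A)=\frac43\OPT$. The only step that is not a one-line verification is the claim that the displayed first-phase matching $M$ is genuinely of minimum cost — this is precisely the savings count above — since only then is $A$ entitled to produce it; the feasibility and cost of the exhibited optimal solution, the pairwise intersections of the four merged vectors, and the closing arithmetic are all routine.
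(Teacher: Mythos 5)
Your proposal is correct and follows essentially the same route as the paper: exhibit an optimal solution of cost $6$, exhibit the first-phase matching $\{\{v_1,v_2\},\{v_3,v_5\},\{v_4,v_6\},\{v_7,v_8\}\}$ of cost $10$, bound the second-phase savings by $2$, and invoke Corollary~\ref{cor:sav} to get $A(I)\ge 8$. The one genuine value you add is the explicit verification that this first-phase matching is in fact minimum cost (the paper merely asserts that the algorithm ``may'' produce it), via the savings classification and the count of coinciding-edge pairs; you also make the final ``equality holds'' step explicit by combining $A(I)\ge 8$ with the upper bound of Lemma~\ref{ubPQexactlytwoones} and you supply a direct argument for $\OPT(I)\ge 6$, both of which the paper leaves implicit.
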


Theorem~\ref{theo:PQexactlytwoones} now follows from Lemma~\ref{ubPQexactlytwoones} and Observation~\ref{wcinstPQexactlytwoones}.

\subsection{An instance of PQ\texorpdfstring{$(\#1=2, \mbox{distinct, connected})$}{(\#1=2,distinct,connected)}}
\label{sec:lowerbinstanceofspecialcase}%
Consider the instance $I$ consisting of the following 8 vectors $v_1, \dots, v_8$.
$$
\left(\begin{array}{c} 1 \\ 1 \\0\\0\\0\\0\\0 \end{array}\right),
\left(\begin{array}{c} 1 \\ 0 \\1\\0\\0\\0\\0 \end{array}\right),
\left(\begin{array}{c} 0 \\ 1 \\0\\1\\0\\0\\0 \end{array}\right),
\left(\begin{array}{c} 0 \\ 0 \\1\\1\\0\\0\\0 \end{array}\right),
\left(\begin{array}{c} 0 \\ 0 \\0\\1\\1\\0\\0 \end{array}\right),
\left(\begin{array}{c} 0 \\ 0 \\0\\1\\0\\1\\0 \end{array}\right),
\left(\begin{array}{c} 0 \\ 0 \\0\\0\\1\\0\\1 \end{array}\right),
\left(\begin{array}{c} 0 \\ 0 \\0\\0\\0\\1\\1 \end{array}\right).
$$
Since each vector contains two 1's, the vectors are pairwise distinct, and the induced graph is connected, this is an instance of PQ$(\#1=2\}, \mbox{distinct, connected})$. The instance can be represented by the graph shown in Figure~\ref{fig:PQGconLB}.

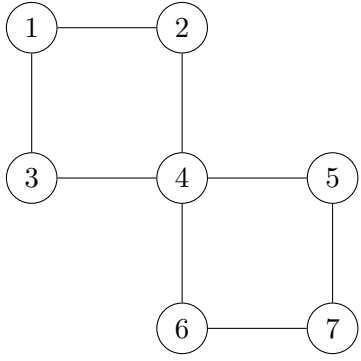
\begin{figure}[ht]
\centering
\begin{tikzpicture}
\tikzstyle{mynode}=[circle,draw]
\node (n1) at (0,4) [mynode] {1};
\node (n2) at (2,4) [mynode] {2};
\node (n3) at (0,2) [mynode] {3};
\node (n4) at (2,2) [mynode] {4};
\node (n5) at (4,2) [mynode] {5};
\node (n6) at (2,0) [mynode] {6};
\node (n7) at (4,0) [mynode] {7};
\draw (n1) -- (n2);
\draw (n1) -- (n3);
\draw (n2) -- (n4);
\draw (n3) -- (n4);
\draw (n4) -- (n5);
\draw (n4) -- (n6);
\draw (n5) -- (n7);
\draw (n6) -- (n7);
\end{tikzpicture}
\caption{An instance of PQ$(\#1=2, \mbox{distinct, connected})$}
\label{fig:PQGconLB}
\end{figure}

The optimal solution for this instance has cost $8$,
with the two quads
$$\{v_1,v_2,v_3,v_4\} = \{ (1,2), (1,3), (2,4), (3,4) \},$$
$$\{v_5,v_6,v_7,v_8\} = \{ (4,5), (4,6), (5,7), (6,7) \}.$$

Algorithm $A$ may, in the first phase, construct a matching
with cost $12$ consisting of the following pairs:
\begin{eqnarray*}
\{v_1,v_2\}=\{ (1,2),(1,3) \},
\{v_3,v_5\}=\{ (2,4),(4,5) \}, \\
\{v_4,v_6\}=\{ (3,4),(4,6) \},
\{v_7,v_8\}=\{ (5,7),(6,7) \}.
\end{eqnarray*}
Any two pairs share at most $1$ node. Hence, the total savings
that can be made in the second matching are at most $2$, so
by Corollary~\ref{cor:sav} we have $A(I)\ge 10$. Hence,
the worst-case ratio of $A$ is at least $10/8=5/4$.

\begin{obs}
\label{wcinstPQexactlytwoonesanddistandconn}
For the instance depicted in Figure~\ref{fig:PQGconLB}, $cost (A) = \frac54 OPT$.
\end{obs}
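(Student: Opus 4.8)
The plan is to sandwich $A(I)$ between a lower bound of $10$, coming from a concrete bad run of algorithm $A$ on the displayed $8$-vector instance ($k=2$), and the upper bound $\frac54\,\OPT(I)$ already granted by Lemma~\ref{ubPQexactlytwoonesanddistandconn}; since $\OPT(I)=8$, both bounds equal $10$, and the Observation follows at once.

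First I would pin down $\OPT(I)=8$. Viewed as a graph, the quads $\{v_1,v_2,v_3,v_4\}$ and $\{v_5,v_6,v_7,v_8\}$ are the $C_4$ on vertices $\{1,2,3,4\}$ and the $C_4$ on vertices $\{4,5,6,7\}$; each spans exactly four vertices, so this solution has cost $4+4=8$. Conversely, in a simple graph any four edges touch at least four vertices, so every quad has cost at least $4$, giving $\OPT(I)\ge 4k=8$; hence $\OPT(I)=8$.

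Next I would describe a run of algorithm $A$ with value $10$. In the first phase the algorithm may return the matching $M=\{\{v_1,v_2\},\{v_3,v_5\},\{v_4,v_6\},\{v_7,v_8\}\}$, whose four pairs each consist of two adjacent edges of $F$ and therefore have cost $3$, so $\cost(M)=12$; by the fact established in the proof of Lemma~\ref{ubPQexactlytwoonesanddistandconn} that $\cost(M)=6k=12$ for instances of this type, $M$ is a genuine minimum-cost first-phase matching and hence a permissible output. Each of the four pairs of $M$ spans three vertices, and a quick inspection shows that any two of them share at most one vertex, so $\sav(p_i,p_j)\le 1$ for $i\ne j$ and hence $\weight(M')\le 2$. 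By Corollary~\ref{cor:sav} this run outputs $A(I)=\cost(M)-\weight(M')\ge 12-2=10$.

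Finally, Lemma~\ref{ubPQexactlytwoonesanddistandconn} gives $A(I)\le\frac54\,\OPT(I)=10$ for every run, so the run above in fact attains $A(I)=10$, whence $\cost(A)=\frac54\,\OPT$ on this instance. I do not expect any real obstacle: the only points requiring care are checking that the displayed first-phase matching is indeed of minimum cost (so that the algorithm is allowed to output it), which we get for free from the $\cost(M)=6k$ fact, and the routine verification that no two of the four resulting pairs share more than one vertex, which caps the second-phase savings at $2$.
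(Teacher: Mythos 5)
Your proposal is correct and follows essentially the same route as the paper: exhibit the adversarial first-phase matching of cost $12$ (minimum by the $\cost(M)=6k$ observation), bound the second-phase savings by $2$ since any two pairs share at most one vertex, conclude $A(I)\ge 10$ via Corollary~\ref{cor:sav}, and combine with $\OPT(I)=8$ and the $\frac54$ upper bound to get equality. The only difference is that you spell out the minimality of $M$ and the value of $\OPT$ slightly more explicitly than the paper does.
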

Theorem~\ref{theo:PQexactlytwoonesanddisandcon} now follows from Lemma~\ref{ubPQexactlytwoonesanddistandconn} and Observation~\ref{wcinstPQexactlytwoonesanddistandconn}.

\subsection{Bad instances for a natural greedy algorithm}
\label{sec:badgreedy}%
In this section, we show that the worst-case ratio of a natural greedy algorithm is worse than the worst-case ratio of algorithm $A$.

An informal description of the greedy algorithm for problem PQ (and its special cases) is as follows: repeatedly select, among all possible quads, a quad with lowest cost, and remove the vectors in the selected quad from the instance; stop when no more vectors remain.

Below we present instances of problem PQ, as well as of its special case PQ$(\#1=2, \mbox{distinct},$ con\-nect\-ed$)$, showing that the worst-case performance of this greedy algorithm is worse than the worst-case performance of algorithm $A$.

\paragraph*{An instance of PQ}

Consider the following instance $I$ of PQ consisting of the following 12 vectors, $v_1, \ldots, v_{12}$:
$$
\left(\begin{array}{c} 0 \\ 0 \\ 0 \end{array}\right),
\left(\begin{array}{c} 0 \\ 0 \\ 0 \end{array}\right),
\left(\begin{array}{c} 0 \\ 0 \\ 0 \end{array}\right),
\left(\begin{array}{c} 1 \\ 0 \\ 0 \end{array}\right),
\left(\begin{array}{c} 1 \\ 0 \\ 0 \end{array}\right),
\left(\begin{array}{c} 1 \\ 0 \\ 0 \end{array}\right),
$$
$$
\left(\begin{array}{c} 0 \\ 1 \\ 0 \end{array}\right),
\left(\begin{array}{c} 0 \\ 1 \\ 0 \end{array}\right),
\left(\begin{array}{c} 0 \\ 1 \\ 0 \end{array}\right),
\left(\begin{array}{c} 0 \\ 0 \\ 1 \end{array}\right),
\left(\begin{array}{c} 0 \\ 0 \\ 1 \end{array}\right),
\left(\begin{array}{c} 0 \\ 0 \\ 1 \end{array}\right).
$$

Clearly, an optimum solution consists of the quads $\{v_1, v_4, v_5, v_6\}$, $\{v_2, v_7, v_8, v_9\}$ and $\{v_3$, $v_{10}$, $v_{11}$, $v_{12}\}$, with a total cost of $\OPT(I)=3$.
The greedy algorithm however, may first select quad $\{v_1,v_2,v_3,v_4 \}$ with cost $1$. Then the cheapest possible quad is one of cost 2 and the greedy algorithm may select $\{v_7,v_8, v_{11},v_{12} \}$. The remaining vectors form a quad of cost 3. Thus, the greedy algorithm finds a solution of cost $6 = 2 \cdot OPT(I)$.

\paragraph*{An instance of PQ$(\#1=2, \mbox{distinct, connected})$}

Consider the following instance $I$ of PQ$(\#1=2, \mbox{distinct, connected})$ consisting of 8 vectors 
represented in a graph shown in Figure~\ref{fig:PQGconLB2} (recall that a vector in PQ$(\#1=2, \mbox{distinct,} \\ \mbox{connected})$ corresponds to an edge in a simple graph).

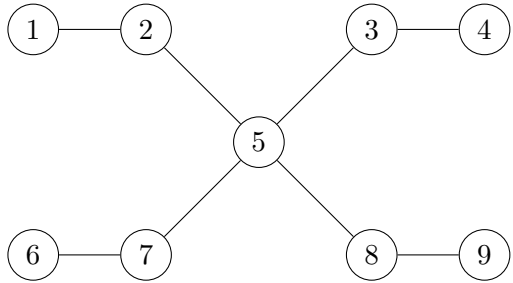
\begin{figure}[ht]
\centering
\begin{tikzpicture}
\tikzstyle{mynode}=[circle,draw]
\node (n1) at (0,3) [mynode] {1};
\node (n2) at (1.5,3) [mynode] {2};
\node (n3) at (4.5,3) [mynode] {3};
\node (n4) at (6,3) [mynode] {4};
\node (n5) at (3,1.5) [mynode] {5};
\node (n6) at (0,0) [mynode] {6};
\node (n7) at (1.5,0) [mynode] {7};
\node (n8) at (4.5,0) [mynode] {8};
\node (n9) at (6,0) [mynode] {9};
\draw (n1) -- (n2);
\draw (n2) -- (n5);
\draw (n3) -- (n5);
\draw (n3) -- (n4);
\draw (n6) -- (n7);
\draw (n7) -- (n5);
\draw (n8) -- (n5);
\draw (n8) -- (n9);
\end{tikzpicture}
\caption{An instance of PQ$(\#1=2, \mbox{distinct, connected})$}
\label{fig:PQGconLB2}
\end{figure}

An optimal solution for this instance has cost $10$, with the two quads $\{ (1,2), (2,5), (3,5), \\ (3,4) \}$ and $\{ (6,7), (5,7), (5,8), (8,9) \}$, each having cost 5.

Since the instance features no quad with cost 4, the greedy algorithm may first select the following quad
with cost $5$: $\{ (2,5), (3,5), (5,7), (5,8) \}$. Next, what remains is a quad of cost $8$:
$\{ (1,2), (3,4), (6,7), (8,9) \}$.

Hence, the worst-case ratio of the greedy algorithm is at least $13/10$, which is larger than the $5/4$ approximation guarantee for algorithm $A$.

\section{Conclusion}
\label{sec:conclusion}
We have studied the worst-case behavior of a natural algorithm for partitioning a given set of vectors into quadruples. Informally, by running a matching algorithm once, we find pairs, and by running it one more time, we match the pairs into quadruples. Under the specific cost-structure studied here, we have shown the precise worst-case behavior of this method for all cases except PQ$(\#1=2,\mbox{distinct})$, where a small gap remains.

It is a natural question to study an extension where we form clusters consisting of $2^s$ vectors for some given integer $s \geq 2$. Indeed, if we form groups of size $2^s$ by running $s$ rounds of matching, the worst-case ratio is easily seen to be bounded by $2^{s-1}$. To explain this, let $M$ be the minimum-cost matching of the first round. Then $A(I)\le \cost(M)$ and $\OPT(I)\ge \cost(M)/2^{s-1}$ as the cost of the optimum (viewed as being constructed in $s$ rounds) is at least $\cost(M)$ after the first round and could then halve in each further round. Moreover, since we have shown that the cost of the algorithm after two rounds is at most $\frac32$ times the optimal cost after two rounds, we get a ratio of $\frac32 \times 2^{s-2} = 3 \times 2^{s-3}$. We leave the question of finding the worst-case ratio for arbitrary $s$ as an open problem.

\bibliography{grouping}
\end{document}